\newtheorem{claim}{\bf Claim}
\newtheorem{assumption}{\bf Assumption}
\newtheorem{rmk}{\bf Remark}
\newtheorem{lemma}{\bf Lemma}
\newtheorem{setting}{\bf Setting}
\definecolor{orange}{RGB}{255,107,0}
\def\bTheta{\boldsymbol \Theta}
\def\bp{\mathbf p}
\def\bh{\mathbf h}
\def\bP{\mathbf P}
\begin{document}
	\title{To Supervise or Not: How to Effectively Learn  Wireless Interference Management Models?}

\author{Bingqing Song, Haoran Sun, Wenqiang Pu, Sijia Liu, and Mingyi Hong \thanks{B. Song, H. Sun and M. Hong are with Department of Electrical and Computer Engineering, University of Minnesota, Minneapolis, MN 55455, USA. W. Pu is with Shenzhen Research Institute of Big Data, Shenzhen, China.  S. Liu is with the CSE Department, Michigan State University, East Lancing, MI. \newline {A short version of this paper \cite{BS2021} has been submitted to SPAWC 2021.}}
}

	\maketitle
	
	%\vspace{-5cm}
	\begin{abstract}
Machine learning  has become  successful in solving wireless  interference management problems. Different kinds of deep neural networks (DNNs) have been trained to accomplish key tasks such as power control, beamforming and admission control. There are two popular training paradigms for such DNNs-based interference management models: supervised learning (i.e., fitting labels generated by an optimization algorithm) and unsupervised learning (i.e., directly optimizing some system performance measure). Although both of these paradigms have been extensively applied in practice, due to the lack of any theoretical understanding about these methods, it is not clear how to systematically understand and compare their performance. %of these methods. %how to systematically analyze their performance. %In particular, an interesting open question is%their theoretical properties remain elusive. In particular, an interesting open question is: 

%Although both of these methods perform well But their theoretical properties remain elusive. An interesting open question is: How these two different approaches , it is not clear which approach is more effective in practice. 

In this work, we conduct theoretical studies to provide some in-depth understanding about these two training paradigms. %these two training approaches.
First, we show a somewhat surprising result, that for some special power control problem, the unsupervised learning can perform much worse than its  supervised counterpart, because it is more likely to stuck at some low-quality local solutions. We then provide a series of theoretical results to further understand the properties of the two approaches. Generally speaking, we show that when high-quality labels are available, then the supervised learning is less likely to be stuck at a solution than its unsupervised counterpart. Additionally, we develop a semi-supervised learning approach which properly integrates these two training paradigms, and can effectively utilize limited number of labels to find high-quality solutions. %can effectively utilize limited number of labels 
To our knowledge, these are the first set of theoretical results trying to understand different training approaches in learning-based wireless communication system design.  
	\end{abstract}
	
	\begin{IEEEkeywords}
	Deep learning, wireless communication, unsupervised/supervised learning, overparameterized neural networks, global convergence
	\end{IEEEkeywords}
	
% 	%\vspace{-0.1cm}
	\section{Introduction}
	\label{sec:intro}
	%\vspace{-0.1cm}
	\noindent{\bf Motivation.} In recent years, machine learning techniques have become increasingly successful in solving a variety of wireless  interference management problems. Different kinds of deep neural network (DNN), such as fully connected network  (FCN) \cite{sun2018learning}, recurrent neural network (RNN) \cite{8242643}, graph neural network (GNN) \cite{shen2019graph,eisen2020optimal}
	%\sundelete{eisen2020optimal} 
	have been designed to accomplish key tasks such as power control \cite{liang2019towards}, beamforming \cite{sun2018learning}, %\sundelete{\cite{8525802}}admission control \cite{5691261}
	MIMO detection \cite{8227772}, among others. These DNN based models are capable of achieving competitive and sometimes even superior performance compared to the state-of-the-art optimization based algorithms \cite{liang2019towards}. %Meanwhile, these models have many key advantages over the classical methods, such as requiring significant less real-time computational cost, requiring only a small amount of channel state information (CSI), etc
	%{\color{red}[!]}.
	
	%The success of the DNN based interference management schemes can be intuitively attributed to the following factors. First of all, the expressiveness of the neural networks can effectively summarize the relationship between the input (e.g., CSI) and the desired output (e.g., power allocation, beamforming vectors) \cite{sun2018learning}. Second, recent advances in effective training algorithms such as different variants of SGD (i.e., Adam, RMSprop), combined with techniques such as batch normalization can significantly improve the training speed. Third, exploiting task-specific properties in designing network architectures and training procedures helps significantly reduce the network complexity and input feature dimension \cite{liang2019towards}. 
	
	However, despite its success, there is still a lack of basic understanding about {\it why} DNN based approaches work so well for this class of wireless communication problems -- after all, the majority of interference management problems (e.g., beamforming) are arguably more complex than a typical machine learning problem such as image classification. It is widely believed that, exploiting task-specific properties in designing network architectures and training objectives can help reduce the network complexity and input feature dimension \cite{liang2019towards}, boost the training efficiency  \cite{liang2019towards}, and improve the expressiveness \cite{sun2018learning}. 
	
	%The overarching goal of this research is to understand how problem-specific properties can be effectively utilized in the DNN design. More concretely, we attempt to provide an in-depth understanding about how to  utilize problem structures in understanding different training procedures. 
	The overarching goal of this research is to understand how problem-specific properties affect the DNN training outcomes. More concretely, we attempt to provide an in-depth analysis about two state-of-the-art training paradigms -- the supervised learning and the unsupervised learning approaches. Our theoretical analysis demonstrates that, key metrics in a wireless system, such as interference levels, indeed play some important roles  in characterizing the solution quality of different training approaches. %and demonstrate that it is possible to combine existing  
	
	For the simplicity of presentation, throughout the paper, we will utilize the classical weighted sum rate (WSR) maximization problem in single-input single output (SISO) interference channel as a working example, but we believe that our approaches and the phenomenon we observed can be extended to many other related problems as well.

	\noindent{\bf Problem Statement and Contributions.} Consider training DNNs for power control, or more generally for beamforming. There are two state-of-the-art approaches for training: \\
	\noindent 1) {\it supervised learning (SL)}, in which labels of high-quality power allocations are generated by an optimization algorithm, then the training step minimizes the mean square error (MSE) between the DNN outputs and the labels \cite{sun2018learning}; \\
	\noindent 2)  {\it unsupervised learning (UL)}, which directly optimizes some system utilities such as the WSR \cite{liang2019towards}, over the training set. 
	
	It is worth highlighting that the above mentioned UL approach has been designed specifically for the interference management problem, because the task of wireless system utility optimization problem (which includes the WSR maximization as a special case) offers a natural training objective to work with. Since it does not require any existing algorithms to help generate high-quality labels, it is much preferred when training samples are difficult to generate. {On the other hand, the associzated training objective appears to be difficult to optimize, since the WSR is a highly non-linear function with respect to (w.r.t.) the transmit power or the beamformer, which in turn is a highly non-linear function of the DNN parameters.}%while the objective function of the supervised learning is relatively easy to optimize since it is a quadratic function.} 

	Which training method shall we use in practice? Can we rigorously characterize the behavior of these methods? Is it possible to properly integrate these two approaches to yield a more efficient training procedure? Towards addressing these questions, this work makes the following key contributions:\\
	\noindent \ding{182} We focus on the SISO power control problem in interference channel (IC), and identify a simple 2-user setting, in which UL approach  has {\it non-zero probability} of getting stuck at low-quality solutions (i.e., the local minima), while SL approach always finds the global optimal solution;
	
	\noindent \ding{183}  We provide analysis to understand properties of UL and SL for DNN-based SISO-IC problem, by partly leveraging recent advances in training overparameterized deep networks. Roughly speaking, our results indicate that when high-quality labels are provided, SL should outperform UL in terms of solution quality. %Further, the SL approach converges faster when the labels have better solution quality;
	
	%\noindent  \ding{184} %We conduct experiments to show that under certain ``strong interference" channel, the testing performance of unsupervised training is indeed much worse than the supervised training; We  conjecture that such a performance gap is due to the fact that the objective of the former has more local solutions compared with that of the latter.
	
	\noindent \ding{184} In an effort to leverage the advantage of both approaches, we develop a {\it semi-supervised} training objective, which regularizes the unsupervised objective by using a few labeled data points. Surprisingly, by only using a small fraction ($< 2\%$) of samples of SL approach, the semi-supervised approach is able to avoid bad local solutions and attain similar performance as supervised learning.

	To the best of our knowledge, this work provides the first in-depth understanding about the two popular  approaches for training DNNs for wireless communication problem.

	\noindent{\bf Notations.}  We follow the conventional notation in signal processing, $x$, $\mathbf{x}$, and $\mathbf{X}$ denote scalar, vector, and matrix, respectively. For a given matrix $\mathbf{X}$, we use $\|\mathbf{X}\|_{F}$ to denote the Frobenius norm;
	$\sigma_{\min }(\mathbf{X})$ and $\sigma_{\max }(\mathbf{X})$ denote its smallest and largest singular value, respectively. We use $\|\cdot\|_2$ to denote the $l_2$ norm of a vector, $\otimes$ to denote the Kronecker product. The notation $[N]$ denotes the set $\{1,\cdots, N\}$ and $\mathbf{I}_n\in \mathbb{R}^{n\times n}$ denotes the identity matrix.

	\section{Preliminaries}
	\label{sec:preliminaries}
	%\vspace{-0.1cm}
	%\SL{[I would consider to call the section title  `problem statement'. This section is then divided into two sub-sections: a) preliminaries: introduce (1) and connection to DNN, b) problem statement: considers two training regimes, supervised vs. semi-supervised. In b), then state the question that you aim to answer in this work.]}
	Consider a wireless network consisting of $K$ pairs of transmitters and receivers. Suppose each transmitter (and receiver) equips with a single antenna, denote $h_{kj}\in\mathbb{C}$ as the channel between the $k$th transmitter and the $j$th receiver, $p_{k}$ as the power allocated to the $k$th transmitter, $P_{\max}$ as the budget of transmitted power, and $\sigma^{2}$ as the variance of zero-mean Gaussian noise in the background. Further, we use $w_{k}$ to represent the prior importance of the $k$th receiver, then the classical WSR maximization problem can be formulated as: 
	%\begin{equation}
	\begin{align}\label{eq:wsr}
		\max _{p_{1}, \ldots, p_{K}} & \sum_{k=1}^{K} w_{k} \log \left(1+\frac{\left|h_{kk}\right|^{2} p_{k}}{\sum_{j \neq k}\left|h_{k j}\right|^{2} p_{j}+\sigma_{k}^{2}}\right) : = R(\mathbf{p}; |\mathbf{h}|)\nonumber\\
		\text { s.t. } \ & 0 \leq p_{k} \leq P_{\max }, \forall k=1,2, \ldots, K,
	\end{align}
	%\end{equation}
	where $\mathbf{h}:=\{h_{kj}\}$ collects all the channels; $|\cdot|$ is the componentwise absolute value operation;  %including direct channels $\{h_{ii}\}$ and interfering channels $\{h_{ij} \}_{i\neq j}$,
	and $\mathbf{p}:=(p_1,p_2,\ldots,p_K)$ denotes the transmitted power of $K$ transmitters. Problem~\eqref{eq:wsr} is a fundamental problem in wireless communication, and is known to be NP-hard \cite{luo2008dynamic} in general. For problem \eqref{eq:wsr} and its generalizations such as the beamforming problems in MIMO channels, many iterative optimization based algorithms have been proposed, such as waterfilling algorithm~\cite{4760244}, interference pricing~\cite{5230846} , WMMSE~\cite{shi2011iteratively}, and SCALE~\cite{5165179}.
	
	Recently, there has been a surge of works that apply DNN based approach to find good solutions for problem~\eqref{eq:wsr} as well as its extensions~\cite{sun2018learning,liang2019towards}. %\sundelete{lee2018deep,8227772} %For example, authors in \cite{sun2018learning} built a FCN to perform supervised learning using the results of WMMSE as the labels. In \cite{liang2019towards,lee2018deep}, DNN is applied in unsupervised training to maximize the sum-rate. For MIMO detection, authors in~\cite{8227772} use supervised learning to obtain similar accuracy with significantly lower complexity and better robustness. 
	{{Although these works differ from their problem settings and/or DNN  architectures, they 
			all use either the SL, UL, or some combination of the two to train the respective networks.  
			%However, there is a lack of understanding about properties of different kinds of training approaches. %in this line of works, there is still a fundamental lack of understanding about why DNN based approaches works so well. 
			Below let us take problem \eqref{eq:wsr} as an example and  briefly compare the SL and UL approaches.}}%{\red[discuss supervised and unsupervised approaches, both power control and MIMO/MISO]}

	\noindent
	$\bullet$ \textbf{Training Samples:} {{%Training DNN indeed needs a branch of samples, but the required sample type has slight difference for supervised and unsupervised learning. Specifically, 
			Both approaches require a collection of the channel information over $N$ different {\it snapshots} as training samples, denoted as $\mathbf{h}^{(n)}\in\mathbb{R}^{K^2},n\in\mathcal{N}:=[N]$.  However, SL requires an additional $N$ labels $\bar{\mathbf{P} } := \{\bar{\mathbf{p}}^{(n)}\}_{n=1}^N$ corresponding to the $N$ snapshots, which are usually obtained by solving $N$ independent instances of \eqref{eq:wsr} using a proper optimization algorithm, e.g., WMMSE~\cite{shi2011iteratively}. %Thus, in the data collection phase, supervised learning requires to solve $N$ independent optimization problems to obtain the corresponding labels, which can be time consuming. 
			Notice that the quality of such labels may depend on the property of the selected optimization algorithm.
			
% 			{For each label $\bar{\mathbf{p}}^{(n)}$, the corresponding sum rate is denoted as $R^{(n)}$.}%Typically we need to find some balance between solution accuracy and time consumption.
	}}
	
	\noindent
	$\bullet$ \textbf{DNN Structure:} { Both approaches can utilize the same DNN structure. Consider an $L$-layer fully connected neural network which takes channel information $|\mathbf{h}^{(n)}|$ as input and outputs the corresponding power allocation $\mathbf{p}^{(n)}\in\mathbb{R}^{K}$. Denote the weights of each layer as  $\left\{\mathbf{W}_{l}\right\}_{l=1}^{L}$, then the output of the $l$-th layer across all samples is denoted as $\mathbf{F}_{l} \in \mathbb{R}^{N \times n_l}$ ($n_\ell$ is the output dimension), given as: 
	\begin{equation}\label{eq:nn}
		\mathbf{F}_{l}:=\left\{\begin{array}{ll}
			\mathbf{H}, & l=0\ \textrm{(input layer)} \\
			a\left(\mathbf{F}_{l-1} \mathbf{W}_{l}\right) & l \in[1:L-1] \ \textrm{(hidden layer)}\\
			\mathbf{P}, & l=L\ \textrm{(output layer)},
		\end{array}\right.
	\end{equation}%
	where $\mathbf{H} \in \mathbb{R}^{N \times K^2}$ stacks all training samples $\{|\mathbf{h}^{(n)}|\}_{n=1}^{N}$, $a(\cdot)$ is a proper component-wise activation function, and 
	\begin{equation}\label{eq:FLa}
	    \mathbf{P}=a(\mathbf{F}_{L-1} \mathbf{W}_{L})\in\mathbb{R}^{N\times K}
	\end{equation}
	denotes the power allocation associated with the corresponding input $\mathbf{H}$. We further define the vectorized power allocation and label $\bar{\bP}$ as
	\begin{equation}
	    \mathbf{f}_L:=\operatorname{vec}(\bP)\in \mathbb{R}^{NK},\; \mathbf{y}:= \operatorname{vec}(\bar{\bP}) \in \mathbb{R}^{NK}.
	\end{equation}
	Compactly, denote $\bTheta:=\{\mathbf{W}_{l}\}_{l=1}^{L}$ and $\mathbf{p}^{(n)}$ as the $n$th row of $\mathbf{P}$, the output of DNN associated with sample $|\mathbf{h}^{(n)}|$ can be expressed as 
    \begin{align}\label{eq:nn_onesample}
        \mathbf{p}^{(n)}& =\mathbf{p}\left(\mathbf{\Theta} ;\left|\mathbf{h}^{(n)}\right|\right)\in \mathbb{R}^K
        %, \quad   p_{k}^{(n)}:=p_{k}\left(\mathbf{\Theta} ;\left|\mathbf{h}^{(n)}\right|\right),\forall k.
    \end{align}%
    where $\mathbf{p}\left(\mathbf{\Theta} ;\left|\mathbf{h}^{(n)}\right|\right)$ is the nonlinear mapping defined by~\eqref{eq:nn} with parameter $\mathbf{\Theta}$ and input $|\mathbf{h}^{(n)}|$.
	}

	\noindent
	$\bullet$ \textbf{Training Problems:}  { Both SL and UL utilize the DNN structure defined by~\eqref{eq:nn_onesample}, but the loss functions are different.} In particular, the SL approach
	minimizes the MSE between the  neural network output and the labels:
% 	First, we consider a popular SL approach, which  %, and the resulting training problem is given by:
	%\vspace{-0.2cm}
	\begin{equation}\label{eq:supervised}
	\begin{aligned}
		\min_{\mathbf{\Theta}}&  \quad \sum_{n=1}^{N}\|\mathbf{p}(\mathbf{\Theta}; |\mathbf{h}^{(n)}|)  - \bar{\mathbf{p}}^{(n)}\|^2:=f_{\rm sl}(\mathbf{\Theta})\\
		{\rm s.t.} & \quad {\mathbf 0}\le \mathbf{p}(\mathbf{\Theta}; |\mathbf{h}^{(n)}|) \le \mathbf{P}_{\max}, \; \forall~n\in\mathcal{N}. 
	\end{aligned}
	\end{equation}
	For the UL approach, it directly optimizes the sum of WSRs across all the samples:
	\begin{equation}\label{eq:un-supervised}
	\begin{aligned}
		\min_{\mathbf \Theta} & \quad  \sum_{n=1}^{N}  - R\left(\mathbf p(\mathbf \Theta; |\mathbf h^{(n)}|), |\mathbf h^{(n)}|\right):=f_{\rm ul}(\mathbf \Theta)\\
		{\rm s.t.} & \quad {\mathbf 0}\le \mathbf p(\mathbf \Theta; |\mathbf h^{(n)}|) \le \mathbf{P}_{\max}, \  \forall~n\in\mathcal{N}. 
	\end{aligned}
	\end{equation}
	
	%{\red[note, I also changed the subscripts for the objective functions.]}
	
{ Now we have completed our description of the SL and UL paradigms, let us use a simple example to illustrate their potential performance differences. Fig. \ref{fig1} shows that for a 2-user scenario, the two learning approaches with the same DNN structure (different learning approaches would lead to different $\mathbf{\Theta}$) can have significantly different performance in strong interference case (the precise settings are presented in Sec. \ref{sec:simul}). Then a natural question arises: what are there fundamental differences between the SL training procedure and the UL training procedure? }
\begin{figure}[h]
\centering
\includegraphics[width=0.8\linewidth]{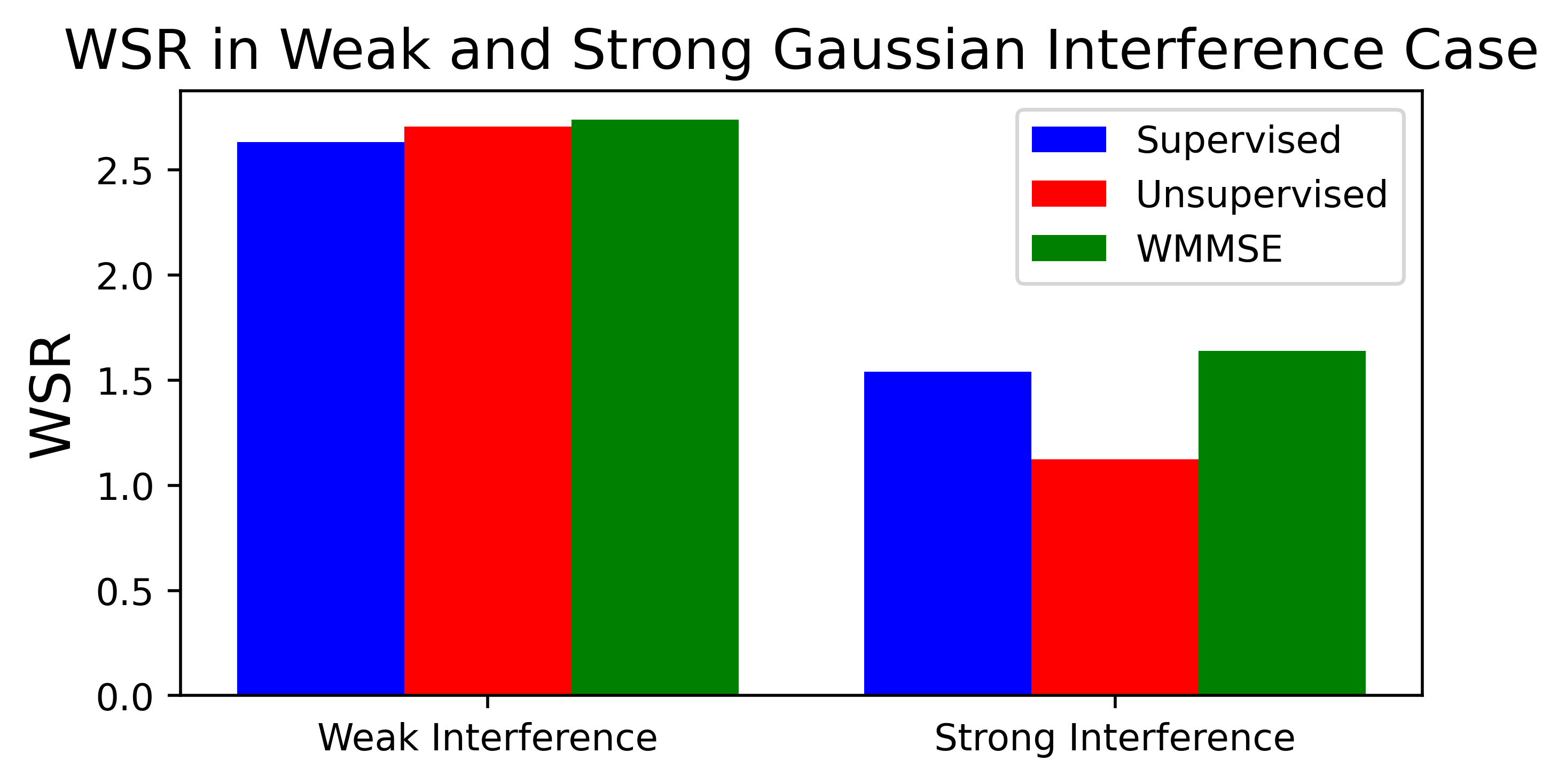}
%\vspace{-0.3cm}
\caption{\small Comparison between the SL, UL, and WMMSE algorithms, where weak and strong interference cases refer the power of the interference channel being equal and $10$ times of the direct channel, respectively.
% Comparison between the SL, UL and WMMSE, when SL, UL are trained using data where the interference channel power is equal to direct channel power (weak interference), or $10$ times of the direct channel power (strong interference) when there are 10 users. In strong interference case, SL can achieve $92\%$ of the WMMSE sum-rate, while UL achieves relatively lower sum-rate.
}
\label{fig1}
%\vspace{-0.5cm}
\end{figure}

	\section{The Solution Qualities of SL and UL Problems}
	\label{sec:study}
	%\vspace{-0.1cm}

%This section provides a number of different ways to address this question. %{\color{red}Throughout this section, we will assume that }
%	Please note that due to space limitation, all proofs in this section will be relegated to the
%	online version \cite{song2021}.
%{\subsection{Compare SL and UL Approaches}}	 In the rest of this section,
	%This section 
	
%	To understand such a phenomenon, let us examine the two optimization problems
%	\eqref{eq:supervised} and \eqref{eq:un-supervised}.

This section investigates the SL and UL paradigms by analyzing properties of their corresponding solution sets. 

{ To begin with, let us compare problems~\eqref{eq:supervised} and~\eqref{eq:un-supervised} in an intuitive way. Problem~\eqref{eq:supervised} can be easily understood as trying to learn the mapping $\bp(\mathbf \Theta; |\mathbf h^{(n)}|)$ under the supervision of labels $\{ \bar{\mathbf{p}}^{(n)}\}$. However, for problem~\eqref{eq:un-supervised}, it directly leverages the WSR maximization formulation~\eqref{eq:wsr} and does not require any labels. Intuitively, this problem is harder to optimize compared with problem~\eqref{eq:supervised} because $R\left(\mathbf p(\mathbf \Theta; |\mathbf h^{(n)}|), |\mathbf h^{(n)}|\right)$ is a {\it composition} of two non-trivial nonlinear functions, i.e., $R(\cdot; |\mathbf h|)$ and  $\mathbf p(\cdot; |\mathbf h|)$. Purely optimizing $R(\cdot; |\mathbf h|)$ (i.e., solving problem~\eqref{eq:wsr}) is known to be NP-hard in general~\cite{luo2008dynamic} and now a nonlinear function $\mathbf p(\cdot; |\mathbf h|)$ is composited within it. Further, problem~\eqref{eq:un-supervised} couples $N$ difficult problems since it tries to find {\it a single} parameter $\mathbf \Theta$ that maximizes the sum of WSR across $N$ snapshots (or say problem instances). Even if each instance of problem is easy to optimize, it may still be difficult to optimize all the coupled problem simultaneously.

% {\it ii)} It finds a single parameter $\mathbf \Theta$ that maximizes the sum of the WSR across all snapshots, so it couples $N$ difficult problems. Even if each individual problem is easy to optimize, it may still be difficult to optimize the entire coupled problem simultaneously.
}

	%From the above discussion, we know that the UL problem {\eqref{eq:un-supervised}} can be challenging because the complicated relationship between $R$ and $\bTheta$, and because there are multiple components in the objective. 
To make the above intuition precise, { we begin our analysis from a relatively simple 2-user scenario. In such a scenario, the sum rate problem \eqref{eq:wsr} is easy to solve and the solution will be {\it binary}~\cite{Gjendemsjo06,Charafeddine09}. It follows that the {\it optimal} labels for the SL approach can be obtained easily. For simplicity, consider $w_k=\frac{1}{N},\forall~k$, $P_{\max}=1$, $\sigma=1$. Then, under a toy setting (i.e., Setting~\ref{setting:toy} below), the solution quality of the SL and UL approaches is given in Claim~\ref{claim:2x2}. Note that the detailed proof of this claim is give in Appendix \ref{app:claim:2x2}.}

\begin{setting}[Toy Setting]\label{setting:toy}
Consider WSR training with
\begin{itemize}[fullwidth,itemindent=0em,label=$\bullet$]
\setlength{\parskip}{0pt}
    \item \noindent $K=2$, $N=2$, optimal labels $\{\bar{\mathbf{p}}^{(n)}\}_{n=1}^N$;
    \item \noindent one-layer linear neural network, i.e., $\mathbf p^{(n)} = \mathbf \Theta |\mathbf h^{(n)}|$.
\end{itemize}
\end{setting}

% Suppose $K=2$, $w_k=\frac{1}{N},\forall~k$, and a {\it one-layer linear} neural network: {\blue $\mathbf p^{(n)} = \mathbf \Theta |\mathbf h^{(n)}|$}, where $\mathbf \Theta \in \mathbb{R}^{K \times K^2}$, and $\mathbf \Theta: = [\mathbf \Theta_1;\cdots; \mathbf \Theta_K]$, with $\mathbf \Theta_k: =\{\mathbf \Theta_{k,(uv)}\}_{(uv)\in Q}\in\mathbb{R}^{1\times K^2}$, where $Q := \{(i,j): i, j\in\{1,\cdots, K\}$ is a set of index tuples.
% In this case, from the classical results for 2-user IC \cite{Gjendemsjo06,Charafeddine09}, we know that for each sample $n$, the sum rate maximization problem \eqref{eq:supervised} is easy to solve, and  the solution will be {\it binary}. It follows that the {\it optimal} labels for the SL problem can be obtained easily. Further, the linear network significantly simplifies the relation between $\mathbf p$ and $\mathbf \Theta$. We have the following observation.
	
\begin{figure}
% 		%\vspace{-0.5cm}
\centering
\includegraphics[height=6cm]{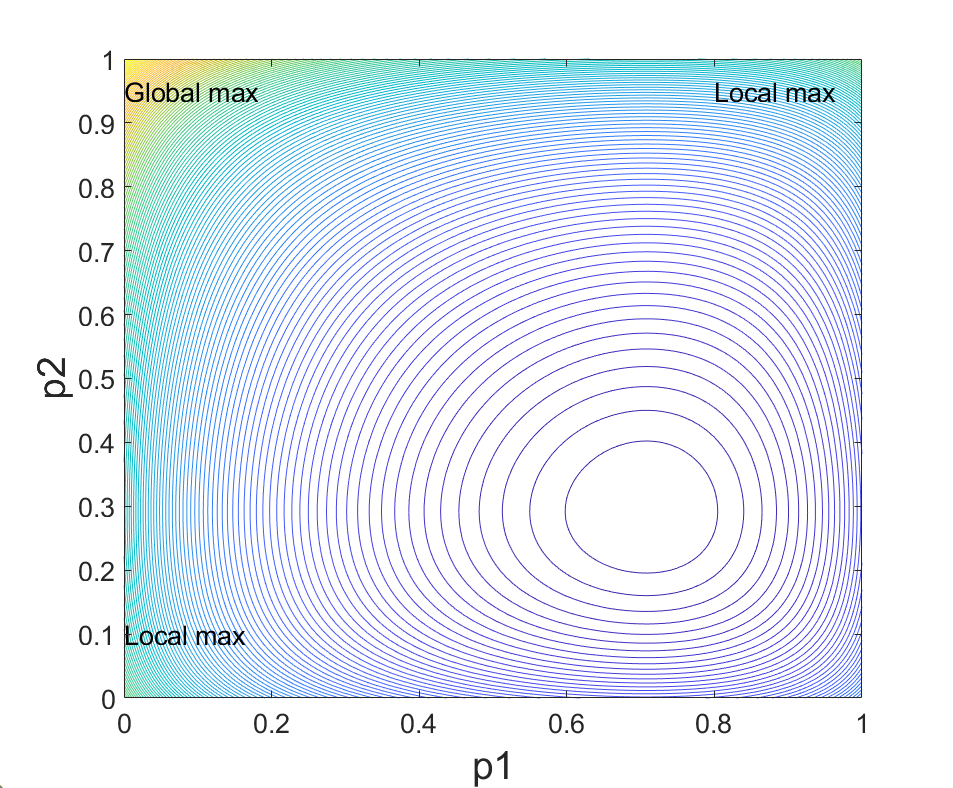}
%\vspace{-0.3cm}
\caption{\small For two-user IC with 2 snapshots, the true labels $\bar{\mathbf{p}}^{(1)}=(0,1)$, $\bar{\mathbf{p}}^{(2)}=(1,0)$. For both users, keep the sum of label of each snapshot to be $1$ (since we know that the global optimal solution has this structure), that is $\mathbf{p}^{(1)}=(p_1,1-p_1)$, $\mathbf{p}^{(2)}=(p_2,1-p_2)$. We plot the sum-rate of the two snapshots. The upper right and lower left corners are local maxima while the upper left is the global maximum. }
\label{fig2}
%\vspace{-0.5cm}
\end{figure}
	
	%In this section, we investigate the difference between unsupervised learning and supervised learning from an optimization perspective, using the example of the sum-rate maximization problem. We focus on the optimization procedure by looking into the difference between the gradient descending procedure. Consider a two-user interference channel case.
	
\begin{claim}\label{claim:2x2} 
{ 	Consider problems~\eqref{eq:supervised} and ~\eqref{eq:un-supervised} under Setting~\ref{setting:toy}. Then, there exist $\{ \mathbf h^{(n)}\}_{n=1}^N$ that 
\begin{enumerate}[fullwidth,itemindent=0em,label=$\bullet$]
    \item The SL problem~\eqref{eq:supervised} has infinitely many local optimal solutions and all of them are globally optimal, and they achieve zero training loss; 
    \item The UL problem~\eqref{eq:un-supervised} has at least two local optimal solutions and at least one of them is sub-optimal.
\end{enumerate}
}
\end{claim}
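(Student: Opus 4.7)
The plan is to handle the two bullets of Claim~\ref{claim:2x2} separately, exploiting the very different geometric structures of $f_{\rm sl}$ and $f_{\rm ul}$ under Setting~\ref{setting:toy}. For the SL bullet, convexity of $f_{\rm sl}$ in the $2\times 4$ matrix $\mathbf{\Theta}$ will force every local minimum to be global, and a simple dimension count will produce a continuum of zero-loss solutions. For the UL bullet, I will construct a specific channel pair so that the sum-rate objective on $[0,1]^{2K}$ has a strict suboptimal local maximum at a corner of $\mathbf p$-space, and then lift it to a local minimum of $f_{\rm ul}$ in $\mathbf{\Theta}$-space through the linear parametrization $\mathbf{\Theta}\mapsto (\mathbf{\Theta}|\mathbf h^{(1)}|,\mathbf{\Theta}|\mathbf h^{(2)}|)$.

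For the SL bullet, I first observe that $f_{\rm sl}(\mathbf{\Theta})=\sum_{n=1}^{2}\|\mathbf{\Theta}|\mathbf h^{(n)}|-\bar{\mathbf p}^{(n)}\|^2$ is a convex quadratic in the eight entries of $\mathbf{\Theta}$, while $\{\mathbf{\Theta}:0\le \mathbf{\Theta}|\mathbf h^{(n)}|\le 1,\ n=1,2\}$ is a convex polytope, so every local minimum is global. Picking $|\mathbf h^{(1)}|,|\mathbf h^{(2)}|$ linearly independent in $\mathbb{R}^4$, the system $\mathbf{\Theta}|\mathbf h^{(n)}|=\bar{\mathbf p}^{(n)}$ reduces to four equations in eight unknowns, and its solution set is a four-dimensional affine subspace. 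Because the labels $(0,1)$ and $(1,0)$ lie inside $[0,1]^2$, every element of this subspace is feasible at the training samples and attains $f_{\rm sl}=0$, yielding the claimed infinite family of zero-loss global optima.

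For the UL bullet, I will pick the channels in the strong-interference regime, $|h_{12}^{(n)}|^2,|h_{21}^{(n)}|^2\gg|h_{11}^{(n)}|^2,|h_{22}^{(n)}|^2$ for $n=1,2$, and with $|h_{11}^{(2)}|\gg |h_{22}^{(2)}|$ so that the label configuration is strictly better than the ``always-schedule-user-$2$'' configuration. A direct computation of the one-sided partials of $R(\mathbf p;|\mathbf h^{(n)}|)$ at the corner $\mathbf p^{(n)}=(0,1)$ yields
\begin{equation*}
\frac{\partial R}{\partial p_1}\Big|_{(0,1)}=\frac{w_1|h_{11}|^2}{|h_{12}|^2+1}-\frac{w_2|h_{21}|^2|h_{22}|^2}{1+|h_{22}|^2},
\end{equation*}
which is strictly negative once the cross-to-direct ratio is large, while $\partial R/\partial p_2|_{(0,1)}=w_2|h_{22}|^2/(1+|h_{22}|^2)>0$ automatically. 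Together with the active box constraints $p_1\ge 0$ and $p_2\le 1$, these sign conditions identify $\mathbf p^\star:=\big((0,1),(0,1)\big)$ as a strict constrained local maximum of the sum rate on $[0,1]^{2K}$, and $\mathbf p^\star$ is suboptimal because $\big((0,1),(1,0)\big)$ scores strictly higher under the above channel choice. Taking $|\mathbf h^{(1)}|,|\mathbf h^{(2)}|$ linearly independent as well, the map $\mathbf{\Theta}\mapsto(\mathbf{\Theta}|\mathbf h^{(1)}|,\mathbf{\Theta}|\mathbf h^{(2)}|)$ from $\mathbb{R}^{2\times 4}$ to $\mathbb{R}^2\times\mathbb{R}^2$ is surjective, so there exists $\mathbf{\Theta}^\star$ realizing $\mathbf p^\star$ at both samples, and any sufficiently small feasible $\Delta\mathbf{\Theta}$ produces a small feasible perturbation of $\mathbf p^\star$ in $[0,1]^{2K}$, where the sum rate can only decrease. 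Hence $\mathbf{\Theta}^\star$ is a local minimum of $f_{\rm ul}$, and an analogous lift of the label configuration gives a second local minimum, distinct from the first because the two project to different points of $\mathbf p$-space.

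The main obstacle I anticipate is the ``lift'' step: the local-max property in $\mathbf p$-space is a constrained statement at a boundary corner, while perturbations in $\mathbf{\Theta}$-space must stay inside the $\mathbf{\Theta}$-polytope $\{\mathbf{\Theta}:0\le\mathbf{\Theta}|\mathbf h^{(n)}|\le 1\,\forall n\}$. I will resolve this by combining the surjectivity and continuity of the linear parametrization, which together ensure that small feasible perturbations of $\mathbf{\Theta}^\star$ induce exactly small feasible perturbations of $\mathbf p^\star$ in $[0,1]^{2K}$; the first-order sign inequalities themselves are routine and reduce to choosing the cross-to-direct channel-gain ratios sufficiently large.
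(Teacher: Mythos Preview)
Your proposal is correct and follows essentially the same strategy as the paper. The SL bullet is handled identically via convexity of $f_{\rm sl}$ in $\mathbf{\Theta}$ plus a dimension count for the zero-loss affine subspace; the UL bullet is handled by choosing strong-interference channels, identifying a suboptimal corner of the $\mathbf p$-box as a strict constrained local maximum of the sum rate via gradient-sign analysis, and then lifting to $\mathbf{\Theta}$-space through the linear (hence continuous, surjective onto $\mathbb{R}^4$) map $\mathbf{\Theta}\mapsto(\mathbf{\Theta}|\mathbf h^{(1)}|,\mathbf{\Theta}|\mathbf h^{(2)}|)$. The only cosmetic difference is that the paper selects the corner $\mathbf p^{(1)}=\mathbf p^{(2)}=(1,0)$ whereas you select the symmetric corner $\mathbf p^{(1)}=\mathbf p^{(2)}=(0,1)$; both are valid suboptimal local maxima under the same strong-interference construction (indeed, the paper's landscape figure explicitly shows both corners are local maxima), and the gradient-sign computations are mirror images of each other.
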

{
\noindent{\it Proof Outline.} The detailed proof can be found in Appendix \ref{app:claim:2x2}. Hereafter, we give an outline. By constructing $\{ \mathbf h^{(n)}\}_{n=1}^2$ satisfying proper conditions in \eqref{eq:example} and \eqref{eq:key:construction}, we can verify that the global optimal power allocation for each snapshot is $\bar{\mathbf{p}}^{(1)}=(0,1),\  \bar{\mathbf{p}}^{(2)}=(1,0)$.
% \begin{align*}
% &|h_{12}^{(1)}|=|h_{21}^{(1)}|\gg |h_{22}^{(1)}|>|h_{11}^{(1)}|,\\
% &|h_{12}^{(2)}|=|h_{21}^{(2)}|\gg |h_{11}^{(2)}|>|h_{22}^{(2)}|,\\
% &|h_{12}^{(n)}|^2(|h_{21}^{(n)}|^{2}+1)>\frac{|h_{11}^{(n)}|^2+1}{|h_{11}^{(n)}|^2}|h_{22}^{(n)}|^2,\quad n=1,2.
% \end{align*}
% In addition, the following condition holds true:
% \begin{equation*}    
% \end{equation*}
%Clearly, for randomly generated channel, there is a non-zero probability in which the above condition will satisfy. 
By analyzing the landscapes of the SL and UL losses, it is easy to verify that any local optimal solution of the SL problem \eqref{eq:supervised} is globally optimal since $f_{\rm sl}(\bTheta)$ is a convex function w.r.t. $\bTheta$ (due the linearity of the neural network). However, for the UL problem \eqref{eq:un-supervised}, there exist at least two types of local optimal solutions $\mathbf \Theta_{\rm global}$ and $\mathbf \Theta_{\rm local}$, %(the definition of the stationary solutions will be given in {\red Appendix \ref{app:claim1}}) %\footnote{Note, that the precise definitions of stationary solutions for problem \eqref{eq:supervised} and \eqref{eq:un-supervised} can be found in {\red XXX}.}, 
which generate two different outputs: 
\begin{subequations}
 	\begin{align}
\mathbf{p}(\mathbf \Theta_{\rm global},|\mathbf{h}^{(1)}|) & =(0,1), \;     \mathbf{p}(\mathbf \Theta_{\rm global},|\mathbf{h}^{(2)}|) =(1,0), \label{eq:opt:sol}\\
	\mathbf{p}(\mathbf \Theta_{\rm local},|\mathbf{h}^{(1)}|) & =  \mathbf{p}(\mathbf \Theta_{\rm local},|\mathbf{h}^{(2)}|) =(1,0). \label{eq:subopt:sol}
\end{align}    
\end{subequations}
Since the SL and UL problems have the same neural netwrok structure, it is straightforward to see $\mathbf \Theta_{\rm global}$ corresponds to optimal solutions of the SL problem \eqref{eq:supervised}. To prove the existence of $\mathbf \Theta_{\rm local}$, we follow the two steps outlined below (for detailed argument, see Appendix \ref{app:claim:2x2}). In the subsequent discussion, sometimes it will be convenient to express the loss functions in terms of the power allocation $\mathbf{P}$ rather than $\mathbf{\Theta}$. In these cases, we will denote the loss function as $\widetilde{f}_{\rm sl}(\mathbf{P})$ and $\widetilde{f}_{\rm ul}(\mathbf{P})$, respectively.
% Specifically, we will show that there exists a region $N_{\delta}(\bar{\mathbf \Theta})$ defined as  
% $$N_{\delta}(\bar{\mathbf \Theta}):= \{\bTheta: \|\bTheta_{\rm local}- \bTheta\|\le \delta, 0\leq \mathbf{p}(\bar{\mathbf \Theta},|\mathbf{h}^{(n)}|) \leq 1,n=1,2\}$$, such that the following holds true: {\small
% \begin{align}
%      & f_{\rm ul}(\mathbf \Theta)- f_{\rm ul}(\mathbf \Theta_{\rm local}) \geq0, \label{eq:key:contradiction}\\ & \quad \forall~ \mbox{$\mathbf \bTheta\in N_{\delta}(\bTheta_{\rm local})$, and feasible $\mathbf{p}( \bTheta,|\mathbf{h}^{(1)}|)$ and $\mathbf{p}( \bTheta,|\mathbf{h}^{(2)}|)$.}\nonumber
% \end{align}}%
% Let us  We will follow the two steps outlined below: 

\noindent{\bf Step 1.} Denote $\bP^* := [1, 0; 1, 0]$ as the suboptimal power allocation, we will show that there exists a region $N_{\epsilon}(\mathbf{P}^*):=\{\bP:\|\bP-\bP^{*}\|\leq\epsilon, 0\leq \bP\leq 1 \}$ around $\mathbf{P}^*$ such that 
\begin{align}\label{eq:key:fx}
    \widetilde{f}_{\rm ul}(\bP^*)- \widetilde{f}_{\rm ul}({\bP}) \leq 0,  \forall \bP\in N_{\epsilon}({\bP}^*).
\end{align}%~\mbox{\rm ${\bP}\in N_{\epsilon}({\bP}^*)$ and ${\bP}$ feasible.

\noindent{\bf Step 2.} Next we show that for every $\widetilde{\bTheta}$ satisfying  $$\mathbf{P}\left(\widetilde{\bTheta},|\mathbf{H}|\right)=\left[\mathbf{p}\left(\widetilde{\bTheta},\left|\mathbf{h}^{(1)}\right|\right);\mathbf{p}\left(\widetilde{\bTheta},\left|\mathbf{h}^{(2)}\right|\right)\right]=\mathbf{P}^{*},$$
there exists a region $N_{\delta}(\widetilde{\bTheta})$ such that for all $\bTheta \in N_{\delta}(\widetilde{\bTheta})$, the corresponding $\mathbf{P}(\bTheta,|\mathbf{H}|)$ falls in $N_{\epsilon}\left(\mathbf{P}^{*}\right)$. That is, $f_{\rm ul}({\bTheta})>f_{\rm ul}(\widetilde{\bTheta}),\forall {\bTheta}\in N_{\delta}(\widetilde{\bTheta})$, and hence $\widetilde{\bTheta}$ is a local minimum.
$\qed$
}

% We show that for every $\widetilde{\bTheta}$ such that $\bP^*=\{\mathbf{p}(\widetilde{\bTheta},|\mathbf{h}^{(n)}|)\}_{n=1}^{N}$, we can let $\bTheta_{\rm local}=\widetilde{\bTheta}$, there exists a region $N_{\delta}({\bTheta}_{\rm local})$ such that \eqref{eq:key:contradiction} holds true.
	 
% On the other hand, for the SL problem \eqref{eq:supervised},  $f_{\rm sl}(\bTheta)$ is a convex function w.r.t. $\bTheta$ (because the neural network is linear), and the problem has an optimal solution satisfying \eqref{eq:opt:sol}. 

% { 
% % \hfill $\qed$
% }
{
Fig.~\ref{fig:land} illustrates the landscapes of the UL loss under Setting~\ref{setting:toy}. A few remarks are listed below.
\begin{rmk}
Claim~\ref{claim:2x2} states that the UL problem \eqref{eq:un-supervised} is more challenging compared with the SL problem \eqref{eq:supervised} (at least for the toy setting 1). This is mainly because the UL approach treats the training as a single problem, which composites the neural network training and WSR maximization together. Even for a simple problem instance decribed by Setting~\ref{setting:toy}, the UL problem \eqref{eq:un-supervised} still has a sub-optimal solution.
\end{rmk}
}
% It ignores  problem-specific structures that can only be exploited by looking at the snapshots one by one. %(i.e., the optimal solution can be easily obtained for each snapshot) . %which optimizes the neural network weights by directly minimizing the sum rate across multiple snapshots, 

 %The latter approach optimizes the snapshots one by one, and uses the resulting optimal power allocation as labels to train the neural network.
 %Further, since the stationary solutions of the UL problem  are identical, it follows that the latter problem also has local minima when the interference is strong. This observation is interesting, since it illustrates that one has to carefully select the supervision signal to achieve the best training performance.

% \begin{cor}\label{cor1}
% 	Consider using the same linear network as Claim \ref{thm1} and optimize the same SISO-IC case with two users and two samples (i.e $K=2, N=2$). In the example we construct in Claim \ref{thm1}, denote the interference channel $|h_{12}^{(1)}|=|h_{21}^{(1)}|=|h_{12}^{(2)}|=|h_{21}^{(2)}|=f$. Suppose the following condition is satisfied:
% 	\begin{equation*}
% 	    f^2(f^2+1)>(|h_{11}^{(n)}|^2+1)|h_{22}^{(n)}|^2\;\;n=1,2
% 	\end{equation*}
% 	Then the local minimum \eqref{eq:opt:sol} exists.
% 	\end{cor}
	
%	{\red[I think the point of this corollary is not very clear. probably we will just remove it, and replace it with a comment about strong interference, and non-zero probability]}
	
\begin{rmk}\label{rmk:non-zero}
  In Claim \ref{claim:2x2}, we constructed a specific example and showed the existence of local minima for unsupervised training. Generally speaking, it is possible to show that this phenomenon appears, with non-zero probability (w.r.t. the channel generation process), when the interference channels are strong. However, the proof will be long and technical, and will not reveal too much insight beyond what has already been shown in Claim \ref{claim:2x2}. Therefore, we will not provide such a proof in this paper. %Further, we mention that when interference is strong, with non-zero    
\end{rmk}

\begin{figure} %[htbp]
	\centering
	\subfigure[$f =10$]{
		\begin{minipage}[t]{0.48\linewidth}
			\centering
			\includegraphics[width=\linewidth]{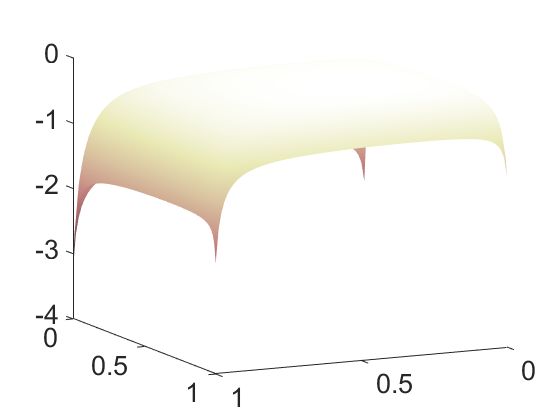}
		\end{minipage}%
	}%
	\subfigure[$f = 0.1$]{
		\begin{minipage}[t]{0.48\linewidth}
			\centering
			\includegraphics[width=\linewidth]{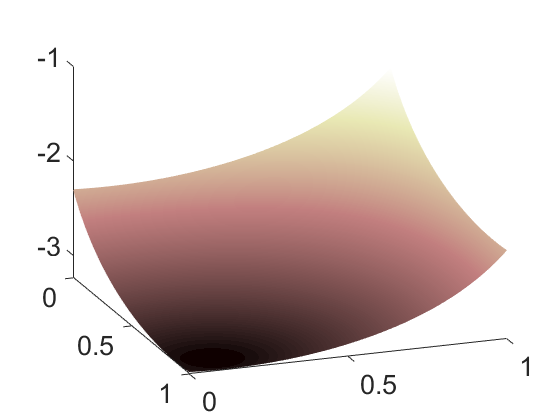}
		\end{minipage}%
	}%
	%\vspace{-0.4cm}
	\caption{\small Landscapes of the UL loss with different interference channel realizations, where  $|h_{11}^{(1)}|=|h_{22}^{(2)}|=1$, $|h_{22}^{(1)}|=|h_{11}^{(2)}|=2$, and $|h_{12}^{(1)}|=|h_{21}^{(1)}|=|h_{12}^{(2)}|=|h_{21}^{(2)}|=f$ ($f=0.1$ or $10$).}
	\label{fig:land}
	\vspace{-0.5cm}
\end{figure}

{
Next, we study the solution quality of the SL and UL problems in a more general case, where the system consists of  arbitrary number of users, and a nonlinear neural network is used; see Setting~\ref{setting:gen} below.

\begin{setting}[General Setting]\label{setting:gen}
Consider WSR training with
\begin{itemize}[fullwidth,itemindent=0em,label=$\bullet$]
\setlength{\parskip}{0pt}
    \item \noindent $K\geq2$, $N\geq2$;  $\bar{\mathbf{p}}^{(n)},n\in\mathcal{N}$ are stationary labels, defined as solutions that satisfy KKT condition of problem \eqref{eq:wsr};
    \item \noindent $L$-layers ($L\geq 1$) nonlinear neural network.
\end{itemize}
\end{setting}

For $K>2$, the WSR problem~\eqref{eq:wsr} is NP-hard in general, and classical algorithms such as WMMSE \cite{shi11WMMSE_TSP} can find stationary solutions for each individual snapshot. Hence only stationary labels are considered in Setting~\ref{setting:gen}. To reveal the fundamental difference between the SL and UL problems, we impose additional assumptions on the neural network architecture and the capabilities of the training procedure; see Assumption~\ref{assp:gen} below. 
\begin{assumption}\label{assp:gen}
    Consider Setting~\ref{setting:gen} and assume 
\begin{enumerate}[fullwidth,itemindent=0em,label=\roman*)]
\setlength{\parskip}{0pt}
    \item The nonlinear neural network is expressive enough such that there exists $\bTheta^*$ with $f_{\rm sl}(\bTheta^*)=0$;
    \item There exists training algorithm that the SL training can find such a  $\bTheta^*$ as defined in the previous item.
\end{enumerate}
\end{assumption}
Assumption~\ref{assp:gen} is related to the expressivity of the neural network and the  performance for a training algorithm, and they appear to be relatively stringent. However, recent advances in deep learning suggest that these conditions could be satisfied for special neural networks and  algorithms. Specifically, zero training loss has been verified when the neural network is {\it {overparameterized}}; see e.g., \cite{zhang2016understanding}. {Further, it has been shown in \cite{allen2019convergence,du2019gradient} that, for certain special neural networks, the gradient descent (GD) algorithm can indeed find such a global optimal solution. However, for various reasons to be discussed shortly, these works cannot be directly applied to analyze our problems. The condition under which   Assumption~\ref{assp:gen}  can be satisfied will be studied in Section~\ref{sec:opt}. Hereafter, we continue studying the solution quality between the SL and UL problems. Under Assumption~\ref{assp:gen}, it is sufficient to restrict our focus on studying the solution set of SL with zero training loss, denoted as
\begin{equation}\label{soluset:sl}
\mathcal{S}:=\{\bTheta \mid f_{\rm sl}(\bTheta)=0 \}.
\end{equation}
However, we will see that even with Assumption~\ref{assp:gen}, it is still difficult to characterize the UL training problem~\eqref{eq:un-supervised} , so we still need to use the stationary solutions, defined below, when studying this problem:
\begin{equation}\label{soluset:ul}
\mathcal{U}:=\{\bTheta \mid \bTheta\textrm{ is a stationary solution of pronlem~\eqref{eq:un-supervised}} \}.
\end{equation}
The relation between $\mathcal{S}$ and $\mathcal{U}$ is given below. 
%This usually happens during the batch training, when realizations and optimal power are significantly different. 
	%Fig \ref{fig1} demonstrates the existence of such a local maximum.
	%Although the example above is only for 2 users, it is reasonable to believe that the conclusion holds for more general cases. 
	% Claim 1 already illustrates the potential reason of poor performance of UL, the following question is: Is there more general theoretical analysis to show the potential poor performance of UL? The next claim tries to answer this question by investigating the stationary point of UL loss \eqref{eq:un-supervised} and SL loss \eqref{eq:supervised}.
	 %Claim 1 illustrates the potential reason for the poor performance of UL in a two-user case. Next, we proceed to analyze more general cases.  %the question is: Is it possible to rigorously characterize the performance gap between UL and SL in a more general setting? 
		%Towards this end, we first investigate the relationship between stationary solutions of the SL problem \eqref{eq:supervised} and the UL problem \eqref{eq:un-supervised}. %{\red We will make the assumption that the sigmoid function has been used in the last layer of the neural network, so problems \eqref{eq:un-supervised} and \eqref{eq:supervised} become unconstrained. We have the following claim.}
	
	%{\color{red}[?]}
	
\begin{claim}\label{claim:stationarity}
Suppose Assumption~\ref{assp:gen} hold under Setting~\ref{setting:gen} and consider problems~\eqref{eq:supervised} and~\eqref{eq:un-supervised}. Then, $\mathcal{S}\subseteq\mathcal{U}$.
\end{claim}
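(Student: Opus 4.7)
The plan is to lift stationarity from each per-snapshot WSR problem to the composite UL problem~\eqref{eq:un-supervised} by exploiting the fact that a zero supervised loss pins the network output to the label on every training sample. First, I would observe that if $\bTheta\in\mathcal S$, then because $f_{\rm sl}$ is a sum of squared $\ell_2$ errors, the identity $f_{\rm sl}(\bTheta)=0$ forces $\mathbf p(\bTheta;|\mathbf h^{(n)}|)=\bar{\mathbf p}^{(n)}$ for every $n\in\mathcal N$. This single fact already gives primal feasibility of $\bTheta$ for~\eqref{eq:un-supervised} (since each $\bar{\mathbf p}^{(n)}$ is feasible for~\eqref{eq:wsr}) and, more importantly, freezes the evaluation point of every term in the UL objective at a known stationary point of $R(\cdot;|\mathbf h^{(n)}|)$.

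Next, I would invoke Setting~\ref{setting:gen} to get that each label $\bar{\mathbf p}^{(n)}$ is a KKT point of~\eqref{eq:wsr}, so there exist nonnegative multipliers $\boldsymbol\mu^{(n)},\boldsymbol\lambda^{(n)}\in\mathbb R_+^K$ with
\begin{equation*}
-\nabla_{\mathbf p}R(\bar{\mathbf p}^{(n)};|\mathbf h^{(n)}|)+\boldsymbol\mu^{(n)}-\boldsymbol\lambda^{(n)}=0,
\end{equation*}
together with the complementary slackness relations $\boldsymbol\mu^{(n)}\odot(\bar{\mathbf p}^{(n)}-\mathbf P_{\max})=0$ and $\boldsymbol\lambda^{(n)}\odot\bar{\mathbf p}^{(n)}=0$. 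I would then write the KKT system of~\eqref{eq:un-supervised} in the parameter $\bTheta$: letting $J_n(\bTheta):=\partial\mathbf p(\bTheta;|\mathbf h^{(n)}|)/\partial\bTheta$, the stationarity condition with UL multipliers $\widetilde{\boldsymbol\mu}^{(n)},\widetilde{\boldsymbol\lambda}^{(n)}$ reads
\begin{equation*}
\sum_{n=1}^N J_n(\bTheta)^{\top}\bigl[-\nabla_{\mathbf p}R(\mathbf p(\bTheta;|\mathbf h^{(n)}|);|\mathbf h^{(n)}|)+\widetilde{\boldsymbol\mu}^{(n)}-\widetilde{\boldsymbol\lambda}^{(n)}\bigr]=0.
\end{equation*}
The decisive step is to transplant the WSR multipliers by setting $\widetilde{\boldsymbol\mu}^{(n)}=\boldsymbol\mu^{(n)}$ and $\widetilde{\boldsymbol\lambda}^{(n)}=\boldsymbol\lambda^{(n)}$; because $\mathbf p(\bTheta;|\mathbf h^{(n)}|)=\bar{\mathbf p}^{(n)}$, every bracket collapses to zero, so the stationarity identity holds irrespective of what the Jacobians $J_n(\bTheta)$ look like. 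Dual feasibility is automatic and the complementary slackness relations transfer verbatim because the primal iterates coincide with the labels.

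The one potentially delicate point is how to interpret stationarity of~\eqref{eq:un-supervised} itself, since the box constraints act on the network output rather than directly on $\bTheta$. Either one invokes a mild constraint qualification so that the KKT system above is the operative definition of stationarity, or, in the common implementation where the output activation already maps into $[0,\mathbf P_{\max}]$, the problem is effectively unconstrained in $\bTheta$ and stationarity reduces to $\nabla_\bTheta f_{\rm ul}(\bTheta)=0$; the same chain-rule identity shows this as well, because the bracket remains zero. I do not anticipate any further obstacle — once the labels are reproduced exactly, the hard ``inner'' nonconvexity of $R$ is neutralized at $\bTheta$ and the chain rule completes the inclusion $\mathcal S\subseteq\mathcal U$.
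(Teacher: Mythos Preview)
Your proposal is correct and follows essentially the same route as the paper's proof: both use the zero-loss identity $\mathbf p(\bTheta;|\mathbf h^{(n)}|)=\bar{\mathbf p}^{(n)}$ to pin the UL evaluation point at the labels, import the per-snapshot WSR multipliers into the UL Lagrangian, and then observe via the chain rule that each bracketed term vanishes so the KKT stationarity, feasibility, and complementary slackness conditions of~\eqref{eq:un-supervised} are satisfied. Your discussion of the constraint-qualification subtlety is a nice addition that the paper leaves implicit.
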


\begin{proof}
The proof is based on analyzing the properties of the stationary solutions of problems \eqref{eq:wsr},  \eqref{eq:supervised}, and \eqref{eq:un-supervised}. Please see Appendix~\ref{app:claim:zero:loss} for details.
\end{proof}

\begin{rmk}
Claim~\ref{claim:stationarity} shows that if we impose some additional assumptions on the SL approach, i.e., stationary labels, expressive neural networks, and good training algorithm, then the solution set of the SL approach is no larger than that of the UL approach. So in the ideal case where each label $\bar{\mathbf{p}}^{(n)}$ exactly maximizes the corresponding instance of problem \eqref{eq:wsr}, the SL can find a neural network that {\it simultaneously} optimizes all training instances. On the other hand, it is possible that the UL approach can be trapped at undesirable sub-optimal solution. Note that Claim \ref{claim:stationarity} is a generalization of Claim \ref{claim:2x2}, because Assumption \ref{assp:gen} made in Claim \ref{claim:stationarity} is all trivially satisfied for the SL approach in Claim \ref{claim:2x2}.
\end{rmk}

Based on the above discussion, we see that to fully understand the implication of Claim~\ref{claim:stationarity}, we need to know when Assumption~\ref{assp:gen} will be satisfied. Towards this end, next we will show how one can properly construct a neural network and choose a training algorithm, so that SL training can achieve zero training loss. }%{\red[the paragraph above should be more or less ok. please proofread.]}
}
%	we will analyze how the WSL problem  \eqref{eq:un-supervised:R} can be optimized under the same network architecture and optimization procedure, and use the results to u}%will also {\it indirectly} demonstrate the  challenges for the UL approach, by resorting to analyze the optimization performance of the WSL problem \eqref{eq:un-supervised:R}.} %because  they require that the inputs are normalized, and that the outputs are scalars instead of vectors. In the next section, we will present } 

	%On the other hand, 
	
	%which is not appropriate in our problem.
	%stationary points of the unsupervised loss \eqref{eq:un-supervised} and the supervised learning can achieve zero loss, the number of stationary points by directly optimizing the sum-rate  may have more local minimum than the supervised learning. This shows the higher risk of being stuck with bad local minimum when unsupervised learning is performed.
	%This result requires the high accuracy in training, in order to be consistent with the condition above, it is necessary to find a specific training procedure to reduce the supervised training loss to zero. 

	\section{The Optimization Performance of SL and UL}\label{sec:opt}
	In this section, we show that it is possible to construct a special %{\it pyramidal network structure}, 
	neural network and a special training algorithm, { so that Assumption~\ref{assp:gen} can be satisfied.}

	%so that \eqref{eq:subset} holds true. %However, the same neural network and algorithm does not help the UL approach. 

\subsection{Neural Network Structure}\label{preliminary}

	%not only the global optimal solution achieves zero loss, but the gradient descent algorithm can indeed find the global minimum of the supervised training loss. This result 
	
	%Under some extra condition on initialization and activation function, linear convergence of supervised training loss can be ensured.
	
	%In \cite{nguyen2020global}, global convergence of a specific kind of network structure is analyzed, which shows with conditions on initialization and activation function, linear convergence to zero loss can be achieved. 
	%To precisely state the desired result, we need to first 
	%In this section, we clarify the notations to be used in the subsequent analysis. 

	Throughout this section, we will use the notations introduced in Sec. \ref{sec:preliminaries} to represent a neural network. 
	 %{\red[why use two $\sigma$'s to represent different quantities? revise.]}
{ Note that both problems \eqref{eq:supervised} and  \eqref{eq:un-supervised} have $N$ constraints (one for each sample), which implies the common training algorithm, such as the (stochastic) gradient descent, cannot be directly used. In practice, one often constructs some special structure for the DNN, so that the constraints can be automatically satisfied. A popular and practical approach is to choose a special activation function $b(\cdot)$ in the last layer to enforce feasibility. That is, the last layer output in~\eqref{eq:FLa} is modified to}  %adopt the following approaches. %we observe that they are difficult  compared with the conventional neural network training problems, because each of them 
\begin{align}
	 \mathbf{F}_L = b\left(\mathbf{F}_{L-1}\mathbf{W}_{L} \right),
\end{align}
where $b(\cdot)$ is a component-wise activation function (possibly different from $a(\cdot)$), and its output always lies in $[0,P_{\max}]^{N\times K}$. Some choices of $b(\cdot)$ are listed below.
	 	
	 	%It is known that such a constrained problems are difficult 
	 	
	 	%and it is difficult for  conventional gradient-based algorithms to enforce them. To overcome such a difficulty, we adopt the following approaches.  

	 \noindent $\bullet$  {\bf{(Sigmoid)}} { Sigmoid is a commonly used activation in neural network which is smooth and has bounded output\cite{liang2019towards}:}
	 \begin{align}\label{eq:sigmoid}
	     {\rm Sigmoid}(x) := P_{\max}\cdot \frac{1}{1+e^{-x}}.
	 \end{align}
	 It is easy to see that regardless of the value of $x$, the output of the sigmoid function lies in $[0, \; P_{\max}]$.
	 
	 \noindent $\bullet$ {\bf(Smoothed Clipped ReLU)} The clipped ReLU activation function is given below~\cite{sun2018learning}:
	 \begin{equation}\label{eq:clipped}
	     {\rm ReLU}(x) := \min(\max (0,x),P_{\max}).
	 \end{equation}
	 Although this function is non-smooth, it still ensures that the output lies in $[0, \; P_{\max}]$.
	 
{ 
In this work, we introduce a smoothed version of the clipped ReLU function, named \textit{Smoothed Clipped ReLU (SCReLu)}, expressed as below:
	\begin{equation}\label{eq:soft}
	\mathrm{SCReLu}(x) = \left\{\begin{array}{ll}
			\alpha\cdot (e^{\frac{1}{\alpha}\cdot x}-{1}), \  x < {0},\\
			x   , \  x\in [{0},\;{P_{\max}}],\\
			 {P_{\max}}+\alpha\cdot (1-e^{\frac{P_{\max}-x}{\alpha}}), \  \textrm{o.w.}
	\end{array}\right.
	 	%\vspace{-0.1cm}
	\end{equation}
	 %	where $b(\cdot)$ represents the component-wise activation function.  
Compared with the Sigmoid and the clipped ReLU function, SCReLu function is not only smooth, but has enough curvature in the feasible region $[0, P_{\max}]$. Such a property will be useful in our analysis later. { Note that the output now lies in $[-\alpha, \; P_{\rm max}+\alpha]$, which slightly violates the power constraint.} With such modification for the last layer, both SL and UL problems becomes unconstrained problems. To take into account the modified last layer, and to distinguish the DNN output introduced in~\eqref{eq:nn_onesample}, %i.e., the last layer's activation function here is modified to be SCReLu rather than $a(\cdot)$, 
we use notation $\mathbf{q}^{(n)}:=\mathbf{q}(\left(\boldsymbol{\Theta} ;\left|\mathbf{h}^{(n)}\right|\right))$ to denote the DNN output associated with sample $\mathbf{h}^{(n)}$. This way, the unconstrained version of SL and UL training loss $f_{\rm SL}$ and $f_{\rm UL}$ can be respectively expressed as below: 
\begin{subequations}
\begin{align}
f_{\rm SL}(\bTheta) &:= \frac{1}{2}\sum_{n=1}^{N}\left\|\mathbf{q}\left(\boldsymbol{\Theta} ;\left|\mathbf{h}^{(n)}\right|\right)-\bar{\mathbf{p}}^{(n)}\right\|^{2}, \label{eq:vecsup} \\
f_{\rm UL}(\mathbf \Theta)&:=\sum_{n=1}^{N}  - R\left({\mathbf q}(\mathbf \Theta; |\mathbf h^{(n)}|), |\mathbf h^{(n)}|\right)\label{eq:unconstrained-unsupervised}.
\end{align}
\end{subequations}
}%
% {\blue Let us use $\mathbf{q}(\mathbf{\Theta}; |\mathbf{h}^{(n)}|)$ to denote the output of such a kind of {\it modified} neural network. Since the constraints are no longer needed, the SL training problem \eqref{eq:un-supervised} becomes:
% \begin{equation}\label{eq:unconstrained-supervised}
% \min_{\mathbf{\Theta}}  \quad \frac{1}{2} \sum_{n=1}^{N}\|\mathbf{q}(\mathbf{\Theta}; |\mathbf{h}^{(n)}|)  - \bar{\mathbf{p}}^{(n)}\|^2:=f_{\rm SL}(\mathbf{\Theta}).
% \end{equation}}%
% Similarly, we can define the output of the network and its $k$th component as:
% \begin{align*}
% \mathbf{q}^{(n)}& =\mathbf{q}\left(\mathbf{\Theta} ;\left|\mathbf{h}^{(n)}\right|\right)\in \mathbb{R}^K,\quad {\rm where}\;   q_{k}^{(n)}:=q_{k}\left(\mathbf{\Theta} ;\left|\mathbf{h}^{(n)}\right|\right).
% \end{align*}%
% Using the same technique, the UL problems can be transformed into the unconstrained version, and we  denote the loss functions as $f_{\rm UL}(\mathbf{\Theta})$. 
	 
	 \subsection{The Training Algorithms}\label{sec:alg}	
	 Next, let us discuss the training algorithm. 
	 It is widely accepted that the GD-based algorithms are the state-of-the-art algorithms to optimize neural networks \cite{ruder2016overview}. Therefore, we will work with the classic GD algorithm, and understand how/if it can help us find good solutions for optimizing $f_{\rm SL}(\mathbf{\Theta})$ and $f_{\rm UL}(\mathbf{\Theta})$, as expressed in \eqref{eq:vecsup} -- \eqref{eq:unconstrained-unsupervised} respectively (where the SCReLU activation function \eqref{eq:soft} is used). %We focus on the {\it unconstrained} version of SL and UL problems, and such a choice is not only because the unconstrained problems are considerably simpler to analyze, but more importantly, these are the training problems that get solved in practice.
%	 Additionally, we will use the SCReLU activation function \eqref{eq:soft} in our subsequent analysis.  %because it possesses a few desirable properties such as {\red xxx}, which will facilitate our analysis.
	 Specifically, at the $m$-th iteration of training, we denote  $\bTheta^{m}:=\{ \mathbf{W}_{l}^{m}\}_{l=1}^{L}$ as a collection of the neural network parameters. Then, the GD updates for the SL and UL are given below: % will have the following iterates for the two training problems: %As an example, to optimize the unconstrained SL problem \eqref{eq:unconstrained-supervised}, we have the following iteration:
	 \begin{subequations}
	 \begin{align}
	     \bTheta^{m+1} & = \bTheta^{m} - \eta  \times  \nabla f_{\rm SL}(\bTheta^m), \; m=1,2, \cdots \label{eq:SL:GD}\\
	     \bTheta^{m+1} & = \bTheta^{m} - \eta  \times  \nabla f_{\rm UL}(\bTheta^m), \; m=1,2, \cdots \label{eq:UL:GD}
	 \end{align}    
	 \end{subequations}
    % where $f_{\rm SL}(\bTheta)$ is defined in \eqref{eq:unconstrained-supervised}, and the UL loss function is                 defined similarly. 
    Note that more complicated algorithms such as stochastic gradient descent (SGD) can be considered as well. However the analysis of these variants will be  challenging, and they may not reveal new insights on the difference between SL and UL. Therefore, we will not focus on other variants of GD in this work.

\subsection{Assumptions}\label{sec:assumption}	
	
Let us make the following assumptions about the neural network and  the activation function in the first $L-1$ layers.
\begin{assumption}\label{ass:width}%(Pyramidal Network Structure)
Assume that the widths of layers satisfy the following condition:
	\begin{align}
	  n_1 \geq N,  \; n_1 \geq n_{2} \geq n_{3} \geq \ldots \geq n_{L}\ge 1.
	\end{align}
	%	{\it \red $\text { Let } n_{1} \geq N \text { and } n_{2} \geq n_{3} \geq \ldots \geq n_{L}$}.
	\end{assumption}

	\begin{assumption}\label{ass:activation}
		{\it Assume that the activation function $a(\cdot)$ in \eqref{eq:nn} satisfies the following condition:
		\begin{align*}
		    1)\; a^{\prime}(x) \in [\gamma,1], \;\; 2)\; |a(x)| \leq|x|, \; \forall~x \in \mathbb{R}, \;\; 3)\; a^{\prime} \; \text {is } \beta \text {-Lipschitz}
		\end{align*}
		where $\gamma\in (0,1)$ and $\beta>0$ are some constants.
		}
	\end{assumption}
%The first assumption defines the so-called Pyramidal Network structure \cite{nguyen2020global}, which consists of at least one wide layer (i.e., the number of neurons is at least the sample size). The second assumption is shown to hold true for certain activation functions \cite{nguyen2020global} .
% 	{\blue 
% 	\begin{assumption}\label{ass:label}
% 	For the SL problem, assume that for each $n$, the  label $\bar{\bp}^{(n)}$ is a feasible power allocation that satisfies: $\bar{p}^{(n)}_k\in [0,\; P_{\max}], \; \forall~k\in[K]$. %Further, for the WSL problem, we assume that for each $n$, the label $R^{(n)}$ is generated by computing the system sum rate for a feasible power allocation.
% 	\end{assumption}}
		
The above two assumptions provide specifications about the training problem. Assumption \ref{ass:width} requires the number of neurons in the first layer to be at least as large as the sample size, and the following layers have decreasing widths. Assumption \ref{ass:activation} can be satisfied by some specifically constructed activation functions, such as the parameterized ReLU function smoothed by a Gaussian kernel~\cite{nguyen2020global}. %one example is provided that satisfies the above assumption. Consider a family of parameterized ReLU function smoothed by a Gaussian kernel:
% \begin{align*}
%     a(x)=-\frac{(1-\gamma)^{2}}{2 \pi \beta}+\frac{\beta}{1-\gamma} \int_{-\infty}^{\infty} \max (\gamma \cdot u, u) e^{-\frac{\pi \beta^{2}(x-u)^{2}}{(1-\gamma)^{2}}} d u.
% \end{align*}

% {\blue Assumption \ref{ass:label} indicates that the labels are {\it feasible}, in the sense that there exists a set of outputs of the neural network $\{\bq^{(n)}\}_{n=1}^{N}$, such that the training objective of the SL  problems can be made zero.} 
		
{ For simplcity, let us define some useful notations which are related to the singular values of the initial weight matrices. Denote $\underline{\lambda}_{l}:=\sigma_{\min }\left(W_{l}^{0}\right)$, $\underline{\lambda}_{i \rightarrow j} :=\prod_{l=i}^{j} \underline{\lambda}_{l}$. Define:
\begin{equation}\label{eq:lambda}
	\bar{\lambda}_{l}:=\left\{\begin{array}{ll}
		\frac{2}{3}\cdot\left(1+\sigma_{\max }(\mathbf{W}_{l}^{0})\right), & \text { for } l \in\{1,2\}, \\
		\sigma_{\max }(\mathbf{W}_{l}^{0}), & \text { for } l \in\{3, \ldots, L\}.
	\end{array}\right.
\end{equation}
Similarly, denote $\bar{\lambda}_{i \rightarrow j}:=\prod_{l=i}^{j} \bar{\lambda}_{l}$.
Also, two constants that are related to $\mathbf{H}$ are defined below:{\small
\begin{align}\label{eq:alpha:underline}
   \underline{\lambda}_{H}:=\sigma_{\min }\left(a\left(\mathbf{H} \mathbf{W}_{1}^{0}\right)\right),\ \underline{\alpha}_H:={\left({\frac{3}{2}}\right)^L\cdot{\|\mathbf{H}\|_F\prod_{l=1}^{L}\bar{\lambda}_l}}. 
\end{align}}
% $\underline{\lambda}_{l}:=\sigma_{\min }\left(W_{l}^{0}\right), \underline{\lambda}_{i \rightarrow j} :=\prod_{l=i}^{j} \underline{\lambda}_{l}, \bar{\lambda}_{i \rightarrow j}:=\prod_{l=i}^{j} \bar{\lambda}_{l}$ and $\underline{\lambda}_{H}:=\sigma_{\min }\left(a\left(\mathbf{H} W_{1}^{0}\right)\right)$. 
	
% Denote the vectorized  outputs of each layer as $f_{l}: =\operatorname{vec}\left(F_{l}\right) \in \mathbb{R}^{N n_{l}}$ and the vectorized labels as $y:=\operatorname{vec}(\bar{\bP}) \in \mathbb{R}^{N K}$.  Also, denote $\Sigma_{l}:=\operatorname{diag}\left[\operatorname{vec}\left(a^{\prime}\left(F_{l-1}W_l\right)\right)\right] \in \mathbb{R}^{N n_{l} \times N n_{l}}, l \in [L-1]$ and $\Sigma_{L} :=\left[\operatorname{vec}\left(b^{\prime}\left(F_{L-1}W_L\right)\right)\right] \in \mathbb{R}^{N n_{l} \times N n_{l}}$,  as the derivative of activation function at each layer.
	
	%where $\sigma_{\min }(A) \text { and }\|A\|_{2}$ are the smallest and largest singular value of matrix $A$.
	%$$$$
The next assumption states how the neural network should be initialized. 
\begin{assumption}\label{ass:init:sl}
Assume the initial weight $\bTheta^{0}$ is generated in such a way that 
\begin{equation}\label{ass:square}
\underline{\lambda}_{H}\geq \max(\Lambda_1,\Lambda_2),
\end{equation}
where $\Lambda_1$ and $\Lambda_2$ are constants given in Appendix~\ref{app:Lambda}, and they are dependent on $\bTheta^0$, $\mathbf{H}$, and $f_{\rm SL}(\bTheta^0)$.
\end{assumption}

%			a. $$;\\
%			b. $|\sigma(x)| \leq|x| \text { for every } x \in \mathbb{R}$;\\
%			c. $\sigma^{\prime} \text { is } \beta \text {-Lipschitz. }$}
%	\end{assumption}
%The first assumption defines the network structure with wide last hidden \cite{nguyen2021proof},  (i.e., the number of neurons is at least the sample size). The second assumption is shown to hold true for certain activation functions \cite{nguyen2020global} . %but no comparison between $n_1$ and $n_2$ is needed.	

Assumption~\ref{ass:init:sl} basically requires that the smallest singular value of the output of the first hidden layer is sufficiently large. Later in Appendix~\ref{app:Lambda}, we will show how such a condition on initialization can be easily satisfied.
}
%	Let us assume that the width of first layer of the network is large, while the width of the rest of the layers decreases.

%In \cite{nguyen2020global}, the condition needed to satisfy the assumptions is provided.`8

%\songdelete{In \cite{nguyen2020global}, a concrete example is shown that satisfies Assumption \ref{as2}:
%$$\sigma(x)=-\frac{(1-\gamma)^{2}}{2 \pi \beta}+\frac{\beta}{1-\gamma} \int_{-\infty}^{\infty} \max (\gamma u, u) e^{-\frac{x \beta^{2}(x-u)^{2}}{(1-\gamma)^{2}}} d u.$$}
%	{\red[comments on these assumptions; I am OK up till now.]}

%	Note that in the above expression there is some abuse of notation, since we still use $\bp(\mathbf \Theta; |\mathbf h^{(n)}|)$  to denote the output of the neural network, despite the fact that the neural network structure is slightly different than before.
	
%	{\red[re-define what's the revised objective function]}
%This training problem is different and more complicated than the original problem in \cite{nguyen2020global} because we have one additional sigmoid activation function in the last layer. This may influence the convergence result since the sigmoid layer can cause degenerated decrease in loss. Furthermore, the UL training is much different from the original SL training because the we have the sum rate function in addition to the predicted labels.

\subsection{Main Results}	
%First, we show that the SL problem must have at least one solution that makes the loss function zero. %solutions $\bTheta^*_{\rm SL}$ and $\bTheta^*_{\rm WSL}$, which make their respective loss zero.   
%\begin{claim}\label{thm:zero:min}
%		{\it Consider the SISO-IC training problem with $K$ users and $N$ training samples. 
%			Construct a fully connected neural network satisfying Assumption \ref{ass:width} - \ref{ass:label}. 
%			Then there must exist a solution $\bTheta^*_{\rm SL}$ such that the following holds:
%\begin{align}
%    f_{\rm SL}(\bTheta^*_{\rm SL}) =0.
%\end{align}}%
%	\end{claim}
%{\red[need to show this.]}
{ 
Our key observation about the differences between the SL and UL training \eqref{eq:SL:GD} and~\eqref{eq:UL:GD} is given below.}
% Consider the SISO-IC training problem with $K$ users and $N$ training samples. Construct a fully connected neural network satisfying Assumption \ref{ass:width} - \ref{ass:label}. The following results characterize the convergence  of the GD iterations \eqref{eq:SL:GD}.
	
%	{\red[since the existence of zero objective is important (and not obvious), shall we separately write a claim for it?]}
	
{ 	
\begin{claim}\label{thm:zero:loss}
Consider WSR training for $f_{\rm SL}(\mathbf{\Theta})$ and $f_{\rm UL}(\mathbf{\Theta})$ under Setting~\ref{setting:gen}, where the training losses are given in~\eqref{eq:vecsup} and~\eqref{eq:unconstrained-unsupervised}. Suppose Assumptions \ref{ass:width}-\ref{ass:init:sl} hold  Then with a small enough stepsize $\eta>0$, we have:
\begin{enumerate}[fullwidth,itemindent=0em,label=$\bullet$]
\item For SL with \eqref{eq:SL:GD}, if the parameter $\alpha$ of SCReLU in~\eqref{eq:soft} satisfies $\alpha\geq\underline{\alpha}_H$, where $\underline{\alpha}_H$ is defined in \eqref{eq:alpha:underline}, there must exist a solution $\bTheta^*:=\lim_{m\rightarrow \infty}\bTheta^m$ such that $f_{\rm SL}\left(\bTheta^{*}\right)=0$. Further, the training loss converges to zero at a geometric rate, 
\begin{equation}
f_{\rm SL}\left(\bTheta^{m}\right) \leq\left(1-\eta \cdot \alpha_{0}\right)^{m} f_{\rm SL}\left(\bTheta^{0}\right)
%\vspace{-0.2cm}
\end{equation}
where $\alpha_0>0$ is a constant defined in~\eqref{eq:alpha0}.
% Further, there must exist a solution $\bTheta^*$ such that $f_{\rm SL}\left(\bTheta^{*}\right)=0$. 
% \begin{equation*}
% =f_{\rm SL}\left(\lim _{m \rightarrow \infty} \bTheta^{m}\right)=\lim _{m  \rightarrow \infty} f_{\rm SL}\left(\bTheta^{m}\right)=0.
% \end{equation*}
\item For UL with \eqref{eq:UL:GD}, if $\bTheta^{m}$ is bounded then 
$$\lim_{m  \rightarrow \infty} \nabla f_{\rm UL}\left(\bTheta^{m}\right)=\mathbf{0}.$$
\end{enumerate}
\end{claim}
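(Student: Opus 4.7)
The two bullets call for very different mechanisms, so I would treat them separately: the SL part is an NTK-style overparameterization argument giving a Polyak-{\L}ojasiewicz (PL) inequality along the trajectory, whereas the UL part is a generic descent-lemma convergence under bounded iterates.

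For the SL bullet, let $\mathbf{J}(\bTheta)$ denote the Jacobian of the stacked output $\mathbf{q}(\bTheta)\in\mathbb{R}^{NK}$ with respect to $\bTheta$. A short computation yields
\begin{equation*}
\|\nabla f_{\rm SL}(\bTheta)\|^2 \;\geq\; 2\,\lambda_{\min}\!\big(\mathbf{J}(\bTheta)\mathbf{J}(\bTheta)^\top\big)\cdot f_{\rm SL}(\bTheta).
\end{equation*}
The whole difficulty is to show $\lambda_{\min}(\mathbf{J}\mathbf{J}^\top)\geq \alpha_0>0$ along every GD iterate. First I would lower-bound this quantity at initialization: factoring $\mathbf{J}(\bTheta^0)$ through the layer-by-layer chain rule and using $a'\in[\gamma,1]$ from Assumption~\ref{ass:activation} together with $\sigma_{\min}(a(\mathbf{H}\mathbf{W}_1^0))=\underline{\lambda}_H$ from Assumption~\ref{ass:init:sl}, one recursively propagates a bound of the form $c\,\underline{\lambda}_H^2\,\gamma^{2(L-2)}\underline{\lambda}_{2\to L}^2$ times a final-layer factor determined by $\mathrm{SCReLu}'$. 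The role of the condition $\alpha\geq \underline{\alpha}_H$ is precisely to protect this final factor: because $\underline{\alpha}_H$ in~\eqref{eq:alpha:underline} upper-bounds the magnitude of the last-layer pre-activation, the choice $\alpha\geq\underline{\alpha}_H$ forces the pre-activation into the region where $\mathrm{SCReLu}'\geq e^{-1}$, keeping the derivative uniformly away from zero.

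Next I would run a perturbation argument: $\beta$-Lipschitzness of $a'$ together with the smoothness of $\mathrm{SCReLu}$ makes $\bTheta\mapsto\lambda_{\min}(\mathbf{J}\mathbf{J}^\top)$ locally Lipschitz, so there is an explicit radius $R$ with $\lambda_{\min}\geq\alpha_0$ on $B(\bTheta^0,R)$. On the same ball $f_{\rm SL}$ is $\widetilde L$-smooth, so for $\eta\leq 1/\widetilde L$ the descent lemma combined with the PL bound gives $f_{\rm SL}(\bTheta^{m+1})\leq(1-\eta\alpha_0)f_{\rm SL}(\bTheta^m)$ \emph{provided} $\bTheta^m\in B(\bTheta^0,R)$. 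An induction closes the loop: from $\|\bTheta^{m+1}-\bTheta^m\|\leq \eta\sqrt{2\widetilde L f_{\rm SL}(\bTheta^m)}$ and the putative geometric decay of $f_{\rm SL}$, the total displacement telescopes to $\eta\sqrt{2\widetilde L f_{\rm SL}(\bTheta^0)}/(1-\sqrt{1-\eta\alpha_0})$, which is $\leq R$ exactly when $f_{\rm SL}(\bTheta^0)$ is small relative to $R^2\alpha_0/\widetilde L$. This last requirement is what the two quantitative thresholds $\underline{\lambda}_H\geq \max(\Lambda_1,\Lambda_2)$ in Assumption~\ref{ass:init:sl} encode: $\Lambda_1$ controls the base PL constant at initialization, while $\Lambda_2$ ensures the induction radius is consistent with the initial loss. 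The main obstacle, and the bulk of the appendix, is exactly this coupled bookkeeping among $R$, $\alpha_0$, $\widetilde L$, and $f_{\rm SL}(\bTheta^0)$; once calibrated, both the limit statement and the geometric rate follow immediately.

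For the UL bullet the plan is much shorter. The composition $R(\mathbf{q}(\bTheta;|\mathbf{h}|);|\mathbf{h}|)$ is $C^2$ in $\bTheta$ because $\mathbf{q}$ is $C^2$ (Assumption~\ref{ass:activation} plus smoothness of $\mathrm{SCReLu}$) and $R(\cdot;|\mathbf{h}|)$ is analytic on the interior of its power domain $[-\alpha,P_{\max}+\alpha]^K$. Hence on any bounded set $\{\|\bTheta\|\leq B\}$, $\nabla f_{\rm UL}$ is Lipschitz with some constant $L_B$, and for $\eta<2/L_B$ the standard descent lemma gives
\begin{equation*}
f_{\rm UL}(\bTheta^{m+1})\leq f_{\rm UL}(\bTheta^m)-\eta\bigl(1-\tfrac{L_B\eta}{2}\bigr)\|\nabla f_{\rm UL}(\bTheta^m)\|^2.
\end{equation*}
Because $R$ is bounded above on compacts, $f_{\rm UL}$ is monotone decreasing and bounded below along the trajectory, so $f_{\rm UL}(\bTheta^m)-f_{\rm UL}(\bTheta^{m+1})\to 0$, which forces $\|\nabla f_{\rm UL}(\bTheta^m)\|\to 0$. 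No PL-type lower bound is available here because $R$ is genuinely nonconvex with honest local minima, which is the very phenomenon already exhibited by Claim~\ref{claim:2x2} and which creates the qualitative gap between the two bullets of the present claim.
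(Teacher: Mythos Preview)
Your proposal is correct and aligns with the paper's proof. The only notable stylistic difference is that the paper follows Nguyen (2020) explicitly and lower-bounds the partial gradient with respect to the \emph{second} layer alone, i.e.\ $\|\nabla_2 f_{\rm SL}\|^2\geq \alpha_0\|\mathbf{f}_L-\mathbf{y}\|_2^2$, rather than the full Jacobian kernel $\lambda_{\min}(\mathbf{J}\mathbf{J}^\top)$; but since $\|\nabla f_{\rm SL}\|^2\geq \|\nabla_2 f_{\rm SL}\|^2$ this is exactly your PL mechanism specialized to one block, and your reading of the role of $\alpha\geq\underline{\alpha}_H$ (forcing $\mathrm{SCReLu}'\geq e^{-1}$ along the trajectory) and of the induction keeping iterates in a good ball matches the paper's argument. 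For the UL bullet the paper does precisely what you describe: it decomposes $\nabla f_{\rm UL}=\mathbf{J}_L^\top\mathbf{g}$, bounds each factor and its Lipschitz constant under the bounded-iterate hypothesis, and then applies the descent lemma to get summability of $\|\nabla f_{\rm UL}(\bTheta^m)\|^2$.
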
	
}

The detailed proof can be found in Appendix~\ref{app:thm:zero:loss}. The proof relies on the technique developed in~\cite[Theorem 3.2]{nguyen2020global}. The key difference is that, our considered neural network has an different activation function $b(\cdot)$ at the output, and such an activation function does not satisfy Assumption \ref{ass:activation}. Therefore we need to carefully analyze the convergence result with SCReLu activation function.
{
\begin{rmk}
    The key message from Claim~\ref{thm:zero:loss} is that, if GD is used for training, then SL enjoys nice convergence property while UL does not. Specifically, for the UL training, even assuming a strong condition that all weights are bounded during UL training, only convergence to stationary points can be ensured. This is not surprising, considering that the sum-rate problem itself is an NP-Hard problem, even without compositing it with a neural network. 
\end{rmk}
}

\subsection{Discussion of Claim \ref{thm:zero:loss}}\label{sub:discussion}
Some discussion on Claim \ref{thm:zero:loss} are provided below.
	
\noindent\textbf{Comparison of UL and SL: }{ First of all, the claimed geometric convergence rate for SL training requires the parameter $\alpha$ in SCReLU function satisfying $\alpha\geq\underline{\alpha}_H$, while the UL training does not have such a requirement. Choosing large $\alpha$ does not affect the feasibility in SL training since the neural network outputs can converge to labels $\{\bar{\bp}\}$. For UL training, we only need $\alpha>0$ and the convergence can still be guaranteed. This implies that using SCReLU will only slightly violate the power constraint for UL training.

Second, our analysis shows that there is some significant gap between the theoretical guarantees that can be obtained between SL and UL training. For the UL training, even under bounded sequence assumption, only  stationary solutions can be obtained, while through the SL training one can find a model achieving zero training loss.}

% Intuitively, the parameter $\alpha$ in SCReLU function should be small, because it is the violation to the feasible solution. But Claim~\ref{thm:zero:loss} requires $\alpha$ being large. In fact, choosing large $\alpha$ does not affect the feasibility in SL training, because the solution will converge to the set of labels $\{\bar{\bp}\}$, which is feasible. However, UL training will achieve stationary points even under the strong assumption of bounded weights, the feasibility of the derived solutions are not guaranteed. Thus we must choose small $\alpha$ to make the violation small.} %But for UL \eqref{eq:unconstrained-unsupervised}, the global convergence can not be guaranteed, so the parameter $\alpha$ must be small to enforce feasibility.}

\noindent\textbf{Other Convergence Analysis: } There has been a line of works that discuss the the global convergence of supervised learning. These works use quadratic loss functions and require the scalar output \cite{du2019gradient,oymak2020toward,zou2019improved}. Multiple dimension output is needed in our setting, so the network structures in this line of works are not applicable. The network construction and the basic analysis techniques discussed in this section originate from recent advances in neural network optimization %{\red[cite the sequence of the works]}
	\cite{nguyen2018optimization,nguyen2020global,nguyen2021proof,nguyen2021tight}, which allow output to be a vector. Further, different from previous works, sufficient decrease at each iteration is ensured by Assumption \ref{ass:width}, because the smallest singular value of gradient is strictly positive in each layer \cite{nguyen2018optimization}. %For UL, since the objective is more complicated and non-convex, previous works can not be directly applied because the objectives in previous works are all simple quadratic loss functions. Only under strong assumptions that all weights are bounded, convergence to stationary points can be ensured.
		
\noindent\textbf{Extension to Other Wireless Problems:} Besides the WSR maximization, the network structure in Sec.~\ref{preliminary} allows multi-dimensional output, which make it cloud be used in other wireless communication problems, like MIMO-detection~\cite{samuel2017deep} and signal estimation~\cite{ye2017power}.

 \section{A semi-supervised learning approach}
 \label{sec:semi}
 { From the previous section, we know that under a few regularity assumptions, the SL enjoys fast convergence to zero loss. If high-quality labels are available, potentially SL would perform much better than UL. %In addition, the training of SL depends on the label qualities.
 However the drawback of SL is that, finding high-quality labels can be computationally costly. Is there a way to design a proper learning strategy that only requires a few labels, while still achieving the state-of-the-art performance?  %{\blue {the rest part of}} 
 In this section, we address this by proposing a {\it semi-supervised} learning (SSL) strategy.
 }

{ 
As indicated by Claim~\ref{claim:2x2}, UL may get stuck at some local solutions once parameters enter some ``bad'' regions. To alleviate such an issue, we propose to add a label-dependent regularization in the training objective to change the landscape of the loss function. %More specifically, we propose to use the label-dependent square loss function as the regularizers to utilize the available high-quality labels. %In this paper, we simply choose squared loss as the regularization term and surprisingly the performance improves a lot (see Section~\ref{sec:simul} for more details).
Specifically, denote $\mathcal{N}:=[N]$ as index set of channel snapshots $\{|\mathbf{h}^{(n)}|\}$ and let $\mathcal{M}\subseteq \mathcal{N}$ be index set for labeled samples $\{|\mathbf{h}^{(m)}|,\bar{\mathbf{p}}^{(m)}\}$. Then, we construct the following training loss by directly combining \eqref{eq:supervised} and \eqref{eq:un-supervised}:
\begin{equation}\label{eq:semi}
\begin{aligned}
    \min_{\bTheta}& \sum_{n\in\mathcal{N}}f_{\rm ul}^{(n)}(\mathbf{\Theta})
    + \lambda\sum_{m\in\mathcal{M}} f_{\rm sl}^{(m)}(\mathbf{\Theta}),\\ 
    \text {s.t.} & \quad {\mathbf 0}\le \mathbf p(\mathbf \Theta; |\mathbf h^{(n)}|) \le \mathbf{P}_{\max}, \  \forall n\in \mathcal{N} ,
\end{aligned}
\end{equation}
where $\lambda>0$ is a constant which controls the trade-off between two different loss terms, and we have defined:
\begin{align*}
\small
f_{\rm sl}^{(m)}(\mathbf{\Theta})&:=\left\| \mathbf{p}(\bTheta;|\mathbf{h}^{(m)}|)- \bar{\mathbf{p}}^{(m)}\right\|^{2},\\
f_{\rm ul}^{(n)}(\mathbf{\Theta})&:=-R\left(\mathbf{p}(\bTheta;|\mathbf{h}^{(n)}|),|\mathbf{h}^{(n)}|\right),
\end{align*}
as the SL and Ul losses associated with samples $(\mathbf{h}^{(m)},\bar{\mathbf{p}}^{(m)})$ and $\mathbf{h}^{(n)}$ respectively. The SL loss $f_{\rm sl}^{(m)}(\mathbf{\Theta})$ serves as a regularization term to supervise the model, so that the predicted labels cannot be too different from the labels. Intuitively, this regularization term can avoid some local solutions of the UL approach. To make this intuition precise, in the following, we first denote the solution set for SSL problem~\eqref{eq:semi} as 
\begin{equation}\label{soluset:ssl}
\begin{aligned}
    \mathcal{L}=\{\bTheta \mid \bTheta&\textrm{ is a stationary solu. of~\eqref{eq:semi} }\\
    &\textrm{ and }f_{\rm sl}^{(m)}(\mathbf{\Theta})=0,\forall m\in\mathcal{M}\}.
\end{aligned}
\end{equation}
Then, the relationship among solution sets of SL problem~\eqref{eq:supervised}, UL problem~\eqref{eq:un-supervised}, and SSL problem~\eqref{eq:semi} is given in Claim~\ref{claim:ssl}.
}

{ 
\begin{claim}\label{claim:ssl}
Consider problems~\eqref{eq:supervised},~\eqref{eq:un-supervised}, and~\eqref{eq:semi} under Setting~\ref{setting:gen} and denote $\mathcal{M}\subseteq \mathcal{N}$ as the index set for labeled samples for SSL problem~\eqref{eq:semi}. Let $\mathcal{S}$, $\mathcal{U}$, and $\mathcal{L}$ be solution sets defined in \eqref{soluset:sl},  \eqref{soluset:ul}, and~\eqref{soluset:ssl}, respectively. Then, $\mathcal{S}\subseteq\mathcal{L}\subseteq\mathcal{U}$.
\end{claim}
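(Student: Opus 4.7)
The approach is to prove the two inclusions $\mathcal{S}\subseteq\mathcal{L}$ and $\mathcal{L}\subseteq\mathcal{U}$ separately, and to reduce the first-order optimality (KKT) condition of the SSL problem~\eqref{eq:semi} to that of the UL problem~\eqref{eq:un-supervised} at any point where the labeled residuals $\{f_{\rm sl}^{(m)}(\bTheta)\}_{m\in\mathcal{M}}$ vanish. The key observation is that each $f_{\rm sl}^{(m)}$ is a squared-error term, so whenever it attains its minimum value zero both the value and the gradient vanish simultaneously: by the chain rule $\nabla f_{\rm sl}^{(m)}(\bTheta)=2 J_{\bTheta}^{\top}\bigl(\mathbf{p}(\bTheta;|\mathbf{h}^{(m)}|)-\bar{\mathbf{p}}^{(m)}\bigr)$, which equals $\mathbf{0}$ as soon as the residual is zero (regardless of the Jacobian $J_{\bTheta}$ of the network output w.r.t.\ $\bTheta$). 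Hence the regularization term in~\eqref{eq:semi} contributes nothing to the Lagrangian gradient at any point in $\mathcal{L}$ and at any point in $\mathcal{S}$, and the arithmetic of the two stationarity systems then matches.

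\textbf{First inclusion $\mathcal{S}\subseteq\mathcal{L}$.} Take any $\bTheta\in\mathcal{S}$. Then $f_{\rm sl}(\bTheta)=0$ forces $\mathbf{p}(\bTheta;|\mathbf{h}^{(n)}|)=\bar{\mathbf{p}}^{(n)}$ for every $n\in\mathcal{N}$, so in particular $f_{\rm sl}^{(m)}(\bTheta)=0$ for every $m\in\mathcal{M}$, verifying the second clause in the definition~\eqref{soluset:ssl} of $\mathcal{L}$. For the stationarity clause, I would first invoke Claim~\ref{claim:stationarity} to get $\bTheta\in\mathcal{U}$, i.e.\ the KKT system for~\eqref{eq:un-supervised} is satisfied at $\bTheta$ with some multipliers for the box constraints $\mathbf{0}\le\mathbf{p}(\bTheta;|\mathbf{h}^{(n)}|)\le\mathbf{P}_{\max}$. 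Since $\nabla f_{\rm sl}^{(m)}(\bTheta)=\mathbf{0}$ for each $m\in\mathcal{M}$ and the feasible set of~\eqref{eq:semi} is identical to that of~\eqref{eq:un-supervised}, the same multipliers make the KKT system for~\eqref{eq:semi} hold verbatim; the $\lambda$-weighted regularization gradient adds zero. Thus $\bTheta$ is stationary for~\eqref{eq:semi}, giving $\bTheta\in\mathcal{L}$.

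\textbf{Second inclusion $\mathcal{L}\subseteq\mathcal{U}$.} Take any $\bTheta\in\mathcal{L}$, so that $\bTheta$ is stationary for~\eqref{eq:semi} and $f_{\rm sl}^{(m)}(\bTheta)=0$ for every $m\in\mathcal{M}$. Again the zero-residual property yields $\sum_{m\in\mathcal{M}}\nabla f_{\rm sl}^{(m)}(\bTheta)=\mathbf{0}$, so subtracting the $\lambda$-weighted regularization from the SSL KKT system leaves exactly the KKT system for~\eqref{eq:un-supervised} (with the same multipliers and the identical constraint set). Therefore $\bTheta\in\mathcal{U}$.

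\textbf{Main obstacle.} The only real subtlety is handling the box constraints $\mathbf{0}\le\mathbf{p}(\bTheta;|\mathbf{h}^{(n)}|)\le\mathbf{P}_{\max}$, which force one to work with KKT multipliers rather than plain vanishing-gradient conditions, and to make sure the same multipliers work across the three problems. Because the feasible set is common to~\eqref{eq:supervised},~\eqref{eq:un-supervised}, and~\eqref{eq:semi}, and because $\nabla f_{\rm sl}^{(m)}(\bTheta)=\mathbf{0}$ at every point where $f_{\rm sl}^{(m)}(\bTheta)=0$, the multipliers can simply be carried over unchanged; no transversality or constraint-qualification machinery is needed beyond what is already used in the proof of Claim~\ref{claim:stationarity}. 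The remainder of the argument is routine algebra on the first-order conditions.
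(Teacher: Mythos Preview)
Your proposal is correct and follows essentially the same KKT-based argument as the paper: both hinge on the observation that when the labeled residuals vanish the squared-error regularizer contributes zero to the Lagrangian gradient, so the SSL and UL first-order systems coincide with identical multipliers. The only cosmetic difference is that for $\mathcal{S}\subseteq\mathcal{L}$ you invoke Claim~\ref{claim:stationarity} as a black box to obtain the UL multipliers, whereas the paper rebuilds them directly from the WSR KKT conditions of the stationary labels; your route is a bit more modular but otherwise the same argument.
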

}

\begin{rmk}
    The above claim shows that when labels of part of the training data are available,  the set of stationary solutions of the SSL \eqref{eq:semi} lies between the set of stationary solutions of SL and UL. %Furthermore, if we impose some assumption on the quality of the labeled data and assume the model can fit these labeled data well, then performing gradient descent, the objective \eqref{eq:semi} will converge to a stationary solution set {\it no worse} than UL.  
    %{\red Moreover, when the given labeled data is representative, that is, when fitting the part of labeled data can lead to small UL loss on other unlabeled data, then it is possible that the performance of SSL can approximate SL. [why? cannot see this from the above claim. remove?]}
    However, at this point it is not clear how to exactly ensure the zero loss on the regularization term in SSL. A practical way is increasing the penalty parameter $\lambda$ in \eqref{eq:semi} to enforce better fit on the quadratic regularization term.
\end{rmk}
	
% %	{\red[does our current numerical section has the semi-supervised results?]}

% 	%\SL{[The above formulation sounds like 'some combination of the two' that you introduced existing work in earlier section. Please clarify it if the formulation is  proposed by our work for the first time.]}

 	\section{Simulation Results}
 	\label{sec:simul}
% 	%{\red 1. Check my understanding of your setups is correct. \\
% 	%2. We do not present detailed results in weak interference case, is this because these approaches perform similarly? If is, then we should explain the reason. We can used a 2-by-2 table to show weak/strong interference and 5/10 users results and use figures to show strong interference case. and discuss why proposed semi-supervised approach does not work in weak interference case \\
% 	%3. Why weak interference case using smaller number of samples?\\
% 	%4. throughout the paper, use the same terminology strong/weak interference `case' or `scenario'.\\
% 	%5.I just feel, we tell a good story but the simulations can not strongly support our story, i.e., weak interference case. We need to at least clearly claim weakness of our proposed approach and some potential improvement directions.}
{ 
In this section, we provide preliminary numerical results to illustrate the intuition gained from our theoretical studies. Since the main contribution of this work is about understanding the theoretical relations between the SL and UL approaches, we do not try to be exhaustive in our numerical experiments.}

\noindent{\bf \underline{Data Generation}:} We consider the Rayleigh fading channel model~\cite{sklar1997rayleigh}, and set the number of users $K=5,10,20$. Direct channels $h_{kk}$ and interfering channels $h_{kj},k\neq j$ are generated from zero-mean complex Gaussian distribution $\mathcal{CN}(0,\sigma^2)$, where $\sigma$ denotes the standard deviation. To evaluate the stability of different learning approaches, two representative scenarios are considered. In the first scenario (referred as ``weak interference"), both direct and interfering channels are { generated from $\mathcal{CN}(0,\sigma^2)$ with the same $\sigma=1$.} For the second scenario (referred as ``strong interference"), { direct channels are generated with $\sigma=1$, while the interfering channels have larger standard deviation with $\sigma = 10$.} { Note that Fig.~\ref{fig2} in Sec.~\ref{sec:study} depicts the results with $10$ users.}

{ Also, to evaluate the impact of label quality, we consider the following approach to generate low- and high-quality labels. For high-quality labels, instead of directly using WMMSE, we first pass a given sample $\mathbf{h}^{(n)}$ through a pre-trained GNN model { (by the training method in \cite{shen2019graph}),} and then fine tune the label using WMMSE.} As for low-quality labels, we directly use WMMSE to obtain the labels. 

%For the above experiments, we repeat the simulation for multiple times and pick one as example. 
% {\blue[not sure what this sentence mean..]}

%\reminder{Can we always observe similar behavior of these approaches in different simulation runs, i.e., generating different channel scenario with the same deviation setting. We should mention that our results is averaged over different simulation runs or we just pick one as example.}
% Both kinds of labels are used in pre-training and semi-supervised learning.

% Two cases are mainly considered, in the first case the standard normal distribution is used, when the power of direct channels and interference channels are the same; in the second case, the power of interference channels are $10$ times of the direct channels. We consider the user number $K=5,10$.  %case where both cases are simulated.

% {\red i.e.,}
% \begin{equation}
%   {\rm ReLu}(x)=\min \left(\max (x, 0), P_{\max }\right).
% \end{equation}
 
%  	 \begin{equation}
%  	     \sigma(x)=\frac{1}{1+e^{-x}}.
%  	 \end{equation}

\noindent{\bf \underline{Benchmarks}:}
% %	\subsection{}
{
We compare the proposed SSL formulation~\eqref{eq:semi} ($\lambda=1$), referred as {\it regularized SSL}, with three benchmark approaches. The first one is the well-studied WMMSE algorithm~\cite{shi2011iteratively}, and the second one is the UL formulation in~\eqref{eq:un-supervised}. Another SSL approach, referred as \textit{pre-trained SSL}, is also included as the third benchmark, which optimizes the UL formulation~\eqref{eq:un-supervised} with a \textit{pre-trained} initialization, i.e., training over labeled samples to obtain the DNN's initialization. 
}

\noindent{\bf \underline{Neural Network Structure}:}\label{sec:5.2}
A fully connected neural network with $3$ hidden layers is used. The number of neurons in each hidden layers is $200,80,80$, respectively, for both $5$- and $10$-user cases, and $600,200,200$ respectivly, for the $20$-user case. For each hidden layer, { the \textit{clipped ReLU} activation defined in~\eqref{eq:clipped} is used, and the \textit{sigmoid} activation in~\eqref{eq:sigmoid} is used for the output layer.} To stabilize the training process, the \textit{Batch Normalization}~\cite{ioffe2015batch} is used after each hidden layer.
	
\noindent{\bf \underline{Training Procedure}:}	All the three DNN-based approaches use the same neural network structure as specified before. 
Unlabeled samples are used in UL approach, which together with the labeled samples are used in the two SSL approaches. The RMSprop algorithm~\cite{tieleman2012rmsprop} is used as the optimizer, where each mini-batch consists of $200$ unlabeled samples and all the available labeled samples. To evaluate the performance, $1,000$ additional unlabeled samples are generated and their averaged sum rate is used as the performance metric.

% Also, we change the number of labeled data and train the model along with the unlabeled data.
%The batch size is $200$ with $200$ unlabeled samples and all the given labeled samples as regularization. Other setups are the same as used in the strong interference case and we compare UL and proposed semi-supervised learning. 

\noindent{\bf \underline{Results and Analysis}:}
{
The sum rate of the UL and the two SSL approaches in the strong interference scenario is shown in Fig.~\ref{fig3}, where the legends `Low' and `High'  indicate the quality of labels. The total number of unlabeled and labeled samples are $50,000$ and $400$ for the $10$-user case, and $10,000$ and $100$ for the $5$-user case, respectively. Compared with the UL approach, the regularized SSL significantly improves the sum rate in the $10$-user case, especially with high-quality labels. However, the pre-trained SSL does not bring significant improvement. One possible reason is that only hundreds of labeled samples are not enough to pre-train a {\it good} initialization.}
\begin{figure}[htbp]
		\centering
		\subfigure[Strong Interf. with $K=5$.]{
% 			\begin{minipage}[t]{0.5\linewidth}
				\centering
				\includegraphics[width=1.6in]{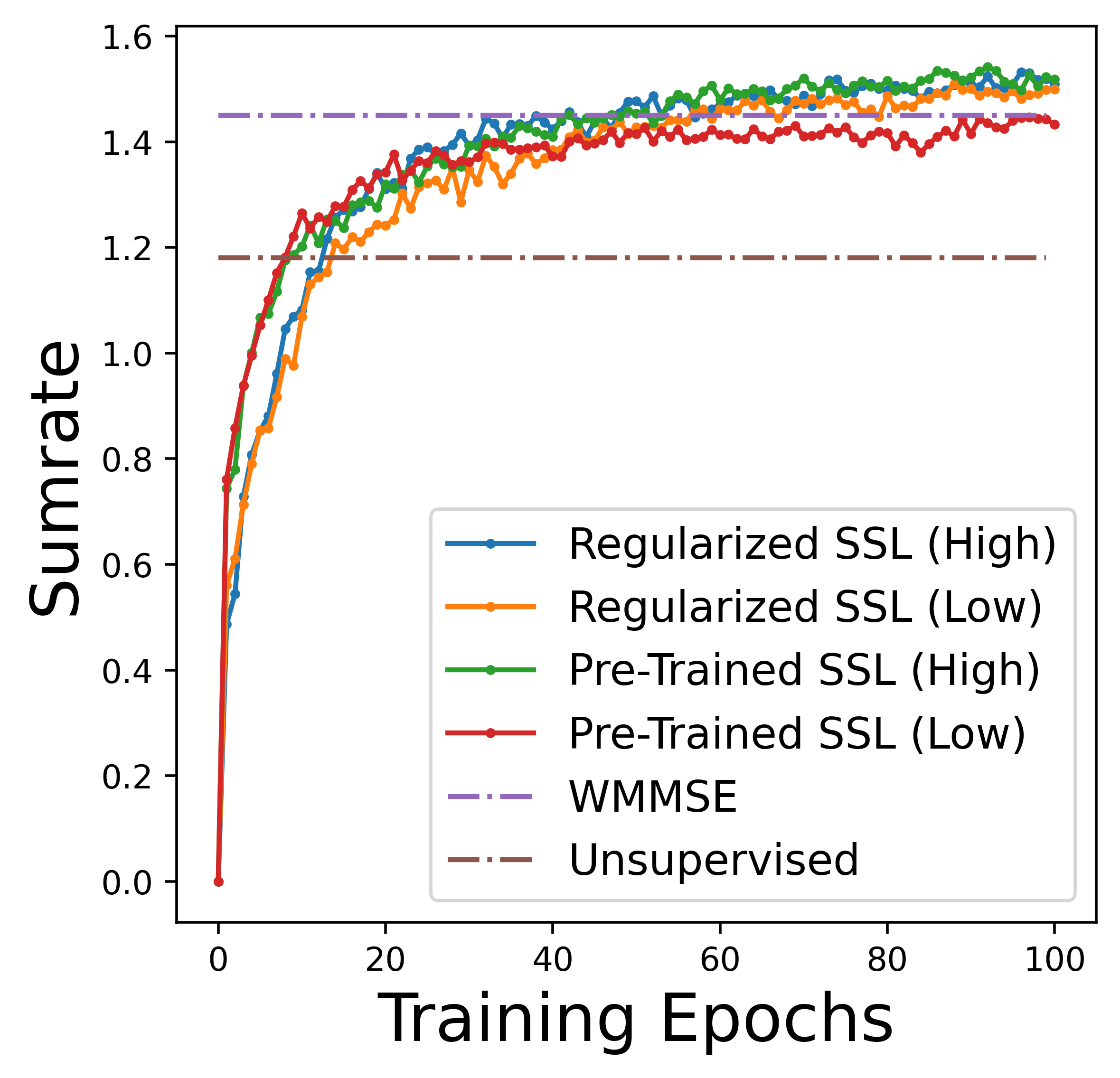}
% 			\end{minipage}%
		}%
		\subfigure[Strong Interf. with $K=10$. ]{
% 			\begin{minipage}[t]{0.5\linewidth}
				\centering
				\includegraphics[width=1.6in]{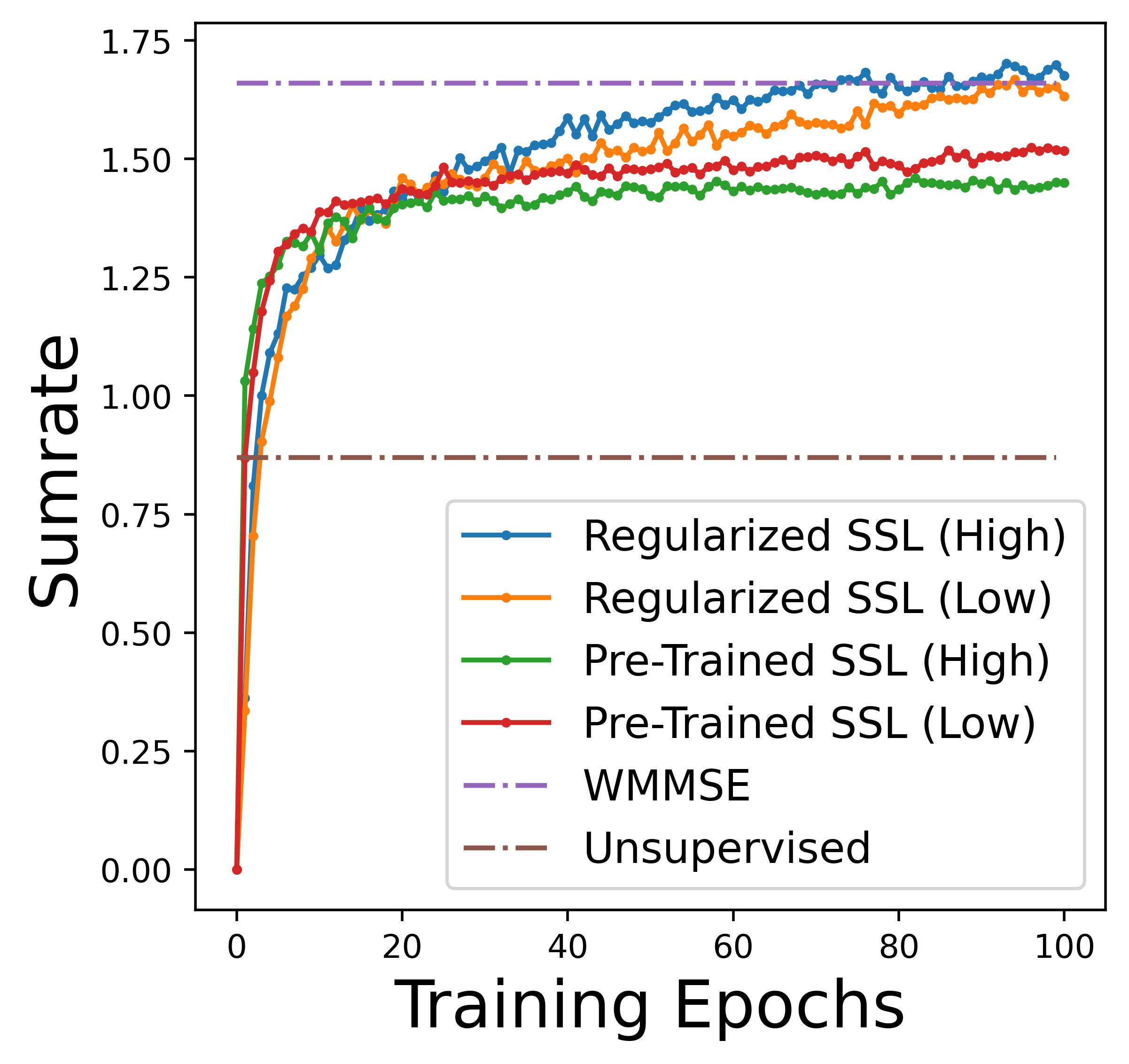}
% 			\end{minipage}%
		}%
		\caption{Sum rate of different approaches in strong interf. scenario.}
		%\vspace{-0.5cm}
		\label{fig3}
\end{figure}

{In the weak interference scenario, we observe that the UL approach can already work well with only a few samples. Hence we set the total number of unlabeled and labeled samples to be $20,000$ and $100$, respectively. The sum rate in this scenario is compared in Table~\ref{tb2}, which shows both the UL and the regularized SSL approaches achieve similar performance.}
\begin{table}[htb]
 \centering
 \caption{\small Sum rate of diff. approaches in weak interf. scenario.}
 \adjustbox{max width=1\columnwidth}{%
 	\begin{tabular}{|l|l|l|}
 		\hline
 		\diagbox{Method}{User Number}                & $K=5$  & $K=10$ \\ \hline
 		{Regularized SSL} & 2.09 (bits/sec) & 2.60 (bits/sec) \\ \hline
 		Unsupervised    & 2.09 (bits/sec) & 2.64 (bits/sec) \\ \hline
		WMMSE           & 2.06 (bits/sec) & 2.74 (bits/sec) \\ \hline
 \end{tabular}}
 \label{tb2}
\end{table}

% \reminder{Why pre-trained SSL is not included in the table? }

{ Next, we gradually increase the number of available labeled samples (high-quality labels) of the regularized SSL approach and compare it with the SL approach in~\eqref{eq:supervised}. Both $10$-user and $20$-user cases are considered, and the results for the strong interference scenario is shown in Fig.~\ref{fig4}. We can observe that slightly increasing labeled samples can significantly improve the performance of the regularized SSL approach. Further, compared with the SL approach with $200,000$ labeled samples, the regularized SSL only requires a few hundreds of labeled samples but achieves better sum rate.} 

% Furthermore, higher-quality labels can produce better performance than lower-quality labels.
	
% In the , we also perform our proposed regularized SSL approach in 5- and 10-user case. The result is shown in , which indiates that in this setting the performance of UL and our proposed  are similar. So in this case, regularization seems not required. An open question is that, in the scenario where UL can already work,  how to best utilize the  labeled data to further improve the performance.

 \begin{figure}[h]
 	\centering
 	\subfigure[Strong Interf. with $K=10$.]{
 		\begin{minipage}[t]{0.5\linewidth}
 			\centering
 		\includegraphics[width=1.6in]{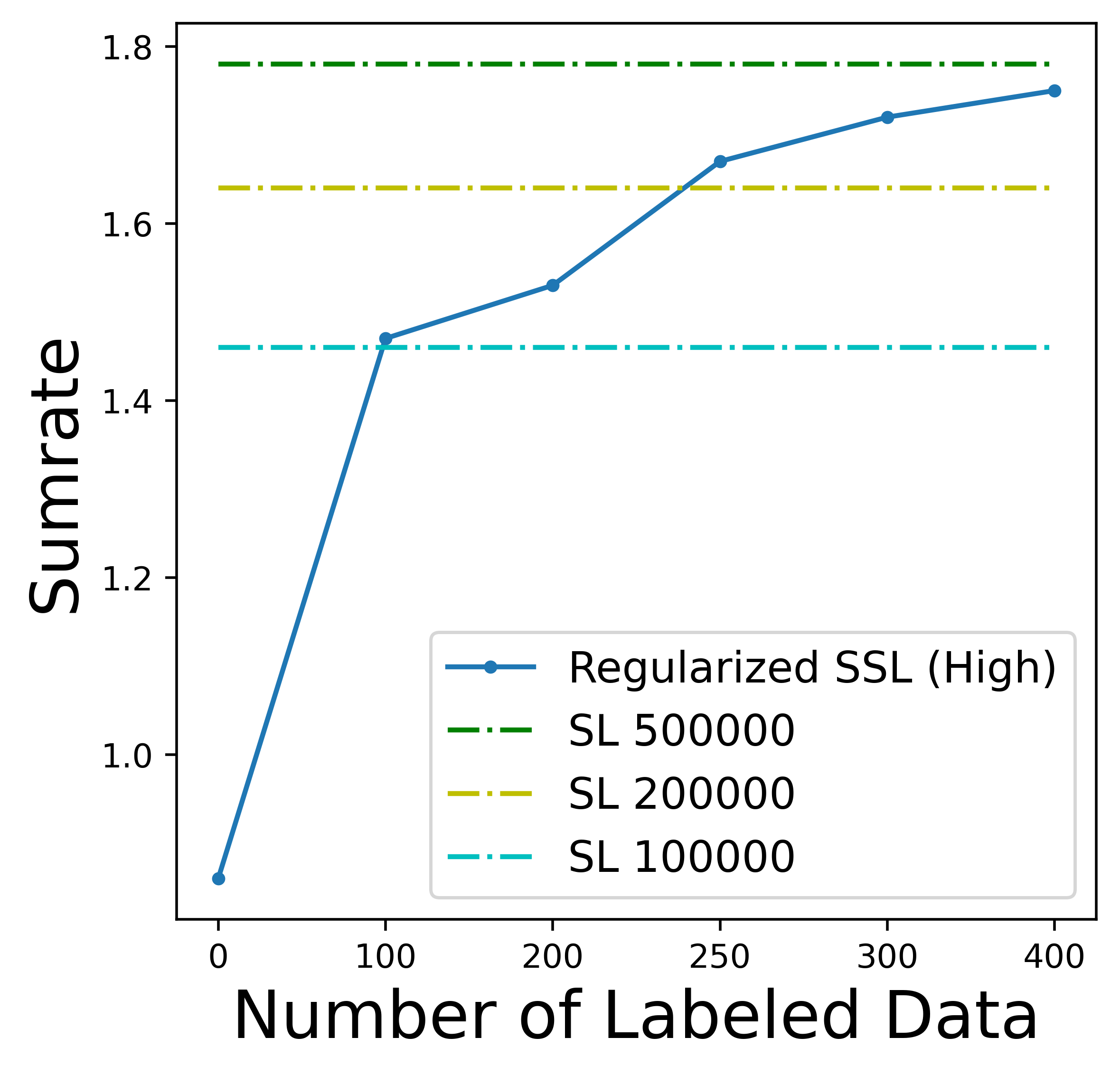}
		\end{minipage}%
 	}%
 	\subfigure[Strong Interf. with $K=20$.]{
 		\begin{minipage}[t]{0.5\linewidth}
 			\centering
 			\includegraphics[width=1.6in]{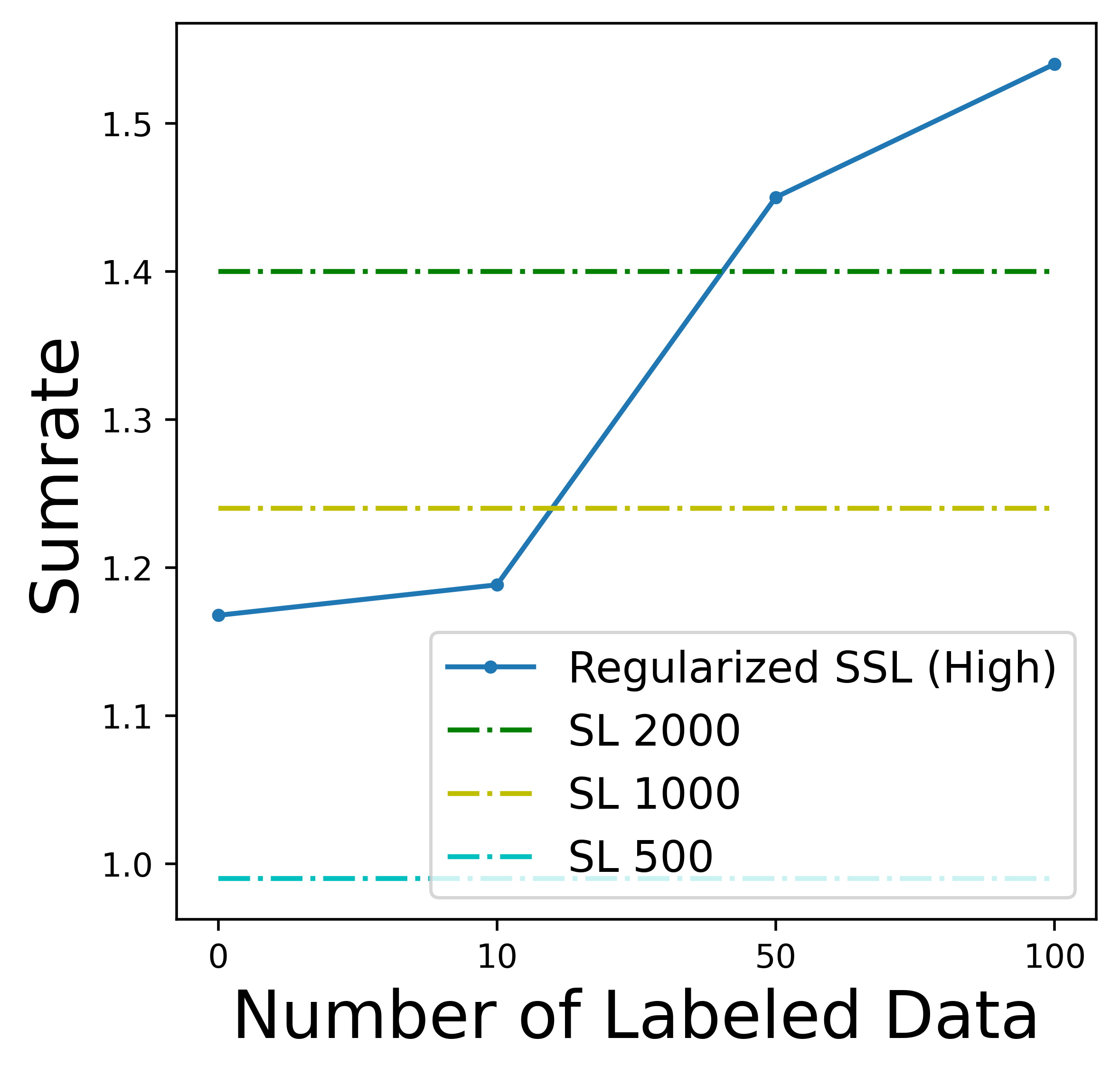}
		\end{minipage}%
 	}%
 	\caption{Sum rate of regularized SSL with diff. number of labeled samples.} %${\red Pre-Trained SSL is based  on  an  initialization  generated  by  the  training over labeled  data and Regularized SSL means our proposed approach. [where are these? you only have regularized SSL in the figure.]} }
 	%\vspace{-0.5cm}
 	\label{fig4}
 \end{figure}

% 	{\red[please be consistent with the algorithm names. We have deifned their names in the beginning of this section, then stick to them.]}
 	%\section{Conclusion}
 	%\label{sec:conclusion}
 	%Existing learning-based power allocation problem may suffer from high cost and failure in training in high interference case. This paper shows the theoretical comparison between supervised learning and unsupervised learning in power allocation and proposed a regularized unsupervised approach to obstacle the difficulty in unsupervised training and save sample cost. 
	
% 		%\vspace{-0.1cm}
	\section{Conclusion}
	%\vspace{-0.1cm}
	This work analyzes the SL  and UL approaches for learning communication systems. It is shown that under certain conditions (such as having access to high-quality labels), SL can exhibit better convergence properties than UL. To our knowledge, this is the first work that rigorously analyzes the relation between these two approaches. While finding high-quality labels is  challenging, we design a proper semi-supervised learning strategy that only requires a few high-quality labels, but still achieves comparable performance.
	
	%In \cite{nguyen2020global}, it is discussed that when two inputs are close, it will take longer to converge. However, the relationship between convergence speed and label quality is not stated. 
	
% 	\newpage
	
% 	%\vspace{12pt}
	%\vspace{-0.1cm}
\bibliographystyle{IEEEtran}
\bibliography{IEEEabrv,IEEEexample}
	
%\newpage

\appendices
% \section{Appendix}\label{appendix}
\section{Proof for Claim \ref{claim:2x2}}\label{app:claim:2x2}

\begin{proof}[Proof]
{ 
\underline{\bf The UL problem \eqref{eq:un-supervised}}. 
% We will follow the two-step approach given in the proof outline of Claim \ref{claim:2x2}. 
Let us construct channel samples $|\bh^{(1)}|$ and $|\bh^{(2)}|$ in the  following way:
\begin{equation}\label{eq:example}
\begin{aligned}
|h_{12}^{(1)}|&=|h_{21}^{(1)}|\gg |h_{22}^{(1)}|>|h_{11}^{(1)}|,\\ 
|h_{12}^{(2)}|&=|h_{21}^{(2)}|\gg |h_{11}^{(2)}|>|h_{22}^{(2)}|.
\end{aligned}
\end{equation}
It is straightforward to verify that the optimal solution of Problem~\eqref{eq:wsr} with instance snapshots $\mathbf{h}^{(1)}$ and $\mathbf{h}^{(2)}$ are $\bar{\mathbf{p}}^{(1)}=(0,1)$ and $\bar{\mathbf{p}}^{(2)}=(1,0)$ respectively~\cite{Gjendemsjo06,Charafeddine09}. Further, suppose that the cross channels $h_{12}^{(n)}$ are strong enough such that the following inequality holds
\begin{align}\label{eq:key:construction}
    \frac{2(2+h_{11}^{(n)})|h_{22}^{(n)}|^2}{|h_{11}^{(n)}|^{2}|h_{12}^{(n)}|^{2}}<1.%\quad\operatorname{min}\left\{\frac{|h_{11}^{(n)}|^2}{(|h_{11}^{(n)}|^2+|h_{12}^{(n)}|^2+1)|h_{21}^{(n)}|^2|h_{22}^{(n)}|^2},\frac{1}{|h_{12}^{(n)}|^2}\right\}>0.
\end{align}
Also, for notation simplicity, define the following short-handed notations:
\begin{align*}
&\mathbf{P}\left(\bTheta,|\mathbf{H}|\right)  : = \left[\mathbf{p}\left(\bTheta,\left|\mathbf{h}^{(1)}\right|\right);\mathbf{p}\left(\bTheta,\left|\mathbf{h}^{(2)}\right|\right)\right]= [p_1^{(1)};p_2^{(1)};p_1^{(2)};p_2^{(2)}],\\
   &\mathbf{P}^{*}   :=\mathbf{P}\left(\bTheta_{\text {local }},|\mathbf{H}|\right) =\left[p_{1}^{(1),*}; p_{2}^{(1),*}; p_{1}^{(2),*}; p_{2}^{(2),*}\right]=[1; 0; 1; 0].    
\end{align*}

\noindent {\bf Proof of Step 1:} Showing \eqref{eq:key:fx} holds true is equivalent to proving $\mathbf{P}^{*}=[1; 0; 1;0]$ is a local minimum. We prove this by using  contradiction.
% To go on with the second step, we first aim to show
% \begin{equation*}
%     f(\mathbf{x}^{*})- f(\mathbf{x}) <0, \mbox{\rm for all feasible $\mathbf{x}$ around $\mathbf{x}^{*}$}
% \end{equation*}
Suppose $\mathbf{P}^{*}$ is not a local minimum, then there exists a feasible $\mathbf{P}\neq \mathbf{P}^{*}$ such that for any $\hat{\mathbf{P}}$ between $\mathbf{P}$ and $\mathbf{P}^*$ inequality $\tilde{f}_{\rm ul}(\hat{\mathbf{P}})<\tilde{f}_{\rm ul}(\mathbf{P}^*)$ always holds. Then, by the continuity of $\widetilde{f}_{\text {ul }}(\mathbf{P})$, we can chose $\mathbf{P}$ sufficiently close to $\mathbf{P}^*$ such that the sign of each component of $\nabla_{\mathbf{P}} \widetilde{f}_{\text {ul }}(\mathbf{P})$ keeps the same as $\nabla_{\mathbf{P}} \widetilde{f}_{\text {ul }}(\hat{\mathbf{P}})$, where $\nabla_{\mathbf{P}} \widetilde{f}_{\text {ul }}(\mathbf{P})$ is the gradient of loss $\widetilde{f}_{\text {ul }}(\mathbf{P})$, defined as
\begin{equation*}
\nabla_{\mathbf{P}} \widetilde{f}_{\text {ul }}(\mathbf{P})=\left(\frac{\partial \widetilde{f}_{\text {ul }}}{\partial p_{1}^{(1)}};\frac{\partial \widetilde{f}_{\text {ul }}}{\partial p_{2}^{(1)}};\frac{\partial \widetilde{f}_{\text {ul }}}{\partial p_{1}^{(2)}}; \frac{\partial \widetilde{f}_{\text {ul }}}{\partial p_{2}^{(2)}}\right).
\end{equation*}
According to the Mean Value Theorem, there exists a feasible $\hat{\mathbf{P}}$ between $\mathbf{P}$ and $\mathbf{P}^*$ that 
\begin{equation}\label{eq:contradiction}
\widetilde{f}_{\text {ul }}(\mathbf{P}^*)-\widetilde{f}_{\text {ul }}(\mathbf{P}) = \langle\nabla_{\mathbf{P}} \tilde{f}_{\text {ul }}(\hat{\mathbf{P}}),\mathbf{P}^*-\mathbf{P}\rangle {>}0.
\end{equation}
Note that the feasibility of $\mathbf{P}$ together with $P_{\max}=1$ implies $p_1^{(n)}\leq p_1^{(n),*}=1,\ p_2^{(n)}\geq p_2^{(n),*}=0,\,n=1,2$. Once we show 
$$\frac{\partial \widetilde{f}_{\text {ul }}}{\partial p_{1}^{(n)}}<0,\ \frac{\partial \widetilde{f}_{\text {ul }}}{\partial p_{2}^{(n)}}> 0,\ n=1,2,$$
then $\left\langle\nabla_{\mathbf{P}} \widetilde{f}_{\text {ul }}(\hat{\mathbf{P}}), \mathbf{P}^{*}-\mathbf{P}\right\rangle< 0$ always holds. This contradicts to~\eqref{eq:contradiction}. Next, we show such a contradiction.

% Since we have assumed $P_{\max}=1$, then $p_1^{(1),*}$ and $p_1^{(2),*}$ both reach the maximal power. It follows that any feasible $p_1^{(1)},p_1^{(2)}$ must satisfy $p_1^{(1)}<p_1^{(1),*},p_1^{(2)}<p_1^{(2),*}$. Similarly, $p_2^{(1),*}$ and $p_2^{(2),*}$ reach the minimal power, so any feasible $p_2^{(1),*}$ and $p_2^{(2),*}$ must satisfy $p_2^{(1)}>p_2^{(1),*},p_2^{(2)}>p_2^{(2),*}$. Further, the gradient of the unsupervised objective w.r.t. $\mathbf{P}$ is given by: 

% If we can show that there exists  a neighborhood around $\mathbf{P}^{*}$ such that
% $\frac{\partial \widetilde{f}_{\text {ul }}}{\partial p_{1}^{(n)}}<0 \text { and } \frac{\partial \widetilde{f}_{\text {ul }}}{\partial p_{2}^{(n)}}>0$ for $n=1,2$, then within this region, there is always $\left\langle\nabla_{\mathbf{P}} \widetilde{f}_{\text {ul }}(\hat{\mathbf{P}}), \mathbf{P}^{*}-\mathbf{P}\right\rangle<0$, which contradicts to \eqref{eq:contradiction}. 

Based on our channel construction, $\widetilde{f}_{\rm ul}(\mathbf{P})$ in~\eqref{eq:un-supervised} becomes: %{\red[shall $f_{unsup}$ is a function of x?]}
% \begin{displaymath}
%\begin{align}\label{eq:un-supervised-close-form}
%f_{\rm unsup}(\bp) &= - \log \left(1+\frac{|{h_{11}^{(1)}}|^2 p_{1 }^{(1)}}{|{h_{12}^{(1)}}|^2 p_{2}^{(1)}+1}\right)\\
%&- \log \left(1+\frac{|{h_{22}^{(1)}}|^2 p_{2}^{(1)}}{|{h_{21}^{(1)}}|^2 p_{1}^{(1)}+1}\right) -\log \left(1+\frac{|{h_{11}^{(2)}}|^2 p_{1}^{(2)}}{|{h_{12}^{(2)}|}^2 p_{2}^{(2)}+1}\right)\\
%&- \log \left(1+\frac{|{h_{22}^{(2)}}|^2 p_{2}^{(2)}}{{|h_{21}^{(2)}}|^2 p_{1}^{(2)}+1}\right).
%\end{align}
\begin{align*}
     \widetilde{f}_{\rm ul}(\mathbf{P})=-\sum_{n=1}^{2}\sum_{k=1}^{2} \log \left(1+\frac{\left|h_{kk}^{(n)}\right|^{2} p_{k}^{(n)}}{\sum_{j \neq k}\left|h_{k j}^{(n)}\right|^{2} p_{j}^{(n)}+1}\right)
\end{align*}
The partial derivatives of $\widetilde{f}_{\rm ul}(\mathbf{P})$ are given by:
\begin{align}\label{eq:f:derivative}
\scriptsize
\left\{\begin{array}{l}
\begin{aligned}
&\frac{\partial \widetilde{f}_{\rm ul}}{\partial p_1^{(n)}}= -\frac{|{h_{11}^{(n)}}|^2}{|{h_{11}^{(n)}}|^2p_1^{(n)}+|{h_{12}^{(n)}}|^2 p_2^{(n)}+1}\\
&\quad \quad+\frac{|{h_{21}^{(n)}}|^2 |{h_{22}^{(n)}}|^2 p_2^{(n)}}{\left(|{h_{21}^{(n)}}|^2 p_1^{(n)}+|{h_{22}^{(n)}}|^2 p_2^{(n)}+1\right)\left(|{h_{21}^{(n)}}|^2p_1^{(n)}+1\right)}
\end{aligned}\\
\begin{aligned}
&\frac{\partial \widetilde{f}_{\rm ul}}{\partial p_2^{(n)}}=-\frac{|{h_{22}^{(n)}}|^2}{|{h_{21}^{(n)}}|^2p_1^{(n)}+|{h_{22}^{(n)}}|^2 p_2^{(n)}+1}\\
&\quad \quad+\frac{|{h_{11}^{(n)}}|^2 |{h_{12}^{(n)}}|^2 p_1^{(n)}}{\left(|{h_{12}^{(n)}}|^2 p_2^{(n)}+|{h_{11}^{(n)}}|^2 p_1^{(n)}+1\right)\left(|{h_{12}^{(n)}}|^2p_2^{(n)}+1\right)}
\end{aligned}
\end{array}\right..
\end{align}
Combining with~\eqref{eq:key:construction}, it is straightforward to verify 
\begin{equation}\label{eq:region}
\begin{aligned}
   &\frac{2(2+h_{11}^{(n)})|h_{22}^{(n)}|^2}{|h_{11}^{(n)}|^{2}|h_{12}^{(n)}|^{2}}< p_1^{(n)} \leq 1 ,\\
    &0\leq  p_2^{(n)} < \operatorname{min}\left\{c_1^{(n)},c_2^{(n)}\right\}, \quad \forall~n=1,2,
\end{aligned}
\end{equation}
where $c_1^{(n)} = (|h_{11}^{(n)}|^2)/((|h_{11}^{(n)}|^2+|h_{12}^{(n)}|^2+1)|h_{21}^{(n)}|^2|h_{22}^{(n)}|^2)$ and $c_2^{(n)} = \frac{1}{|h_{12}^{(n)}|^2}$.
Based on relations in~\eqref{eq:region}, we can show that the gradient expression \eqref{eq:f:derivative} satisfies the following:
%We can show that when $p_1^{(n)}$ and $p_2^{(n)}$ satisfies the condition in \eqref{eq:region}, there is
\begin{align*}
     \frac{\partial \widetilde{f}_{\text {ul }}}{\partial p_{1}^{(n)}} &\leq -\frac{|h_{11}^{(n)}|^{2}}{|h_{11}^{(n)}|^{2} +|h_{12}^{(n)}|^{2}+1} + |h_{21}^{(n)}|^{2}|h_{22}^{(n)}|^{2} p_{2}^{(n)}\\
     &< -\frac{|h_{11}^{(n)}|^{2}}{|h_{11}^{(n)}|^{2} +|h_{12}^{(n)}|^{2}+1}+|h_{21}^{(n)}|^{2}|h_{22}^{(n)}|^{2}\\
     &\quad\times\frac{|h_{11}^{(n)}|^2}{(|h_{11}^{(n)}|^2+|h_{12}^{(n)}|^2+1)|h_{21}^{(n)}|^2|h_{22}^{(n)}|^2}<0\nonumber\\
    \frac{\partial \widetilde{f}_{\text {ul}}}{\partial p_{2}^{(n)}}&\geq -\frac{|h_{22}^{(n)}|^{2}}{|h_{21}^{(n)}|^{2} p_{1}^{(n)}+1}\\
    &\quad+\frac{|h_{11}^{(n)}|^{2}|h_{12}^{(n)}|^{2} p_{1}^{(n)}}{\left(|h_{12}^{(n)}|^{2} p_{2}^{(n)}+|h_{11}^{(n)}|^{2} +1\right)\left(|h_{12}^{(n)}|^{2} p_{2}^{(n)}+1\right)}\\
    &> -|h_{22}^{(n)}|^{2}+\frac{|h_{11}^{(n)}|^{2}|h_{12}^{(n)}|^{2} p_{1}^{(n)}}{2\left( |h_{11}^{(n)}|^{2}+2\right)} >0.
\end{align*}
This is a contradiction to \eqref{eq:contradiction}. Hence $\mathbf{P}^*$ is a local minimum and there exists a region $N_{\epsilon}(\mathbf{P}^*)$ with~\eqref{eq:key:fx} holds.
% That is, there exists a region $N_{\epsilon^*}(\mathbf{P}^*)$ of power allocation around $\mathbf{P}^{*}=[1; 0;  1; 0]$, where $\frac{\partial \widetilde{f}_{\rm ul}}{\partial p_1^{(n)}}<0$ and  $\frac{\partial \widetilde{f}_{\rm ul}}{\partial p_2^{(n)} }> 0$ hold true for $n=1,2$. Then as discussed before, we have a contradiction to \eqref{eq:contradiction}, and the proof of step 1 is completed.  %is clear: A region has already been found such that within this region there is always $f_{\text {unsup }}\left(\mathbf{p}^{*}\right)-f_{\text {unsup }}(\mathbf{p})=\left\langle\nabla_{\mathbf{p}} f_{\text {unsup }}(\hat{\mathbf{p}}), 
%\mathbf{p}^{*}-\mathbf{p}\right\rangle<0$ for all feasible $\mathbf{p}\neq\bp^{*}$. So we can find constant $\epsilon$, within the feasible neighborhood $N_{\epsilon}(\mathbf{p}^{*})$ around $\mathbf{p}^{*}$, $f_{\rm unsup}$ reaches minimum at $\mathbf{p}^{*}$. 
% The first relation holds because of \eqref{eq:key:construction}, and the second relations holds trivially.  %therefore  There always exists eligible $p_2^{(n)}$ that satisfies condition \eqref{eq:region}, while for sufficiently large $|h_{12}^{(n)}|$ in the example we construct in \eqref{eq:example}, there exists  $0<p_1^{(n)}< \frac{2\left(2+h_{11}^{(n)}\right)\left|h_{22}^{(n)}\right|^{2}}{\left|h_{11}^{(n)}\right|^{2}\left|h_{12}^{(n)}\right|^{2}}$.
%\suncomment{do we need to show the above p  indeed exist ?}

\noindent{\bf Proof of Step 2:} Next we show that for every $\widetilde{\bTheta}$ satisfying   $\mathbf{P}\left(\widetilde{\bTheta},|\mathbf{h}|\right)=\mathbf{P}^{*}$, there exists a region $N_{\delta}(\widetilde{\bTheta})$ such that for all $\bTheta \in N_{\delta}(\widetilde{\bTheta})$, the corresponding $\mathbf{P}(\bTheta,|\mathbf{H}|)$ falls in $N_{\epsilon}\left(\mathbf{P}^{*}\right)$.

Let us fix a $\widetilde{\bTheta}$ satisfying $\mathbf{P}^{*}=\widetilde{\bTheta}|\mathbf{H}|$ (linear neural network). Then for the constant $\epsilon>0$ identified in Step 1, define $\delta=\epsilon/\| |\mathbf{H}| \|$ and the region $N_{\delta}(\widetilde{\bTheta})$ as 
$$
N_{\delta}(\widetilde{\bTheta}):=\{ \bTheta \mid \|\bTheta-\widetilde{\bTheta}\|\leq \delta, \mathbf{0}\le \bTheta|\mathbf{H}|\le \mathbf{1} \}.
$$
For any $\bTheta\in N_{\delta}(\widetilde{\bTheta})$, we have
$$\|\mathbf{P}(\bTheta,|\mathbf{H}|)-\mathbf{P}(\widetilde{\bTheta},|\mathbf{H}|)\|\leq \| \bTheta - \widetilde{\bTheta} \| \| |\mathbf{H}|\|  \leq \epsilon,$$
where the Cauchy-Schwarz inequality is used.
This implies $\mathbf{P}(\bTheta,|\mathbf{H}|)\in N_{\epsilon}\left(\mathbf{P}^{*}\right)$ and hence 
$$
f_{\rm ul}({\bTheta})=\widetilde{f}_{\rm ul}(\mathbf{P}(\bTheta,|\mathbf{H}|))>\widetilde{f}_{\rm ul}(\mathbf{P}^{*})=f_{\rm ul}(\widetilde{\bTheta}),\ \forall {\bTheta}\in N_{\delta}(\widetilde{\bTheta}).
$$
By optimality definition, such a $\widetilde{\bTheta}$ is a local minimum of \eqref{eq:un-supervised}.

\noindent\underline{\bf The SL problem \eqref{eq:supervised}.} %the loss function is:
% \begin{align*}
%     f_{\sup}(\bTheta)&=\sum_{k=1}^{2}\left({p}_{k}(\bTheta; |\bh^{(1)}|)-\bar{p}_{k }^{(1)}\right)^{2}%+\left({p}_{2}^{(1)}-\bar{p}_{2 }^{(1)}\right)^{2}
%     +\left({p}_{k}(\bTheta, |\bh^{(2)}|)-\bar{p}_{k }^{(2)}\right)^{2}\\%+\left({p}_{1}^{(2)}-\bar{p}_{1 }^{(2)}\right)^{2}
% \end{align*}
${f}_{\rm sl}(\bTheta)$ is a convex function given as %This can be shown by definition. 
 %For convenience, let us explicitly write down the expression for the output of the neural network as follows %{\red[we should have absolute value around h trhoughout this proof?]}: $$p_{k}^{(n)}=\sum_{k=1}^{2} \sum_{(u v) \in W} \bTheta_{k,(u,v)} |\mathbf{h}_{u, v}^{(n)}|=\bTheta_k\cdot |\widetilde{\bh}^{(n)}|,$$ where $\widetilde{\bTheta}\in {\mathbb R}^{1 \times K^{2}}$ and $\widetilde{\bh}^{(n)}\in \mathbb{R}^{K^{2}\times 1}$  is the vectorized $\bh^{(n)}$. The supervised loss function can be written as:
\vspace{-0.5em}
\begin{equation*}
f_{\rm sl}(\bTheta)=\sum_{n=1}^{2}\|\bTheta|\bh^{(n)}|-\bar{\mathbf{p}}^{(n)}\|^2.
\end{equation*}
Since $\bTheta$ contains 8 scalar parameters, minimizing $f_{\rm sl}(\bTheta)$ is equivalent to  solving the following linear equations, $$\bTheta |\mathbf{h}^{(1)}| = [0; 1],  \quad \bTheta |\mathbf{h}^{(2)}| = [1; 0].$$ 
It follows that as long as the channel realizations are randomly generated so that they are linearly independent, there always exists $\bTheta$ which can predict the true labels. Hence, any optimal solution achieves zero loss.
 %It is clear that the objective is a convex quadratic function of $\bTheta$. 
 
%  Suppose there is another parameter $\bTheta^{'}$, we can show that
 
%  \begin{align*}
%     \frac{1}{2}\left(f_{\rm sup}(\bTheta)+f_{\rm sup}(\bTheta^{'})\right) &=\sum_{n=1}^{2}\sum_{k=1}^{2}\frac{1}{2}\left[(\bTheta_{k}\cdot \widetilde{\bh}^{(n)}-\bar{p}_{k}^{(n)})^2+(\bTheta^{'}_{k}\cdot \widetilde{\bh}^{(n)}-\bar{p}_{k}^{(n)})^2\right]\\
%     &=\sum_{n=1}^{2}\sum_{k=1}^{2}\frac{1}{2}(\bTheta\bTheta^{T}+\bTheta^{'}{\bTheta^{'}}^{T}){\widetilde{\bh}^{(n)}}^{T}\widetilde{\bh}^{(n)}-\bar{p}_k^{(n)}(\bTheta+\bTheta^{'})\cdot\widetilde{\bh}^{(n)}+{\bar{p}_k^{(n)}}^2\\
%     &\geq\frac{1}{4}(\bTheta+\bTheta^{'}){(\bTheta+\bTheta^{'})}^{T}\widetilde{\mathbf{h}}^{(n)^{T}} \widetilde{\mathbf{h}}^{(n)}- \bar{p}_k^{(n)}(\bTheta+\bTheta^{'})\cdot\widetilde{\bh}^{(n)}+{\bar{p}_k^{(n)}}^2\\
%     &=f_{\rm sup}(\frac{1}{2}\bTheta+\frac{1}{2}\bTheta^{'})
%  \end{align*}
     
 %when %$0<p_k^{(n)}<1$. This is easy to verify if we take second order derivative of the function over each $\bTheta_{k,(uv)}$ and it can be showed that the Hessian matrix is positive definite with probability $1$. 

}
 \end{proof}

\section{Proof of Claim {2} }\label{app:claim:zero:loss}
 	%{\red[use ref to refer to the claim number] [we should use one subsection for each claim, as you did in Claim 1. why not having a subsection here and for the next proof?]}}
\begin{proof}[Proof]
{ 
The main idea of the proof is based on analyzing the relation between KKT conditions of problems~\eqref{eq:supervised} and \eqref{eq:un-supervised}. We will show that each KKT solution of problem~\eqref{eq:supervised} is also a KKT solution of problem \eqref{eq:un-supervised} but not the other way.

% as follows. First, we characterize  the set of optimal solutions of SL problem \eqref{eq:supervised}, under the zero training loss conditions. Second, we show that each one of such optimal solution is also a stationary point of UL loss \eqref{eq:un-supervised}, which is a point that satisfies the KKT condition.   Finally, we show that there exists a solution in $\mathcal{B}$ which does not optimize the supervised problem \eqref{eq:supervised}. Let us denote $\theta$ as any single parameter in the collection of parameters.%{\red[do we identify points that the reverse does not hold true? i.e., it is a stationary solution for the latter, but not an optimal solution for the former.][also it is not immediately clear how you show the second point.]}

The KKT condition for the UL problem~\eqref{eq:un-supervised} is that, there is a tuple $(\tilde{\bTheta},\tilde{\boldsymbol{\lambda}},\tilde{\boldsymbol{\mu}})$ such that the following relations hold: %{\red[here in KKT condition we  cannot usefor all $n \in [N]$ $\btheta^*(\bar{p})$, ndeed a notation such as $\tilde{\bTheta}$][this is a generic condition  that has nothing to do with $\bTheta^*(\bar{p})$][Also please do not remove my comments, when I double-check, I can remove those.]} 
 %{\color{red}[and $(u,v)\in$ ??]} {\red[change all the $^*$ for (31) to a different notation, day $\tilde$. the reason is that $\bTheta^*$ has been reserved for the global optimal solution for the supervised learning.]}):
{\small 
\begin{subequations}\label{KKT:UL:pf}
\begin{align}
&\sum\nolimits_{n,k} \left(-\nabla R_k^{(n)}-\tilde{\lambda}_{k}^{(n)}+\tilde{\mu}_{k}^{(n)} \right) \cdot \nabla p_k^{(n)} = \bf 0,\label{KKT:ULa}\\
% \quad \quad -\sum_{n=1}^{N}\sum_{k=1}^{K} \tilde{\lambda}_{k}^{(n)} \cdot \frac{\partial p_k(\tilde{\bTheta};|\bh^{(n)}|)}{\partial \theta}&\\
% \quad \quad  +\sum_{n=1}^{N}\sum_{k=1}^{K} \tilde{\mu}_{k}^{(n)} \cdot \frac{\partial p_k(\tilde{\bTheta};|\bh^{(n)}|)}{\partial \theta}=0 & \\
&\tilde{\lambda}_{k}^{(n)} \geq 0, \tilde{\lambda}_{k}^{(n)} \cdot p_k(\tilde{\bTheta};|\bh^{(n)}|) =0,\ \forall n,k, \label{KKT:ULb}\\ &\tilde{\mu}_{k}^{(n)}  \geq 0,\tilde{\mu}_{k}^{(n)} \cdot \left(p_k(\tilde{\bTheta};|\bh^{(n)}|)-P_{\rm max}\right)=0,\ \forall n,k, \label{KKT:ULc}\\
&{\bf 0}   \leq \mathbf{P}(\tilde{\bTheta};|\mathbf{H}|)\leq {\bf P_{\rm max}}\label{KKT:ULd},
\end{align}
\end{subequations}}
where 
$${\small\nabla R_k^{(n)}:=\frac{\partial R\left(\mathbf{p}\left(\tilde{\bTheta} ;|\mathbf{h}^{(n)}|\right);|\mathbf{h}^{(n)}|\right)}{\partial p_k},\  \nabla p_k^{(n)}:=\frac{\partial p_k(\tilde{\bTheta};|\bh^{(n)}|)}{\partial \tilde{\bTheta}}}.$$
%To begin with, let us denote $[K]:=\{1,2,\cdots,K\}$ and $[N]:=\{1,2,\cdots,N\}$. %Derive the conditions for which the optimal solution $\bTheta^{*}(\bar{\mathbf{p}})$ will satisfy.
%we can equivalently consider the following problem for supervised training:
%\begin{align}\label{eq:sup:alt}
%\min_{\bTheta}\; f_{\rm sup}(\bTheta).    
%\end{align}
%the constraint can be removed \songedit{since the allocated power always stays in the feasible region.} {\red[are we saying that we are considering a different problem?]} 
% Therefore, consider solving the constrained optimization using Lagrange function:
% \begin{equation*}
%   L_{\rm SL}(\mathbf{p}(\mathbf{\Theta} ;|\mathbf{h}^{(n)}|), \boldsymbol{\lambda}, \boldsymbol{\mu})=\sum_{n=1}^N\sum_{k=1}^{K}(p_k^{(n)}(\bTheta;|\mathbf{h}^{(n)}|)-\bar{p}_k^{(n)})^2- \sum_{n=1}^{N}\sum_{k=1}^{K}\lambda_k^{(n)} p_k^{(n)}(\bTheta;|\mathbf{h}^{(n)}|)+\sum_{n=1}^{N}\sum_{k=1}^{K}\mu_k^{(n)}(p_k^{(n)}(\bTheta;|\mathbf{h}^{(n)}|)-P_{\rm max})
%  \end{equation*}
%Then the optimal solution $\bTheta^{*}(\bar{\mathbf{p}})$ for the above problem must satisfy the following condition: % For any  there is
By stationary solution assumption for $\bar{\bp}^{(n)},n \in [N]$, i.e., $\bar{\bp}^{(n)}$ is a stationary solution of WSR problem \eqref{eq:wsr}, there must exist a tuple $(\bar{\bp}^{(n)},\bar{\boldsymbol{\lambda}}^{(n)},\bar{\boldsymbol{\mu}}^{(n)})$ such that for all $n\in[N]$ the following holds true:
%\songedit{Furthermore, since $\bar{\bp}$ is stationary label for the WSR problem \eqref{eq:wsr},i.e, $\bar{\bp}$ is stationary solution of the problem, so there exists feasible $\boldsymbol{\lambda}^{*},\boldsymbol{\mu}^{*}$ that satisfies KKT condition:
%{\red[what are we writing here? problem (1) does not even have n, why you have sum of n in the condition below?]}
% \frac{\partial L^{(n)}_{\rm WSR}\left(\bar{\mathbf{p}}^{(n)}, \bar{\boldsymbol{\lambda}}, \bar{\boldsymbol{\mu}}\right)}{\partial p_k^{(n)}}=
{\small\begin{subequations}\label{eq:KKT}
\begin{align}
&-\partial R(\bar{\mathbf p}^{(n)},|\mathbf{h}^{(n)}|)/\partial p_k -  \bar{\lambda}_{k}^{(n)}+ 
\bar{\mu}_{k}^{(n)}=0,\ \forall k,\label{eq:KKTa}\\
&\bar{\lambda}_{k}^{(n)} \geq 0, \;\bar{\lambda}_{k}^{(n)} \cdot p_k^{(n)} =0,\ \forall k, \label{eq:KKTb}\\ 
&\bar{\mu}_{k}^{(n)}  \geq 0,\;
\;\bar{\mu}_{k}^{(n)} \cdot \left(p_k^{(n)}-P_{\rm max}\right)=0,\ \forall k,\label{eq:KKTc}\\
& 0\leq p_k^{(n)}\leq P_{\rm max},\ \forall k.\label{eq:KKTd}
% {\bf 0}   \leq \mathbf{P}(\bar{\bTheta};|\mathbf{H}|)\leq {\bf P_{\rm max}}.
\end{align}
\end{subequations}}
Now we argue that the tuple $(\bTheta^{*},\bar{\boldsymbol{\lambda}},\bar{\boldsymbol{\mu}})$ (with $\bTheta^{*}\in\mathcal{S}$) satisfies the KKT conditions in \eqref{KKT:UL:pf}. By zero loss condition for $\bTheta^{*}$, the following holds:
\begin{align}\label{eq:zero:loss:pf}
    %\frac{\partial %f_{\rm sup}(\mathbf{\Theta}^{*}(\bar{\mathbf{p}}))}{\partial \bTheta_{k,(u,v)}} & =2\sum_{n=1}^N\sum_{k=1}^{K}(p_k(\bTheta^{*}(\bar{\bp});|\mathbf{h}^{(n)}|)-\bar{p}_k)\frac{\partial p_k}{\partial \bTheta_{k,(u,v)}}=0, \; \; \forall~k\in [K], \; (u,v)\in W,\\
   p_k(\bTheta^{*}(\bar{\bp});|\mathbf{h}^{(n)}|)  =\bar{p}_k^{(n)}, \quad \forall~n \in [N], \; \forall~ k\in [K].
\end{align}
\eqref{eq:zero:loss:pf} together with~\eqref{eq:KKTb}-\eqref{eq:KKTd} immediately imply \eqref{KKT:ULb}-\eqref{KKT:ULd} hold with $\tilde{\bTheta}=\bTheta^{*},\tilde{\lambda}_{k}^{(n)}=\bar{\lambda}_{k}^{(n)},\tilde{\mu}_{k}^{(n)}=\bar{\mu}_{k}^{(n)},\forall n,k$. It remains to verify~\eqref{KKT:ULa} holds. Denote 
\begin{align*}
    \small
    \nabla R_k^{*,(n)}&:=\frac{\partial R\left(\mathbf{p}\left(\bTheta^{*} ;|\mathbf{h}^{(n)}|\right);|\mathbf{h}^{(n)}|\right)}{\partial p_k},\\
    \nabla p_k^{*,(n)}&:=\frac{\partial p_k(\bTheta^{*};|\bh^{(n)}|)}{\partial \bTheta^{*}},
\end{align*}
then by \eqref{eq:zero:loss:pf}, \eqref{KKT:ULa} can be expressed as: 
{\small
\begin{align*}
   % &\frac{\partial L_{\mathrm{UL}}\left(\mathbf{P}\left(\boldsymbol{\Theta}^{*} ;\left|\mathbf{H}\right|\right), \bar{\boldsymbol{\lambda}}, \bar{\boldsymbol{\mu}}\right)}{\partial \bTheta_{k,(u, v)}} \\
    &\sum\nolimits_{n,k} \bigg(-\nabla R_k^{*,(n)} - \bar{\lambda}_{k}^{(n)}+  \bar{\mu}_{k}^{(n)}\bigg)\nabla p_k^{*,(n)}\\
    &= \sum\nolimits_{n,k}\bigg(-\frac{ \partial R(\bar{\mathbf p}^{(n)},|\mathbf{h}^{(n)}|)}{\partial p_k} -  \bar{\lambda}_{k}^{(n)}+ 
    \bar{\mu}_{k}^{(n)}\bigg)\nabla p_k^{*,(n)}=\bf{0},
    % &=\sum_{n=1}^{N}\sum_{k=1}^{K}\frac{\partial L_{\mathrm{WSR}}^{(n)}\left(\bar{\mathbf{p}}^{(n)}, \bar{\boldsymbol{\lambda}}, \bar{\boldsymbol{\mu}}\right)}{\partial p_{k}^{(n)}}\frac{\partial p_{k}\left(\bTheta^{*}(\bar{\bp}) ; |\mathbf{h}^{(n)}|\right)}{\partial \theta}\\
    % &=0
    % &\quad \times\frac{\partial p_{k}\left(\bTheta^{*}(\bar{\bp}) ; |\mathbf{h}^{(n)}|\right)}{\partial \theta}\\
\end{align*}}
where first equality comes from~\eqref{eq:KKTa}.

Finally, it is easy to see that there may exists a solution in $\mathcal{B}$ that is not an optimal solution for \eqref{eq:supervised} (c.f. the example constructed in Claim \ref{claim:2x2}). This is because \eqref{KKT:ULa} in general can not guarantee \eqref{eq:KKTa} holds simultaneously for all $n\in [N]$. This completes the proof.

}
\end{proof}

%\suncomment{------------------- up to here --------------}\\

%{\red[before the statement, re-define the unsupervised objective function here (with last year being relu; later we will need it). ]}

{
\section{$\Lambda_1$ and $\Lambda_2$}\label{app:Lambda}
\noindent\textbf{Expression of $\Lambda_1$ and $\Lambda_2$:} Recall Section\ref{sec:assumption} and  Then $\Lambda_1$ and $\Lambda_2$ are defined as 
\begin{align*}
&\Lambda_1:=
\bigg(\frac{\gamma^{4}}{3}\left(\frac{6}{\gamma^{2}}\right)^{L}\|\mathbf{H}\|_{F} \sqrt{2 f_{\rm SL}\left(\bTheta^{0}\right)}\cdot  \frac{\bar{\lambda}_{3 \rightarrow L}}{(\underline{\lambda}_{3 \rightarrow L})^{2}}\\\nonumber
 &\times e^\frac{{2\left({\frac{3}{2}}\right)^L\cdot{\|\mathbf{H}\|_F\prod_{l=1}^{L}\bar{\lambda}_l}}}{\alpha}\cdot\max \left(\frac{2 \bar{\lambda}_{1}\cdot  \bar{\lambda}_{2}}{\min _{l \in\{3, \ldots, L\}} \underline{\lambda}_{l}\cdot  \bar{\lambda}_{l}}, \bar{\lambda}_{1}, \bar{\lambda}_{2}\right)\bigg)^\frac{1}{2}\\
&\Lambda_2:= \bigg(\frac{2 \gamma^{4}}{3}\left(\frac{6}{\gamma^{2}}\right)^{L}\sigma_{\max }(\mathbf{H})\|\mathbf{H}\|_{F}\times e^{\frac{{2\left({\frac{3}{2}}\right)^L\cdot{\|\mathbf{H}\|_F\prod_{l=1}^{L}\bar{\lambda}_l}}}{\alpha} }\nonumber\\
& \quad \times \sqrt{2 f_{\rm SL}\left(\bTheta^{0}\right)}\cdot  \frac{\bar{\lambda}_{3 \rightarrow L}}{({\underline{\lambda}_{3 \rightarrow L}})^2} \cdot \bar{\lambda}_{2}\bigg)^\frac{1}{3}.\nonumber
\end{align*}

\noindent\textbf{Discussion on Condition~\eqref{ass:square}:} Next, let us the condition~\eqref{ass:square} in Assumption~\ref{ass:init:sl}. In order to satisfy it, we can use the following initialization strategy~\cite{nguyen2020global}. First, initialize $
\left[\mathbf{W}_{1}^{0}\right]_{i j} \sim \mathcal{N}(0,1 / K^2)
$, thus $\underline{\lambda}_{H}$ is strictly positive with probability $1$. Next we will show that $\Lambda_1$ can be made arbitrarily small. Pick
$\left(\mathbf{W}_{l}^{0}\right)_{l=3}^{L}$ such that $\underline{\lambda}_{l} \geq 1 \text { and }(\underline{\lambda}_{l})^{2} \geq c \bar{\lambda}_{l}$, where $c>1$ for $l=3,\cdots,L$. One example is to choose $\mathbf{W}_{l}^{0}$'s as scaled identity matrices, whose top block is scaled identity, that is:
\begin{equation}
    \mathbf{W}_l^{0} = \begin{bmatrix}
   \begin{array}{c}
   c \cdot \mathbf{I}_{n_{l}}  \\ 
   \mathbf{0} 
   \end{array}
 \end{bmatrix} \in \mathbb{R}^{n_{l}\times n_{l-1}}, \; l=3, \cdots, L.
\end{equation}
Then we can upper bound $\bar{\lambda}_2$ and $ f_{\rm SL}(\bTheta^0)$ by constants.
Set $\left[\mathbf{W}_{2}^{0}\right]_{i j} \sim \mathcal{N}(0,v)
$. If $v$ is small enough, then we can upper bound $\bar{\lambda}_2$ with high probability:
\begin{equation}
   \bar{\lambda}_{2} =\frac{2}{3}\cdot\left(1+\sigma_{\max }(W_{2}^{0})\right) \leq 1. \nonumber
\end{equation}
To bound $f_{\rm SL}(\bTheta^0)$, recall the notation in Section~\ref{sec:preliminaries}, we have:
\begin{align}
  \sqrt{2 f_{\rm SL}\left(\bTheta^{0}\right)}  
  &=\|\bP-\bar{\bP}\|_F=\|\mathbf{f}_L-\mathbf{y}\|_2\\\nonumber
  &\leq\|\mathbf{y}\|_{2}+\left\|\mathbf{F}_{L}\left(\bTheta^{0}\right)\right\|_{F}\\\nonumber &\leq\|\mathbf{y}\|_{2}+\prod_{l=1}^{L}\sigma_{\max }(\mathbf{W}_{l}^{0})\|\mathbf{H}\|_{F}. \nonumber 
\end{align}
If $v$ is small enough, then  
\begin{align*}
\|\mathbf{y}\|_{2}+\prod_{l=1}^{L}\sigma_{\max }(\mathbf{W}_{l}^{0})\|\mathbf{H}\|_{F} \leq 2\|\mathbf{y}\|_2
\end{align*}
holds with high probability and hence $\sqrt{2 f_{\rm SL}\left(\bTheta^{0}\right)} \leq 2\|\mathbf{y}\|_2$. 
% \begin{equation}
%     \|\mathbf{y}\|_{2}+\prod_{l=1}^{L}\sigma_{\max }(\mathbf{W}_{l}^{0})\|\mathbf{H}\|_{F} \leq 2\|\mathbf{y}\|_2.
% \end{equation}

% \begin{equation}
    
% \end{equation}
Summarizing the above, \eqref{ass:square} holds with the following sufficient condition:
\begin{align*}
&(\underline{\lambda}_{H})^{2}\cdot\left(\frac{\gamma^{4}}{3}\left(\frac{6}{\gamma^{2}}\right)^{L} 2\|\mathbf{H}\|_{F}\|\mathbf{y}\|_{2}\cdot \max \left(2 \bar{\lambda}_{1}, 1\right)\right)^{-1}\\
&\quad\geq \frac{\bar{\lambda}_{3 \rightarrow L}}{(\underline{\lambda}_{3 \rightarrow L})^{2}}\times e^{\frac{{2\left({\frac{3}{2}}\right)^L\cdot{\|\mathbf{H}\|_F\prod_{l=1}^{L}\bar{\lambda}_l}}}{\alpha} }\\
&\quad \geq (\frac{1}{c})^{L-2}\cdot e^{\frac{{2\left({\frac{3}{2}}\right)^L\cdot{\|\mathbf{H}\|_F\prod_{l=1}^{L}\bar{\lambda}_l}}}{\alpha} }.
\end{align*}
Notice that $\underline{\lambda}_H$ is only dependent on $\mathbf{W}_1^0$ and $\mathbf{H}$, while $\Lambda_1$ depends on the weights of rest layers. The initialization requires that $\underline{\lambda}_H$ is a fixed positive number, while changing the parameters can make $\Lambda_1$ as arbitrarily small. Specifically, this can be done by increasing $c$ and choose a large enough $\alpha$ dependent on $\bar{\lambda}_l$. Thus, there exist large $c$ and small $v$ to satisfy \eqref{ass:square}. Similarly we can show $\Lambda_2$ can be made arbitrarily small.
}

\section{Proof of Claim \ref{thm:zero:loss}}\label{app:thm:zero:loss}
{
Let us re-state our objective function and introduce some useful notations. {Recall notations in~\eqref{eq:nn}, that we define $\mathbf{f}_l = \operatorname{vec}(\mathbf{F}_l),\ \mathbf{w}_l = \operatorname{vec}(\mathbf{W}_l),\ l \in [L]$, and $\mathbf{y} = \operatorname{vec}(\bar{\mathbf{P}})$, which represents the vectorized output, vectorized weight, and the vectorized label, respectively.} 
%Later we will use \eqref{eq:vecsup} as the expression of the unconstrained SL objective function.
% In the UL training problem, recall that we still use the fully connected network with the structure define in \eqref{nn}, which has the following objective function (cf. \eqref{eq:un-supervised}):
% 	\begin{equation}\label{eq:unconstrained-unsupervised}
%  f_{\rm UL}(\mathbf \Theta):=\sum_{n=1}^{N}  - R\left({\mathbf q}(\mathbf \Theta; |\mathbf h^{(n)}|), |\mathbf h^{(n)}|\right).
% 	\end{equation}
For each layer $l,\ l\in[L-1]$, let us use $\mathbf{\Sigma}_l$ to represent the derivative of $a\left(\mathbf{F}_{l-1}\mathbf{W}_l\right)$ w.r.t. $\mathbf{W}_l$ (and similarly for the last layer with activation $b(\cdot)$),
\begin{equation}\label{eq:sigma}
\begin{aligned}
  &\mathbf{\Sigma}_{l}:=\operatorname{diag}\left[\operatorname{vec}\left(a^{\prime}\left(\mathbf{F}_{l-1}\mathbf{W}_l\right)\right)\right] \in \mathbb{R}^{N n_{l} \times N n_{l}},\ l \in [L-1],\\ 
  &\mathbf{\Sigma}_{L}  :=\operatorname{diag}\left[\operatorname{vec}\left(b^{\prime}\left(\mathbf{F}_{L-1}\mathbf{W}_L\right)\right)\right] \in \mathbb{R}^{N n_{L} \times N n_{L}}.  
\end{aligned}
\end{equation}
% Also, we use $\mathbf{\Sigma}_l^m$ to represent the derivative of each layer at iteration $m$. Note that this notation vectorizes the output of all the samples. 
At the $m$th training iteration with $\bTheta^m=\{\mathbf{W}_{l}^m\}_{l=1}^{L}$, denote $\mathbf{F}_{l}^{m}$ as the output of the $l$-th layer for all samples. 
Note that the last layer's output $\mathbf{f}_L$ is a vector function of $\bTheta$, we further denote its Jacobian matrix as:
\begin{align}
   &\mathbf{J}_{L}=\left[\frac{\partial \mathbf{f}_{L}}{\partial \mathbf{w}_1}, \ldots, \frac{\partial \mathbf{f}_{L}}{\partial \mathbf{w}_L}\right],\ \frac{\partial \mathbf{f}_{L}}{\partial \mathbf{w}_l} \in \mathbb{R}^{\left(N n_{L}\right) \times\left(n_{l-1} n_{l}\right)},\ l \in[L]. \nonumber
\end{align}
We use $\nabla_l f_{\rm SL}$ and $\nabla_l f_{\rm UL}$ to denote the partial gradient $f_{\rm SL}$ and $f_{\rm UL}$ w.r.t. $\mathbf{w}_l$ (vectorized weight of the $l$th layer) respectively. Finally, the concatenated allocated power in training is denoted as ${\mathbf{q}}:=(\mathbf{q}^{(1)};\mathbf{q}^{(2)};\cdots;\mathbf{q}^{(N)})$, where each $\mathbf{q}^{(n)}={\mathbf q}(\mathbf \Theta; |\mathbf h^{(n)}|)$ is as
{\small 
\begin{align*}
\frac{\partial {f}_{\rm SL}(\mathbf{q}(\bTheta))}{\text{vec}({\partial {\mathbf{q}}})}&=\mathbf{q}-\mathbf{y}=\mathbf{f}_L-\mathbf{y},\\
\frac{\partial {f}_{\rm UL}(\mathbf{q}(\bTheta))}{{\partial \text{vec}({\mathbf{q}}})}&=(\nabla R_1^{(1)},\nabla R_2^{(1)},\ldots,\nabla R_1^{(N)},\ldots,\nabla R_K^{(N)})^T:=\mathbf{\nabla r},
\end{align*}
}
where $\small\nabla R_k^{(n)}:=-\frac{\partial R\left(\mathbf{q}\left({\bTheta} ;|\mathbf{h}^{(n)}|\right);|\mathbf{h}^{(n)}|\right)}{\partial q_k^{(n)}},\forall k,n$.

Based on these notation, two key lemmas are given below.

\begin{lemma}\label{lemma1}
For $l\in[L]$ the following holds:
\begin{align}
&\nabla_{{l}} f_{\rm SL} =\left(\mathbf{I}_{n_{l}} \otimes \mathbf{F}_{l-1}^{T}\right) \prod_{t=l+1}^{L} \mathbf{\Sigma}_{t-1}\left(\mathbf{W}_{t} \otimes \mathbf{I}_{N}\right)\mathbf{\Sigma}_L(\mathbf{f}_L-\mathbf{y}),\nonumber\\
&\nabla_{{l}} f_{\rm UL} =\left(\mathbf{I}_{n_{l}} \otimes \mathbf{F}_{l-1}^{T}\right) \prod_{t=l+1}^{L} \mathbf{\Sigma}_{t-1}\left(\mathbf{W}_{t} \otimes \mathbf{I}_{N}\right)\mathbf{\Sigma}_L \mathbf{\nabla r}, \nonumber\\
&\frac{\partial \mathbf{f}_{L}}{\partial \mathbf{w}_l}=\mathbf{\Sigma}_L \prod_{t=0}^{L-l-1}\left(\mathbf{W}_{L-t}^{T} \otimes \mathbf{I}_{N}\right) \mathbf{\Sigma}_{L-t-1}\left(\mathbf{I}_{n_{l}} \otimes \mathbf{F}_{l-1}\right) \nonumber.
\end{align}
\end{lemma}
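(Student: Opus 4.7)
The plan is to derive the Jacobian $\partial \mathbf{f}_L/\partial \mathbf{w}_l$ first via backward application of the chain rule through the layered composition, then obtain the two gradient formulas by simple post-multiplication with the appropriate outer derivative of the loss. The only non-trivial ingredient is the vectorization identity $\operatorname{vec}(\mathbf{A}\mathbf{B}\mathbf{C})=(\mathbf{C}^T\otimes\mathbf{A})\operatorname{vec}(\mathbf{B})$ and its two specializations $\operatorname{vec}(\mathbf{A}\mathbf{B})=(\mathbf{B}^T\otimes\mathbf{I})\operatorname{vec}(\mathbf{A})=(\mathbf{I}\otimes\mathbf{A})\operatorname{vec}(\mathbf{B})$, together with the fact that each $\mathbf{\Sigma}_t$ defined in \eqref{eq:sigma} is diagonal, hence symmetric.

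First I would compute the per-layer Jacobians. For any intermediate $t\in[1{:}L-1]$, the relation $\mathbf{F}_t=a(\mathbf{F}_{t-1}\mathbf{W}_t)$ gives $\mathbf{f}_t=a\bigl((\mathbf{W}_t^T\otimes\mathbf{I}_N)\mathbf{f}_{t-1}\bigr)=a\bigl((\mathbf{I}_{n_t}\otimes\mathbf{F}_{t-1})\mathbf{w}_t\bigr)$, and since $a(\cdot)$ is applied coordinatewise, the chain rule yields
\begin{equation*}
\frac{\partial\mathbf{f}_t}{\partial\mathbf{f}_{t-1}}=\mathbf{\Sigma}_t(\mathbf{W}_t^T\otimes\mathbf{I}_N),\qquad
\frac{\partial\mathbf{f}_t}{\partial\mathbf{w}_t}=\mathbf{\Sigma}_t(\mathbf{I}_{n_t}\otimes\mathbf{F}_{t-1}).
\end{equation*}
The same derivation applies at layer $L$ (with $b$ replacing $a$, so that $\mathbf{\Sigma}_L$ collects $b'$). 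Then telescoping the chain rule from layer $L$ back to layer $l$ gives
\begin{equation*}
\frac{\partial\mathbf{f}_L}{\partial\mathbf{w}_l}
=\Bigl[\textstyle\prod_{t=l+1}^{L}\mathbf{\Sigma}_t(\mathbf{W}_t^T\otimes\mathbf{I}_N)\Bigr]\,\mathbf{\Sigma}_l(\mathbf{I}_{n_l}\otimes\mathbf{F}_{l-1}),
\end{equation*}
where the product is taken in the order of decreasing $t$. Reindexing $t\mapsto L-t$ yields exactly the stated form with factors $(\mathbf{W}_{L-t}^T\otimes\mathbf{I}_N)\mathbf{\Sigma}_{L-t-1}$ for $t=0,\ldots,L-l-1$ and a leading $\mathbf{\Sigma}_L$.

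Second I would assemble the two gradients by composing with the outer derivatives. Using the chain rule $\nabla_l f=(\partial\mathbf{f}_L/\partial\mathbf{w}_l)^T\,(\partial f/\partial\mathbf{f}_L)$, and noting from the definition of $f_{\rm SL}$ in \eqref{eq:vecsup} that $\partial f_{\rm SL}/\partial\mathbf{f}_L=\mathbf{f}_L-\mathbf{y}$, while from \eqref{eq:unconstrained-unsupervised} the rate-based loss gives $\partial f_{\rm UL}/\partial\mathbf{f}_L=\nabla\mathbf{r}$, I obtain
\begin{equation*}
\nabla_l f_{\rm SL}=(\mathbf{I}_{n_l}\otimes\mathbf{F}_{l-1}^T)\mathbf{\Sigma}_l\Bigl[\textstyle\prod_{t=l+1}^{L}(\mathbf{W}_t\otimes\mathbf{I}_N)\mathbf{\Sigma}_t\Bigr](\mathbf{f}_L-\mathbf{y}),
\end{equation*}
after transposing the Jacobian, using $(\mathbf{A}\otimes\mathbf{B})^T=\mathbf{A}^T\otimes\mathbf{B}^T$ and $\mathbf{\Sigma}_t^T=\mathbf{\Sigma}_t$. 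A reshuffling of the product index identifies the factor $\mathbf{\Sigma}_l\prod_{t=l+1}^{L}(\mathbf{W}_t\otimes\mathbf{I}_N)\mathbf{\Sigma}_t$ with the claimed expression $\prod_{t=l+1}^{L}\mathbf{\Sigma}_{t-1}(\mathbf{W}_t\otimes\mathbf{I}_N)\mathbf{\Sigma}_L$. The identical calculation with $\mathbf{f}_L-\mathbf{y}$ replaced by $\nabla\mathbf{r}$ yields the $f_{\rm UL}$ formula.

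The only genuine obstacle is bookkeeping: the stated product in the claim has the $\mathbf{\Sigma}$-indices shifted by one relative to the $\mathbf{W}$-index, and the $\mathbf{\Sigma}_L$ is pulled outside the product. I would therefore carefully verify this reindexing by expanding the product term-by-term for $t=l+1,\ldots,L$ and matching each factor. Beyond that, the argument is a routine application of matrix calculus in Kronecker form; no assumption from Section~\ref{sec:assumption} (activation bounds, initialization, widths) is needed at this stage, since the lemma is a purely structural identity.
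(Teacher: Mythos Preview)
Your proposal is correct; the paper itself does not give a self-contained proof but simply cites \cite[Lemma~4.1]{nguyen2020global} for the SL gradient and Jacobian and notes that the UL gradient follows by the same computation with $\nabla\mathbf{r}$ in place of $\mathbf{f}_L-\mathbf{y}$. Your chain-rule derivation using the vectorization identity $\operatorname{vec}(\mathbf{A}\mathbf{B})=(\mathbf{B}^T\otimes\mathbf{I})\operatorname{vec}(\mathbf{A})=(\mathbf{I}\otimes\mathbf{A})\operatorname{vec}(\mathbf{B})$ and the diagonality of each $\mathbf{\Sigma}_t$ is exactly the standard argument behind that cited result, and your index-shift bookkeeping correctly reconciles the two presentations of the product.
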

The above lemma provides expressions of the gradient of both objective functions as well as the Jacobian matrix. The gradient of SL is from \cite[Lemma 4.1]{nguyen2020global} while the gradient of UL is slightly modified because the objective function is different.

\begin{lemma}\label{lemma2}
Suppose Assumptions~\ref{ass:width} and~\ref{ass:activation} hold. Then, for any $\mathbf{W}_{l},\ l\in[L]$, the following holds
{\small
\begin{align}
\left\|\mathbf{F}_{l}\right\|_{F}  &\leq\|\mathbf{H}\|_{F} \prod_{t=1}^{l}\sigma_{\max}(\mathbf{W}_{t}),\ ~l \in[L-1],\label{eq:upp1}\\
\left\|\mathbf{F}_{L}\right\|_{F}  &\leq (1+\alpha)\sqrt{Nn_L},\label{eq:upp2}\\
\left\|\nabla_{{l}} f_{\rm SL}\right\|_2 &\leq\|\mathbf{H}\|_{F} \prod_{t=1, t \neq l}^{L}\sigma_{\max}(\mathbf{W}_{t})\|\mathbf{f}_L-\mathbf{y}\|_{2}, \ ~l\in [L], \label{SLgradbound}\\
\left\|\nabla_{{l}} f_{\rm UL}\right\|_2 &\leq\|\mathbf{H}\|_{F} \prod_{t=1, t  \neq l}^{L}\sigma_{\max}(\mathbf{W}_{t})\left\| \mathbf{\nabla r}\right\|_{2}, \ l\in [L].\label{ULgradbound}
\end{align}
}
Furthermore, denote $\bTheta=\left(\mathbf{W}_{l}\right)_{l=1}^{L}, \bTheta^\prime=\left(\mathbf{W}_{l}^\prime\right)_{l=1}^{L}$, then for $l \in[L]$, the following inequalities hold,
\begin{equation}\label{eq:lipconstant}
\begin{aligned}
\left\|\mathbf{F}_{L}-\mathbf{F}_{L}^\prime\right\|_{F} &\leq c_1 \left\|\bTheta-\bTheta^\prime\right\|_{F},\\ \left\|\frac{\partial \mathbf{f}_{L}}{\partial \mathbf{w}_l}-\frac{ \partial \mathbf{f}_{L}^\prime}{\partial \mathbf{w}_l^\prime}\right\|_{2}\nonumber &\leq c_2\left\|\bTheta-\bTheta^\prime\right\|_{F}.
\end{aligned}
\end{equation}
where 
{\small 
\begin{align*}
&c_1 = \sqrt{LNn_L}\|\mathbf{H}\|_{F} \frac{\prod_{l=1}^{L} \bar{\lambda}_{l}}{\min _{l \in[L]} \bar{\lambda}_{l}},\ 
c_2=\sqrt{L}\|\mathbf{H}\|_{F} R\left(1+(L \beta + 1) \|\mathbf{H}\|_F R\right),\\
&\bar{\lambda}_{l} = \max \left(\sigma_{\max}(\mathbf{W}_{l}),\sigma_{\max}(\mathbf{W}_{l}^\prime)\right),l\in[L],\ R=\prod_{l=1}^{L} \max \left(1, \bar{\lambda}_{l}\right).
\end{align*}
}
\end{lemma}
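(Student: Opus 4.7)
The plan is to prove the six inequalities in Lemma~\ref{lemma2} by a combination of induction on layer depth, operator-norm bounds on the factors appearing in the Jacobian expressions from Lemma~\ref{lemma1}, and a telescoping perturbation argument for the Lipschitz constants.

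First I would establish the forward norm bounds \eqref{eq:upp1} and \eqref{eq:upp2}. For the hidden-layer inequality I proceed by induction on $l$: the base case $l=0$ reduces to $\mathbf{F}_0=\mathbf{H}$; for the step, the property $|a(x)|\leq |x|$ from Assumption~\ref{ass:activation} gives $\|a(\mathbf{F}_{l-1}\mathbf{W}_l)\|_F \leq \|\mathbf{F}_{l-1}\mathbf{W}_l\|_F \leq \|\mathbf{F}_{l-1}\|_F\,\sigma_{\max}(\mathbf{W}_l)$, which closes the induction. For the last layer, the output of $\mathrm{SCReLu}$ defined in \eqref{eq:soft} lies in $[-\alpha, P_{\max}+\alpha]$; with $P_{\max}=1$ each entry of $\mathbf{F}_L$ is bounded in absolute value by $1+\alpha$, and summing $Nn_L$ squared entries yields $\|\mathbf{F}_L\|_F \leq (1+\alpha)\sqrt{Nn_L}$.

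Next I would derive the gradient bounds \eqref{SLgradbound}--\eqref{ULgradbound} by taking operator norms termwise in the expressions from Lemma~\ref{lemma1}. Standard Kronecker identities give $\|\mathbf{I}_{n_l} \otimes \mathbf{F}_{l-1}^T\|_{\rm op} = \|\mathbf{F}_{l-1}\|_{\rm op} \leq \|\mathbf{F}_{l-1}\|_F$ and $\|\mathbf{W}_t \otimes \mathbf{I}_N\|_{\rm op} = \sigma_{\max}(\mathbf{W}_t)$. The bound $a'(x)\in[\gamma,1]$ from Assumption~\ref{ass:activation} gives $\|\mathbf{\Sigma}_{t-1}\|_{\rm op} \leq 1$ for the hidden layers, and a direct differentiation of \eqref{eq:soft} shows $|\mathrm{SCReLu}'(x)|\leq 1$ for all $x$, so $\|\mathbf{\Sigma}_L\|_{\rm op} \leq 1$. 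Multiplying these operator-norm bounds along with the forward bound on $\|\mathbf{F}_{l-1}\|_F$ from the previous step produces \eqref{SLgradbound}; the UL case is identical with $\mathbf{f}_L - \mathbf{y}$ replaced by $\nabla \mathbf{r}$.

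For the Lipschitz bound $\|\mathbf{F}_L - \mathbf{F}_L'\|_F \leq c_1 \|\bTheta - \bTheta'\|_F$, I would set up a layer-wise recursion for $\Delta_l := \|\mathbf{F}_l - \mathbf{F}_l'\|_F$. Using the $1$-Lipschitzness of $a(\cdot)$ (implied by $a'(x)\in[\gamma,1]$) together with the decomposition $\mathbf{F}_{l-1}\mathbf{W}_l - \mathbf{F}_{l-1}'\mathbf{W}_l' = (\mathbf{F}_{l-1}-\mathbf{F}_{l-1}')\mathbf{W}_l + \mathbf{F}_{l-1}'(\mathbf{W}_l-\mathbf{W}_l')$, one obtains $\Delta_l \leq \bar{\lambda}_l \Delta_{l-1} + \|\mathbf{F}_{l-1}'\|_F\,\|\mathbf{W}_l-\mathbf{W}_l'\|_F$. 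Unrolling this recursion with $\Delta_0 = 0$ and substituting the forward bound $\|\mathbf{F}_{l-1}'\|_F \leq \|\mathbf{H}\|_F \prod_{t<l}\bar{\lambda}_t$ gives $\Delta_L \leq \|\mathbf{H}\|_F \sum_{l=1}^L \bigl(\prod_{t\neq l}\bar{\lambda}_t\bigr)\|\mathbf{W}_l-\mathbf{W}_l'\|_F$, after which Cauchy--Schwarz together with $\sqrt{\sum_l \|\mathbf{W}_l-\mathbf{W}_l'\|_F^2} = \|\bTheta-\bTheta'\|_F$ collapses the sum into the stated constant $c_1$.

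The main obstacle is the Lipschitz bound on the Jacobian $\partial \mathbf{f}_L/\partial \mathbf{w}_l$, which by Lemma~\ref{lemma1} is an alternating product of diagonal factors $\mathbf{\Sigma}_t$, weight Kronecker factors $\mathbf{W}_{L-t}^T \otimes \mathbf{I}_N$, and a final factor $\mathbf{I}_{n_l}\otimes \mathbf{F}_{l-1}$, all of which depend on $\bTheta$. I would bound the difference of two such products by a telescoping sum in which each factor is replaced one at a time, using the uniform operator-norm bound $R = \prod_l \max(1,\bar{\lambda}_l)$ on the remaining unchanged factors. The key new ingredient is a Lipschitz estimate for $\mathbf{\Sigma}_t$ as a function of $\bTheta$: the $\beta$-Lipschitzness of $a'$ from Assumption~\ref{ass:activation} together with a direct check that $\mathrm{SCReLu}'$ is itself $(1/\alpha)$-Lipschitz yields $\|\mathbf{\Sigma}_t - \mathbf{\Sigma}_t'\|_{\rm op} \leq \beta \|\mathbf{F}_{t-1}\mathbf{W}_t - \mathbf{F}_{t-1}'\mathbf{W}_t'\|_F$, which is in turn controlled by the $c_1$-style argument above (producing an additional $\|\mathbf{H}\|_F R$ factor). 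The delicate bookkeeping is to collect $L$ weight-Kronecker perturbation terms, $L$ activation perturbation terms (each carrying the extra $\beta\|\mathbf{H}\|_F R$), and one $\mathbf{F}_{l-1}$ perturbation term, and then to apply Cauchy--Schwarz to bound the sum by $\|\bTheta-\bTheta'\|_F$ times the stated expression $c_2 = \sqrt{L}\|\mathbf{H}\|_F R\bigl(1+(L\beta+1)\|\mathbf{H}\|_F R\bigr)$; the ``$1$'' in $(L\beta+1)$ tracks the combined weight and $\mathbf{F}_{l-1}$ perturbations, while $L\beta$ tracks the $\mathbf{\Sigma}_t$ perturbations.
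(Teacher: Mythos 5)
Your proposal is correct and follows essentially the same route as the paper, which does not write out a proof of this lemma but instead invokes \cite[Lemma 4.2]{nguyen2020global} and only describes the adaptations needed for the SCReLU output layer (the $(1+\alpha)\sqrt{Nn_L}$ bound on $\|\mathbf{F}_L\|_F$, replacing $\mathbf{f}_L-\mathbf{y}$ by $\nabla\mathbf{r}$ for UL, and adjusting the Lipschitz constants); your forward induction, termwise operator-norm bounds on the Jacobian factors, and telescoping perturbation argument are precisely the standard machinery behind that cited result. Two harmless remarks: your $c_1$ derivation actually yields the tighter constant $\sqrt{L}\|\mathbf{H}\|_F\prod_l\bar{\lambda}_l/\min_l\bar{\lambda}_l$ (the stated $c_1$ carries an extra $\sqrt{Nn_L}$, so your bound still implies the lemma), and in the $c_2$ bookkeeping you should note that absorbing the last layer's $\mathbf{\Sigma}_L$ perturbation into the $L\beta$ term implicitly requires the $(1/\alpha)$-Lipschitz constant of $\mathrm{SCReLu}'$ to be dominated by $\beta$, a constant-tracking detail the paper also glosses over.
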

First, Lemma~\ref{lemma2} provides an upper bound for the output of each layer (see~\eqref{eq:upp1} and~\eqref{eq:upp2}) as well as for the partial gradient of each layer (see~\eqref{SLgradbound} and~\eqref{ULgradbound}). Second, it shows how smooth the DNN mapping and the corresponding gradient are (see~\eqref{eq:lipconstant}). This smoothness property is important for showing the convergence statement in our claim. 
Note that this lemma is slightly different from \cite[Lemma 4.2]{nguyen2020global} since we need to adapt to the last layer with additional SCReLU activation. The output range is $[-\alpha,1+\alpha]$, so the bound of  ${\|\mathbf{F}_L\|}_F$ in \eqref{eq:upp2} would be different. The upper bound of ${\|\nabla_{l}f_{\rm SL}\|}_2$ in~\eqref{SLgradbound} is directly applied from \cite[Lemma 4.2]{nguyen2020global}, while $\|\nabla_{l}f_{\rm UL}\|_2$ in \eqref{ULgradbound} is slightly modified because the objective is different. Further, in \eqref{eq:lipconstant}, the bound is adapted to our setting with SCReLu activation function in the last layer. %{\red[first if it is different, then we should not directly cite 20 lemma 4.2. we should comment on the differences and similarities after the statement of the lemma. also if it is different, do we have proof? ]}

% {\red[cite for this lemma.]}
% Next we state a well-known result.
% \begin{lemma}\label{lemma:lip}
% Let $f: \mathbb{R}^{n} \rightarrow \mathbb{R}$ be a $C^{2}$ function. For any $x, y \in \mathbb{R}^{n}$ be given, and assume that $\|\nabla f(y)-\nabla f(x)\|_{2} \leq C\|y-x\|_{2}$. Then the following holds:
% \begin{align}\label{eq:lip}
%     f(y) \leq f(x)+\langle\nabla f(x), y-x\rangle+\frac{C}{2}\|x-y\|^{2} .
% \end{align}
% \end{lemma} 

%\begin{lemma}\label{lemma3}
%Let $f: \mathbb{R}^{n} \rightarrow \mathbb{R}$ be a $C^{2}$ function. For any $x, y \in \mathbb{R}^{n}$ be given, and assume that $\|\nabla f(y)-\nabla f(x)\|_{2} \leq C\|y-x\|_{2}$. Then,
%\begin{align}\label{eq:lip}
%    f(y) \leq f(x)+\langle\nabla f(x), y-x\rangle+\frac{C}{2}\|x-y\|^{2} .
%\end{align}
%\end{lemma} 
We are now ready to prove Claim \ref{thm:zero:loss}.

\noindent {\bf Proof of Claim \ref{thm:zero:loss}.} Consider $P_{\max}=\sigma=1$ for simplicity. 

\noindent\underline{\bf SL Training~\eqref{eq:SL:GD}:} Conclusion for SL training in Claim 3 can be derived by slightly modifying from~\cite[Theorem 3.2]{nguyen2020global}, the difference is that we have used SCReLu activation function in the last layer. %which results in : First, the output is within $[-\alpha,1+\alpha]$ rather than $[0,1]$. Second, lower bound for the decrease amount in each iterat
%ion is different. 
In order to ensure linear convergence, we need to provide a strictly positive lower bound for $\|\frac{\partial \mathbf{f}_{L}}{\partial \mathbf{w}_l }\|_2$, that is, we need to find a constant $\alpha_0$, such that 
\begin{align}\label{eq:lower:bound}
   \left\|\frac{\partial \mathbf{f}_{L}}{\partial\mathbf{w}_l}\right\|_2^2\geq \alpha_0\|\mathbf{f}_L-\mathbf{y}\|_2^2. 
\end{align}
We show that for the derivative of the last layer $\mathbf{\Sigma}_L$, if $\| \mathbf{F}_{L-1}\mathbf{W}_L\|_F$ is bounded, then its smallest singular value is also lower bounded. Thus we can find a positive $\alpha_0$ for~\eqref{eq:lower:bound}.
%$\left\|\frac{\partial \mathbf{f}_{L}}{\partial \mathbf{w}_l}\right\|_F^2$.

More specifically, by Lemma \ref{lemma2}, we have:
\begin{equation}
    \|\mathbf{F}_{L-1} \mathbf{W}_{L}\|_F\leq \|\mathbf{H}\|_{F} \prod_{l=1}^{L}\sigma_{\max}(\mathbf{W}_{l}):=A.
\end{equation}
Then denote $\mathbf{G}=\mathbf{F}_{L-1} \mathbf{W}_{L}$, we have $\mathbf{G}_{ij}\in [-A,A]$. Now we write down the derivative of the SCReLU function:
\begin{equation*}
	b^\prime(\mathbf{G}_{ij}) = \left\{\begin{array}{ll}
			e^{\frac{\mathbf{G}_{ij}}{\alpha}}  & \mathbf{G}_{ij} < {0}\\
			1   & \mathbf{G}_{ij}\in [{0},\;{1}]\\
			e^{\frac{1-\mathbf{G}_{ij}}{\alpha}}\; & \mathbf{G}_{ij}>1.
		\end{array}\right.
	 	%\vspace{-0.1cm}
	 	\end{equation*}
Thus we have 
\begin{equation}\label{derivativebound}
    b^\prime(\mathbf{G}_{ij})\geq e^{-\frac{A}{\alpha}}.
\end{equation}
This implies that:
\begin{align}\label{defineB}
    \sigma_{\min}^2(\mathbf{\Sigma}_{L})& \stackrel{\eqref{eq:sigma}}=\min_{i,j}\{ |b^\prime(\mathbf{G}_{ij})|^2\}
    \stackrel{\eqref{derivativebound}}\geq \left(e^{-\frac{A}{\alpha}}\right)^2:= B.
\end{align}
    % \|\operatorname{vec}\left(b^{\prime}\left(\mathbf{F}_{L-1} \mathbf{W}_{L}\right)\right)\|_2^2\nonumber\\
    % &=\|\operatorname{diag}\left[\operatorname{vec}\left(b^{\prime}\left(G\right)\right)\right]\|_2^2\nonumber\\
%The above inequality shows that every entry in $b'(G)$ is lower bounded. So the largest eigen value of $\Sigma_L$ is lower bounded.
If we choose $\alpha:=\bar{\alpha}_H>
\left(\frac{3}{2}\right)^{L} \cdot\|\mathbf{H}\|_{F} \prod_{l=1}^{L} \bar{\lambda}_{l}
$, we have: 
\begin{align}
\scriptsize
    \|\nabla_{{2}} f_{\mathrm{SL}}\|_2^2
    &\stackrel{(i)}=\|\mathbf{I}_{n_{2}} \otimes \mathbf{F}_{1}^{T} \prod_{t=3}^{L} \mathbf{\Sigma}_{t-1}\left(\mathbf{W}_{t} \otimes \mathbf{I}_{N}\right) \mathbf{\Sigma}_{L} \cdot (\mathbf{f}_L-\mathbf{y})\| _2^2 \nonumber\\
    &\stackrel{(ii)}\geq \sigma^2_{\min}( \mathbf{F}_1^{T})\cdot\prod_{t=3}^{L}\sigma^2_{\min}(\mathbf{\Sigma}_{t-1}) \sigma^2_{\min}(\mathbf{W}_{t})\nonumber\\
    &\quad \times \sigma^2_{\min}(\mathbf{\Sigma}_L) \times \|\mathbf{f}_L-\mathbf{y}\|_2^2\nonumber\\
    &\stackrel{(iii)}\geq (\frac{1}{2}\underline{\lambda}_H)^2\cdot \prod_{t=3}^{L}\gamma^2(\frac{1}{2}\underline{\lambda}_t)^2 \cdot e^{-2}\cdot\|\mathbf{f}_L-\mathbf{y}\|_2^2\nonumber\\
    &= e^{-2}\gamma^{2(L-2)}\left(\frac{1}{2}\right)^{2(L-1)} (\underline{\lambda}_{3 \rightarrow L})^2(\underline{\lambda}_{H})^{2}\left\|\mathbf{f}_{L}-\mathbf{y}\right\|_{2}^{2}\nonumber
\end{align}
where $(i)$ follows from Lemma \ref{lemma1}; $(ii)$ is by \cite[Lemma 4.1(3)]{nguyen2020global}, which provides a lower bound for $\|\nabla_{{2}} f_{\mathrm{SL}}\|_2^2$; $(iii)$ is because $B$ defined in \eqref{defineB} satisfies $B\leq e^{-2}$ and  the smallest singular value of weight in each layer is lower bounded.
Now we have shown that \eqref{eq:lower:bound} holds true with 
\begin{equation}\label{eq:alpha0}
\alpha_0=e^{-2}\gamma^{2(L-2)}\left(\frac{1}{2}\right)^{2(L-1)} (\underline{\lambda}_{3 \rightarrow L})^2 (\underline{\lambda}_{H})^{2}.
\end{equation}

Then using \eqref{eq:lower:bound} we can analyze the GD iteration. We can show $\nabla f_{\rm SL}$ is Lipschitz continuous by bounding the largest singular values of weight matrices in each layer~\cite{nguyen2020global}. To be specific, there is a positive constant $Q_0$ such that
\begin{equation*}
    \left\|\nabla f_{\rm SL}\left(\bTheta^{m+1}\right)-\nabla f_{\rm SL}\left(\bTheta^{m}\right)\right\|_{2} \leq Q_{0}\left\|\bTheta^{m+1}-\bTheta^{m}\right\|_{2}
\end{equation*}
If we choose step size $\eta<\frac{1}{Q_0}$, we have
\begin{equation}
\small
\begin{aligned}
&f_{\rm SL}\left(\bTheta^{m+1}\right)\\
&\leq f_{\rm SL}\left(\bTheta^{m}\right)+\left\langle\nabla f_{\rm SL}\left(\bTheta^{m}\right), \bTheta^{m+1}-\bTheta^{m}\right\rangle+\frac{Q_{0}}{2}\left\|\bTheta^{m+1}-\bTheta^{m}\right\|_{2}^{2} \\
&=f_{\rm SL}\left(\bTheta^{m}\right)-\eta\left\|\nabla f_{\rm SL}(\bTheta^{m})\right\|_{2}^{2}+\frac{Q_{0}}{2} \eta^{2}\left\|\nabla f_{\rm SL}(\bTheta^{m})\right\|_{2}^{2} \\
&\leq f_{\rm SL}\left(\bTheta^{m}\right)-\frac{1}{2} \eta\left\|\nabla f_{\rm SL}\left(\bTheta^{m}\right)\right\|_{2}^{2} \quad (\eta<\frac{1}{Q_0})\\
&= f_{\rm SL}\left(\bTheta^{m}\right)-\frac{1}{2} \eta\sum_{l=1}^L\left\|\nabla_{{l}} f_{\rm SL}\left(\bTheta^{m}\right)\right\|_{2}^{2}\\
&\leq f_{\rm SL}\left(\bTheta^{m}\right)-\frac{1}{2} \eta\left\|\nabla_{{2}} f_{\rm SL}\left(\bTheta^{m}\right)\right\|_{2}^{2}\quad (l=2)\\
&\leq  f_{\rm SL}\left(\bTheta^{m}\right)-\frac{1}{2}\eta \alpha_0\left\|\mathbf{f}_{L}^{m}-\mathbf{y}\right\|_{2}^2 \quad (\textrm{by }\eqref{eq:lower:bound})\nonumber\\
&= f_{\rm SL}\left(\bTheta^{m}\right)\left(1-\eta \alpha_{0}\right).
\end{aligned}
\end{equation}
This implies $f_{\rm SL}$ converges to zero at geometric rate. Finally, note $\{\bTheta^m\}_{m=1}^{\infty}$ is a Cauchy sequence, so its limit exists: 
\begin{equation*}
    \lim _{m  \rightarrow \infty}\bTheta^m = \bTheta^*_{\rm SL}.
\end{equation*}
Then, by continuity of $f_{\rm SL}$, we have
\begin{equation*}
    f_{\rm SL}\left(\bTheta^{*}_{\rm SL}\right)=f_{\rm SL}\left(\lim _{m \rightarrow \infty} \bTheta^{m}\right)=\lim _{m  \rightarrow \infty} f_{\rm SL}\left(\bTheta^{m}\right)=0.
\end{equation*}

%{\red[it is not enough to say that the idea is similar; we need to say where is the difference and challenge that requires a new proof.]}
\noindent\underline{\bf UL Training~\eqref{eq:UL:GD}:} For UL training with GD update \eqref{eq:UL:GD}, we first state our sketch of the proof. %The idea is similar to \cite[Theorem 3.2]{nguyen2020global}.
It is important to note that the first part of the proof cannot be used anymore because the objective function is no longer the squared loss, but the sum-rate \eqref{eq:unconstrained-unsupervised}. This function has a more complex structure, because it is no longer  strictly convex over the output of the neural network. Of course, it is also not possible to show that the loss converges linearly to global minimal. %Overall, it is intuitive that this is a much harder problem than the first one.  

%We are not able to bound the change of weight between consecutive steps. Further, with the last layer using sigmoid \eqref{eq:sigmoid}, vanishing gradient may occur and showing strong decrease is more difficult. 

\noindent \textbf{Step 1:} 
At each iteration $m$, we will show that, there exists a constant $C<\infty$ such that the following holds:%{\red[$\bTheta_t$ is not defined?][in the following, where did you use the fact that the lipschitz condition holds for the iterations of the algorithm? obviously it does not hold for all $\bTheta$. Also I didn't see that $\bTheta_t$ is used anywhere. I also didn't see that you used the fact that it is $\bTheta^m$ is produced by the algorithm. ]}
\begin{equation*}
{\|\nabla f_{\rm UL}\left(\mathbf{\Theta}^{m+1}\right)-\nabla f_{\rm UL}\left(\mathbf{\Theta}^m\right)\|}_2\leq C \cdot {\|\mathbf{\Theta}^{m+1}-\mathbf{\Theta}^m\|}_2.
\end{equation*}
Denote $\mathbf{g}(\bTheta) =\frac{\partial f_{\rm UL}(\bTheta)}{\partial \tilde{\mathbf{q}}}$, by chain rule of derivative we have: %{\red[what is $J$ mean here?][norm should use $\|$ rather than $||$.]}
\begin{align}\label{eq:term}
&\left\|\nabla f_{\rm UL}\left(\mathbf{\Theta}^{m+1}\right)-\nabla f_{\rm UL}\left(\mathbf{\Theta}^{m}\right)\right\|_{2}\nonumber\\
&=\left\|\mathbf{J}_L\left(\mathbf{\Theta}^{m+1}\right)^{T} \mathbf{g}\left(\mathbf{\Theta}^{m+1}\right)-\mathbf{J}_L\left(\mathbf{\Theta}^m\right)^{T} \mathbf{g}\left(\mathbf{\Theta}^m\right)\right\|_{2}\nonumber \\
& \leq|| \mathbf{g}\left(\mathbf{\Theta}^{m+1}\right)-\mathbf{g}\left(\mathbf{\Theta}^m\right)\|_{2}\left\|\mathbf{J}_L\left(\mathbf{\Theta}^{m+1}\right)\right\|_{2}\nonumber\\
& \quad +\left\|\mathbf{J}_L\left(\mathbf{\Theta}^{m+1}\right)-\mathbf{J}_L\left(\mathbf{\Theta}^m\right)\right\|_{2}\left\|\mathbf{g}\left(\mathbf{\Theta}^m\right)\right\|_{2}.
\end{align}
In the rest of the proof, we aim to bound each term in \eqref{eq:term}.\\
\underline{(Step 1.1)} We will show  $||\mathbf{J}_L(\mathbf{\Theta}^{m+1})||_2$ is bounded. %that is only dependent on the \songedit{which only depends on the bound of the weight during training.}
%{\red[what is this constant depend on? it should not be dependent on the size of your iteration, right? but from the derivation below, it seems that it is dependent on the size of some weights. are we optimizing the weights? i mean, theta is W, right? so why can we assume that it is bounded?]}. By Lemma 3.4, we have {\red[for each relation, if it is not clear to the reader, please explain]}
{\small\begin{align*}
&\left\|\mathbf{J}_L\left(\mathbf{\Theta}^{m+1}\right)\right\|_{2} \stackrel{(i)}\leq \sum_{l=1}^{L}\left\|\frac{\partial \mathbf{f}_{L}\left(\mathbf{\Theta}^{m+1}\right)}{\partial \mathbf{w}_l}\right\|_{2}\\
& \stackrel{(ii)}= \sum_{l=1}^L \prod_{t=0}^{L-l-1}\|\mathbf{\Sigma}_L\left(\mathbf{W}_{L-t}^{T}(\bTheta^{m+1}) \otimes \mathbf{I}_{N}\right) \mathbf{\Sigma}_{L-t-1}\\
&\quad\times\left(\mathbf{I}_{n_{l}}\otimes \mathbf{F}_{l-1}(\bTheta^{m+1})\right)\|_2\\
&\stackrel{(iii)}\leq \sum_{l=1}^{L} \prod_{t=l+1}^{L}\left\|\mathbf{W}_{t}\left(\bTheta^{m+1}\right)\right\|_{2}\left\|\mathbf{F}_{l-1}\left(\bTheta^{m+1}\right)\right\|_{2}\\
&\stackrel{(iv)} \leq \sum_{l=1}^{L} \prod_{t=l+1}^{L}\left\|\mathbf{W}_{t}^{m+1}\right\|_{2}\|\mathbf{F}_{l-1}^{m+1}\|_F\\
& \stackrel{(v)} \leq \|\mathbf{H}\|_F \sum_{l=1}^{L} \prod_{t=l+1}^{L}\left\|\mathbf{W}_{t}^{m+1}\right\|_{2}  \prod_{t=1}^{l-1}\left\|\mathbf{W}^{m}_{t}\right\|_{2}\\
&= \|\mathbf{H}\|_{F} \sum_{l=1}^{L} \prod_{t=1,t \neq l}^{L}\left\|\mathbf{W}_{t}^{m+1}\right\|_{2} 
\end{align*}}

\noindent{where $(i)$ is because of Cauchy-Schwards inequality; $(ii)$ comes from Lemma \ref{lemma1}; $(iii)$ follows Assumption~\ref{ass:activation} that activation function at each layer satisfies $0<a'<1$ (note that the last layer with SCReLU also satisfies this condition); $(iv)$ is because Frobenius norm is always no less than $l_2$ norm; $(v)$ follows from Lemma \ref{lemma2}.}

Suppose all the weights are bounded during training, any $l \in [L]$, $\sigma_{\max}(\mathbf{W}_l^{m+1})$ is bounded. Then, it is easy to verify that the Jacobian matrix $\mathbf{J}_L\left(\bTheta^{m+1}\right)$ is bounded given fixed $N$ samples. Thus there must exists a constant $C_1<\infty$ such that  $\|\mathbf{J}_L(\mathbf{\Theta}^m)\|_2\leq C_1$.%{\red[where have we mentioned that all the weights are bounded? you are optimizing the weights, and it is an unconstrained problem, so it is not clear a priori the weights are bounded. this needs to be proven??][why F is also bounded?][we need to have an explicit bound for all the C's. ]}.\\

\noindent\underline{(Step 1.2)} We show there exists a constant $C_2<\infty$ such that 
$$\left\|\mathbf{J}_L\left(\bTheta^{m+1}\right)-\mathbf{J}_L\left(\bTheta^{m}\right)\right\|_{2}\leq C_{2}\left\|\bTheta^{m+1}-\bTheta^{m}\right\|_{2}.$$ 
By Lemma \ref{lemma2}, we have
\begin{equation*}
\begin{aligned}
&\left\|\mathbf{J}_L\left(\mathbf{\Theta}^{m+1}\right)-\mathbf{J}_L\left(\mathbf{\Theta}^{m}\right)\right\|_{2}\\
&\leq\sum_{l=1}^L \left\| \frac{\partial \mathbf{f}_L\left(\mathbf{\Theta}^{m+1}\right)}{\partial \mathbf{w}_l}-\frac{\partial \mathbf{f}_L\left(\mathbf{\Theta}^m\right)}{\partial \mathbf{w}_l}\right\|_2\\
&\leq \sqrt{L}\|\mathbf{H}\|_{F} U\left(1+L \beta\|\mathbf{H}\|_{F} U + \|\mathbf{H}\|_F U\right)\left\|\mathbf{\Theta}^{m+1}-\mathbf{\Theta}^{m}\right\|_{2}
\end{aligned}
\end{equation*}
where $U=\prod_{l=1}^{L} \max \left(1, \bar{\lambda}_{l}\right)$. The first inequality is because of Cauchy-Schwards inequality; the second inequality comes from Lemma \ref{lemma2}. %{\red[where the first inequality... second inequality...][what is X??]}. 
Notice that all the weights are bounded, which implies $U$ is bounded. Thus, we can find $C_2$ such that
\begin{equation*}
   \left\|\mathbf{J}_L\left(\mathbf{\Theta}^{m+1}\right)-\mathbf{J}_L\left(\mathbf{\Theta}^{m}\right)\right\|_{2}\leq C_2 \left\|\mathbf{\Theta}^{m+1}-\mathbf{\Theta}^{m}\right\|_{2}.
\end{equation*}
\underline{(Step 1.3)} Then we show $\|\mathbf{g}\left(\boldsymbol{\Theta}^{m}\right)\|_2\leq C_3$ for some constant $C_3<\infty$. Denote the sum rate of the $n$-th sample as  
$$R^{(n)}_{\bTheta}: =R\left(\mathbf{q}\left(\boldsymbol{\Theta} ;\left|\mathbf{h}^{(n)}\right|\right);\left|\mathbf{h}^{(n)}\right|\right).$$
The vectorized gradient $g(\bTheta^m)$ can be written as
{\small{
\begin{align}
\mathbf{g}(\bTheta^m) &= -\bigg(\frac{\partial R^{(1)}_{\bTheta^m}}{\partial \mathbf{q}_1^{(1)}},\cdots,\frac{\partial R^{(N)}_{\bTheta^m}}{\partial \mathbf{q}_1^{(N)}},\cdots, \frac{\partial R^{(1)}_{\bTheta^m}}{\partial \mathbf{q}_{K}^{(1)}},\cdots,\frac{\partial R^{(N)}_{\bTheta^m}}{\partial \mathbf{q}_{K}^{(N)}}\bigg)\nonumber
\end{align}}}

\noindent Note that for all $k$ and $n$, $\mathbf{q}_k\left(\boldsymbol{\Theta} ;\left|\mathbf{h}^{(n)}\right|\right)$ is bounded over $[-\alpha,1+\alpha]$ which further implies $\frac{\partial R^{(n)}_{\bTheta^m}}{{\partial \mathbf{q}_k^{(n)}}}$ is bounded and hence 
$\|\mathbf{g}\left(\boldsymbol{\Theta}^{m}\right)\|$ can be bounded by $C_3$. 
%\songedit{The equality is because each component of $\|g(\bTheta^m)\|$ is bounded.}
\\
\underline{(Step 1.4)} Finally, we show there exists a constant $C_4<\infty$ such that  $$\left\|\mathbf{g}(\bTheta^{m+1})-\mathbf{g}(\bTheta^m)\right\|_2\leq C_4\|\bTheta^{m+1}-\bTheta^m\|_2.$$ 

% For each entry of $g(\bTheta)$, denoted as $\frac{\partial R^{(n)}_{\bTheta^m}}{{\partial \mathbf{q}_k^{(n)}}}$,  Denote $g'(\bTheta)=\operatorname{vec}(\frac{\partial g(\bTheta)}{\partial \mathbf{q}})$, which can be written as{\small
% \begin{align*}
%     g'(\mathbf{\Theta})&=-\bigg(\frac{\partial^2  R_{\boldsymbol{\Theta}}^{(1)}}{(\partial \mathbf{q}_{1}^{(1)})^2}, \cdots,\frac{\partial^2  R_{\boldsymbol{\Theta}}^{(N)}}{(\partial \mathbf{q}_{1}^{(N)})^2}, \cdots,\frac{\partial^2  R_{\boldsymbol{\Theta}}^{(1)}}{(\partial \mathbf{q}_{K}^{(1)})^2}, \cdots,\frac{\partial^2  R_{\boldsymbol{\Theta}}^{(N)}}{(\partial \mathbf{q}_{K}^{(N)})^2}\bigg)
% \end{align*}}
% %Notice that $g'(\bTheta)$ denotes the continuous derivative of $g(\bTheta)$ over $\tilde{\mathbf{q}}$.
% For every feasible $\bTheta$, it is easy to show that for all $k,n$, $\frac{\partial^2 R_{\bTheta}^{(n)}}{{(\partial \mathbf{q}}_{k}^{(n)})^2}$ is bounded. Then 
\noindent By regrading $\mathbf{g}(\bTheta)$ as a function with respect to $\mathbf{q}(\bTheta;|\mathbf{H}|)$, it is easy to check $\mathbf{g}(\bTheta)$ has a Lipschiz constant w.r.t. $\mathbf{q}(\bTheta;|\mathbf{H}|)$. This implies that there exists $C_4^{'}$ such that $$\|\mathbf{g}\left(\bTheta^{m+1}\right)-\mathbf{g}\left(\bTheta^m\right)\|_2 \leq C_4^{'}\|\mathbf{q}(\bTheta^{m+1},|\mathbf{H}|);-\mathbf{q}(\bTheta^m;|\mathbf{H}|)\|_2.$$ 
By Lemma \ref{lemma2}, we know that the following holds:
\begin{align*}
   &\left\| \operatorname{vec}\left(\mathbf{q}({\bTheta}^{m+1};|\mathbf{H}|)\right)-\operatorname{vec}\left(\mathbf{q}(\bTheta^{m};|\mathbf{H}|)\right) \right\|_2\\ &\leq
    \left\| \mathbf{q}\left(\bTheta^{m+1};|\mathbf{H}|\right)-\mathbf{q}\left(\bTheta^m;|\mathbf{H}|\right)\right\|_F\\
    &\leq \sqrt{LNK}\|\mathbf{H}\|_{F} \frac{\prod_{l=1}^{L} \bar{\lambda}_{l}}{\min _{l \in[L]} \bar{\lambda}_{l}}\left\|\bTheta^{m+1}-\bTheta^m\right\|_{2}.
\end{align*}
Suppose all the weights are bounded during training, then for any $l \in [L]$, $\bar{\lambda}_l$ is bounded. Thus there exists a constant $C_4<\infty$ such that
\begin{align}
    &\left\|\mathbf{g}\left(\bTheta^{m+1}\right)-\mathbf{g}\left(\bTheta^{m}\right)\right\|_{2}\nonumber\\
    &\leq \frac{C_4^{'}\sqrt{LNK}}{4}\|\mathbf{H}\|_{F} \frac{\prod_{l=1}^{L} \bar{\lambda}_{l}}{\min_{l \in[L]} \bar{\lambda}_{l}}\left\|\bTheta^{m+1}-\bTheta^m\right\|_{2}\nonumber\\
    &=C_4\left\|\bTheta^{m+1}-\bTheta^m\right\|_{2}.
    \end{align}
Now we have shown that
\begin{align}
&\left\|\nabla f_{\text {UL }}\left(\bTheta^{m+1}\right)-\nabla f_{\text {UL }}\left(\bTheta^{m}\right)\right\|_{2}\nonumber\\
&\leq C_1C_4\left\|\bTheta^{m+1}-\bTheta^m\right\|_2+C_2C_3\left\|\bTheta^{m+1}-\bTheta^m\right\|_2\nonumber\\
&=(C_1C_4+C_2C_3)\left\|\bTheta^{m+1}-\bTheta^m\right\|_2\nonumber
\end{align}

%{\red[what does the lemma say? what is the key result of the lemma?]}

\noindent{\bf Step 2:} Now that we have shown that $f_{\rm UL}$ has Lipschitz gradient, we can apply the standard descent lemma. Set $\eta<\frac{1}{C_1C_4+C_2C_3}$, we have:
\begin{align*}
f_{\rm UL}\left(\bTheta^{m+1}\right)& \leq f_{\rm UL}\left(\bTheta^m\right)+\langle \nabla f_{\rm UL}\left(\bTheta^m\right),\bTheta^{m+1}-\bTheta^m\rangle\\
&\quad\quad  + \frac{1}{2}\left(C_1C_4+C_2C_3\right)\left\|\bTheta^{m+1}-\bTheta^m\right\|_2^2 \nonumber\\
&\leq f_{\rm UL}\left(\bTheta^m\right)-\frac{1}{2}\eta \left\|\nabla f_{\rm UL}\left(\bTheta^m\right)\right\|_2^2.\nonumber
\end{align*}
Summing up from $m=1,2,\cdots,M$ and divide it by $M$, we obtain:
\begin{align*}
    f_{\rm UL}\left(\bTheta^{M}\right)-f_{\rm UL}\left(\bTheta^{0}\right)\leq -\frac{\eta}{2M} \sum_{m=1}^{M} \left\|\nabla f_{\rm UL}\left(\bTheta^m\right)\right\|_2^2.
\end{align*}
This completes proof of the conclusion for UL training. 
\hfill $\qedsymbol$
}
%{\red[I do not see why it converges to a stationary point; how you define a stationary point? it appers that $f_{\rm unsup}$ converges to 0?]}

\newpage
\onecolumn
\section{Supplementary Material}
\subsection{Description of Stationary Points}\label{app:stationary}
Before we provide the proof of the claims, let us define the KKT conditions for SL problem~\eqref{eq:supervised}, UL problem~\eqref{eq:un-supervised}, and WSR problem~\eqref{eq:wsr}. For notation simplicity, we denote $\theta$ as any parameter in the collection of parameters.

First, let us write down the Lagrangian function of SL problem \eqref{eq:supervised}.
\begin{align}
&\; L_{\rm SL}(\mathbf{P}(\bTheta ;|\mathbf{H}|),\boldsymbol{\lambda},\boldsymbol{\mu})\nonumber\\ &=\frac{1}{2}\sum_{n=1}^{N}\sum_{k=1}^{K}\left(p_k^{(n)}(\bTheta;|\bh^{(n)}|)-\bar{p}_k^{(n)}\right)^2-\sum_{n=1}^{N}\sum_{k=1}^{K}\lambda_k^{(n)} p_k(\bTheta;|\mathbf{h}^{(n)}|)
    +\sum_{n=1}^{N}\sum_{k=1}^{K}\mu_k^{(n)}(p_k(\bTheta;|\mathbf{h}^{(n)}|)-P_{\rm max}).
\end{align}
A KKT solution $\bTheta^{*}$ should satisfy that there is a tuple $(\bTheta^{*},\bar{\boldsymbol{\lambda}},\bar{\boldsymbol{\mu}})$ such that the following relations hold:
\begin{equation}\label{KKT:SL}
\left\{\begin{array}{lr}
\; \frac{\partial L_{\rm SL}\left(\mathbf{p}\left({\bTheta^{*}};|\mathbf{H}|\right), \bar{\boldsymbol{\lambda}},\bar{\boldsymbol{\mu}}\right)}{\partial \theta} =\sum_{n=1}^{N}\sum_{k=1}^{K}\left(p_k(\bTheta^{*};|\bh^{(n)}|)-\bar{p}_k^{(n)}\right)\cdot \frac{\partial p_k(\bTheta^{*};|\bh^{(n)}|)}{\partial\theta}  -\sum_{n=1}^{n}\sum_{k=1}^{K} \bar{\lambda}_{k}^{(n)} \cdot \frac{\partial p_k(\bTheta^{*};|\bh^{(n)}|)}{\partial \theta}&\\
\quad \quad \quad  \quad \quad \quad \quad \quad \quad\quad +\sum_{n=1}^{N}\sum_{k=1}^{K} \tilde{\mu}_{k}^{(n)} \cdot \frac{\partial p_k(\bTheta^{*};|\bh^{(n)}|)}{\partial \theta}=0 & \\
\;{\bf 0}   \leq \mathbf{P}(\bTheta^{*};|\mathbf{H}|)\leq {\bf P_{\rm max}}\\
\;\bar{\lambda}_{k}^{(n)} \geq 0,\ \forall n,k\\
\;\bar{\mu}_{k}^{(n)}  \geq 0,\ \forall n,k\\
\;\bar{\lambda}_{k}^{(n)} \cdot p_k(\bTheta^{*};|\bh^{(n)}|) =0,\ \forall n,k\\
\;\bar{\mu}_{k}^{(n)} \cdot \left(p_k(\bTheta^{*};|\bh^{(n)}|)-P_{\rm max}\right)=0,\ \forall n,k \end{array}\right..
\end{equation}

For UL problem \eqref{eq:un-supervised}, we write down the Lagrangian function first:
\begin{align}\label{eq:lag:UL}
    &\; L_{\rm UL}(\mathbf{P}(\bTheta ;|\mathbf{H}|),\boldsymbol{\lambda},\boldsymbol{\mu})\nonumber\\ &=\sum_{n=1}^{N}-R(\mathbf{p}(\bTheta ;|\mathbf{h}^{(n)}|),|\mathbf{h}^{(n)}|)-\sum_{n=1}^{N}\sum_{k=1}^{K}\lambda_k^{(n)} p_k(\bTheta;|\mathbf{h}^{(n)}|)
    +\sum_{n=1}^{N}\sum_{k=1}^{K}\mu_k^{(n)}(p_k(\bTheta;|\mathbf{h}^{(n)}|)-P_{\rm max}).
\end{align}
A stationary solution $\tilde{\bTheta}$ should satisfy that there is a tuple $(\tilde{\bTheta},\tilde{\boldsymbol{\lambda}},\tilde{\boldsymbol{\mu}})$ such that the following set of relations hold: %{\color{red}[and $(u,v)\in$ ??]} {\red[change all the $^*$ for (31) to a different notation, day $\tilde$. the reason is that $\bTheta^*$ has been reserved for the global optimal solution for the supervised learning.]}):
\begin{equation}\label{KKT:UL}
\left\{\begin{array}{lr}
\; \frac{\partial L_{\rm UL}\left(\mathbf{p}\left(\tilde{\bTheta};|\mathbf{H}|\right), \tilde{\boldsymbol{\lambda}},\tilde{\boldsymbol{\mu}}\right)}{\partial \theta} =-\sum_{n=1}^{N}\sum_{k=1}^{K}\frac{R\left(\mathbf{p}\left(\tilde{\bTheta} ;|\mathbf{h}^{(n)}|\right);|\mathbf{h}^{(n)}|\right)}{\partial p_k^{(n)}}
\cdot \frac{\partial p_k(\tilde{\bTheta};|\bh^{(n)}|)}{\partial \theta}  -\sum_{n=1}^{N}\sum_{k=1}^{K} \tilde{\lambda}_{k}^{(n)} \cdot \frac{\partial p_k(\tilde{\bTheta};|\bh^{(n)}|)}{\partial \theta}&\\
\quad \quad \quad  \quad \quad \quad \quad \quad \quad\quad +\sum_{n=1}^{N}\sum_{k=1}^{K} \tilde{\mu}_{k}^{(n)} \cdot \frac{\partial p_k(\tilde{\bTheta};|\bh^{(n)}|)}{\partial \theta}=0 & \\
\;{\bf 0}   \leq \mathbf{P}(\tilde{\bTheta};\|\mathbf{H}\|)\leq {\bf P_{\rm max}}\\
\;\tilde{\lambda}_{k}^{(n)} \geq 0,\ \forall n,k\\
\;\tilde{\mu}_{k}^{(n)}  \geq 0,\ \forall n,k\\
\;\tilde{\lambda}_{k}^{(n)} \cdot p_k(\tilde{\bTheta};|\bh^{(n)}|) =0,\ \forall n,k\\
\;\tilde{\mu}_{k}^{(n)} \cdot \left(p_k(\tilde{\bTheta};|\bh^{(n)}|)-P_{\rm max}\right)=0,\ \forall n,k \end{array}\right..
\end{equation}
Similarly, we can define the KKT solution of WSR problem \eqref{eq:wsr}. Write down the Lagrangian function as:
\begin{align}\label{eq:lag:sumwsr}
    L_{\rm WSR}^{(n)}(\mathbf{p}^{(n)}, \boldsymbol{\lambda}, \boldsymbol{\mu})&= -R(\mathbf{p}^{(n)},|\mathbf{h}^{(n)}|)-\sum_{k=1}^{K} \lambda_{k}^{(n)} p_{k}^{(n)}
    +\sum_{k=1}^{K} \mu_{k}^{(n)}(p_{k}^{(n)}-P_{\rm max}).
\end{align}
A stationary solution $\bar{\bp}^{(n)}$ of \eqref{eq:wsr} for data $\bh^{(n)}$ is the point that satisfies the following conditions: there exists a tuple $(\bar{\bp}^{(n)},\bar{\boldsymbol{\lambda}}^{(n)},\bar{\boldsymbol{\mu}}^{(n)})$ such that the following holds true:
%\songedit{Furthermore, since $\bar{\bp}$ is stationary label for the WSR problem \eqref{eq:wsr},i.e, $\bar{\bp}$ is stationary solution of the problem, so there exists feasible $\boldsymbol{\lambda}^{*},\boldsymbol{\mu}^{*}$ that satisfies KKT condition:
%{\red[what are we writing here? problem (1) does not even have n, why you have sum of n in the condition below?]}
\begin{equation}\label{KKT:WSR}
\left\{\begin{array}{lr}
\;  \frac{\partial L^{(n)}_{\rm WSR}\left(\bar{\mathbf{p}}^{(n)}, \bar{\boldsymbol{\lambda}}, \bar{\boldsymbol{\mu}}\right)}{\partial p_k^{(n)}}= -\frac{ \partial R(\bar{\mathbf p}^{(n)},|\mathbf{h}^{(n)}|)}{\partial p_k^{(n)}} -  \bar{\lambda}_{k}^{(n)}+ 
\bar{\mu}_{k}^{(n)}=0\\
\;0\leq \bar{p}_k^{(n)}\leq P_{\rm max},\\
\;\bar{\lambda}_{k}^{(n)}\geq 0,\ \forall k\\
\;\bar{\mu}_{k}^{(n)}\geq 0,\ \forall k\\
\;\bar{\lambda}_{k}^{(n)} \bar{p}_k^{(n)}=0,\ \forall k\\
\;\bar{\mu}_{k}^{(n)} (\bar{p}_k^{(n)}-P_{\rm max})=0,\ \forall k
\end{array}\right.
\end{equation}

% {\red[state precise definition of the stationary solutions for the UL problem  as well as (1) before the proof.]}
\subsection{Proof of Claim \ref{claim:ssl}}
{ \begin{proof}[Proof]
The main idea of the proof is as follows. First, we characterize properties of the solution set $\mathcal{S}$ (defined in~\eqref{soluset:sl}) of SL problem~\eqref{eq:supervised}. Second, we show that each $\bTheta\in \mathcal{S}$ also belongs to the solution set $\mathcal{L}$ (defined in~\eqref{soluset:ssl}) of SSL problem~\eqref{eq:semi}. Third, we show that each $\bTheta\in \mathcal{L}$ is also contained in the solution set $\mathcal{U}$ (defined in~\eqref{soluset:ssl}) of UL problem~\eqref{eq:un-supervised}.%{\red[do we identify points that the reverse does not hold true? i.e., it is a stationary solution for the latter, but not an optimal solution for the former.][also it is not immediately clear how you show the second point.]}

%To begin with, let us denote $[K]:=\{1,2,\cdots,K\}$ and $[N]:=\{1,2,\cdots,N\}$. %Derive the conditions for which the optimal solution $\bTheta^{*}(\bar{\mathbf{p}})$ will satisfy.
First, we discuss about the solution set $\mathcal{S}$. By zero loss condition, for any $\bTheta\in\mathcal{S}$, the following holds true:
%we can equivalently consider the following problem for supervised training:
%\begin{align}\label{eq:sup:alt}
%\min_{\bTheta}\; f_{\rm sup}(\bTheta).    
%\end{align}
%the constraint can be removed \songedit{since the allocated power always stays in the feasible region.} {\red[are we saying that we are considering a different problem?]} 
% Therefore, consider solving the constrained optimization using Lagrange function:
% \begin{equation*}
%   L_{\rm SL}(\mathbf{p}(\mathbf{\Theta} ;|\mathbf{h}^{(n)}|), \boldsymbol{\lambda}, \boldsymbol{\mu})=\sum_{n=1}^N\sum_{k=1}^{K}(p_k^{(n)}(\bTheta;|\mathbf{h}^{(n)}|)-\bar{p}_k^{(n)})^2- \sum_{n=1}^{N}\sum_{k=1}^{K}\lambda_k^{(n)} p_k^{(n)}(\bTheta;|\mathbf{h}^{(n)}|)+\sum_{n=1}^{N}\sum_{k=1}^{K}\mu_k^{(n)}(p_k^{(n)}(\bTheta;|\mathbf{h}^{(n)}|)-P_{\rm max})
%  \end{equation*}
%Then the optimal solution $\bTheta^{*}(\bar{\mathbf{p}})$ for the above problem must satisfy the following condition: % For any  there is
\begin{align}\label{eq:zero:loss}
    %\frac{\partial %f_{\rm sup}(\mathbf{\Theta}^{*}(\bar{\mathbf{p}}))}{\partial \bTheta_{k,(u,v)}} & =2\sum_{n=1}^N\sum_{k=1}^{K}(p_k(\bTheta^{*}(\bar{\bp});|\mathbf{h}^{(n)}|)-\bar{p}_k)\frac{\partial p_k}{\partial \bTheta_{k,(u,v)}}=0, \; \; \forall~k\in [K], \; (u,v)\in W,\\
   p_k(\bTheta;|\mathbf{h}^{(n)}|)  =\bar{p}_k^{(n)}, \quad \forall~n \in \mathcal{N}, \; \forall~ k\in [K].
\end{align}
Since for $n \in \mathcal{N}$, $\bar{\bp}^{(n)}$ is a stationary solution for WSR problem \eqref{eq:wsr}, so for each $n \in \mathcal{N}$, there exists a tuple $(\bar{\bp}^{(n)},\bar{\boldsymbol{\lambda}}^{(n)},\bar{\boldsymbol{\mu}}^{(n)})$ such that the KKT condition in~\eqref{KKT:WSR} holds true. This together with~\eqref{eq:zero:loss} implies the following holds for all $k\in[K],n\in\mathcal{N}$:
%\songedit{Furthermore, since $\bar{\bp}$ is stationary label for the WSR problem \eqref{eq:wsr},i.e, $\bar{\bp}$ is stationary solution of the problem, so there exists feasible $\boldsymbol{\lambda}^{*},\boldsymbol{\mu}^{*}$ that satisfies KKT condition:
%{\red[what are we writing here? problem (1) does not even have n, why you have sum of n in the condition below?]}
\begin{equation}\label{SLWSR}
\left\{\begin{array}{lr}
\;  -\frac{ \partial R(\mathbf{p}(\bTheta ;|\mathbf{h}^{(n)}|),|\mathbf{h}^{(n)}|)}{\partial p_k^{(n)}} - \bar{\lambda}_{k}^{(n)}+ 
\bar{\mu}_{k}^{(n)}=0,\\
\;0\leq p_k(\bTheta ;|\mathbf{h}^{(n)}|)\leq P_{\rm max},\\
\;\bar{\lambda}_{k}^{(n)}\geq 0,\\
\;\bar{\mu}_{k}^{(n)}\geq 0,\\
\;\bar{\lambda}_{k}^{(n)} p_k(\bTheta ;|\mathbf{h}^{(n)}|)=0,\\
\;\bar{\mu}_{k}^{(n)} (p_k(\bTheta ;|\mathbf{h}^{(n)}|)-P_{\rm max})=0.
\end{array}\right.
\end{equation}

Next, we verify that $\bTheta\in\mathcal{S}$ belongs to the solution set $\mathcal{L}$. We aim to check this by showing $\bTheta$ satisfies the KKT condition for the SSL problem~\eqref{eq:semi}. %stationary solutions of \eqref{eq:un-supervised}. %{\red[again, how do you define stationary points, and what's the relation to KKT condition? This has been very vague.]} %To show this, we check the KKT condition of the UL problem. Consider the stationary points of UL loss \eqref{eq:un-supervised}. To satisfy the constraints, we form the following Lagrange function:
Towards this end, let us write down the Lagrangian for SSL problem~\eqref{eq:semi} as:
\begin{align}\label{eq:lagrange}
    \; L_{\rm SSL}(\mathbf{P}(\bTheta ;|\mathbf{H}|),\boldsymbol{\lambda},\boldsymbol{\mu})&=\sum_{n \in \mathcal{N}}-R(\mathbf{p}(\bTheta ;|\mathbf{h}^{(n)}|),|\mathbf{h}^{(n)}|)-\sum_{n \in \mathcal{N}}\sum_{k=1}^{K}\lambda_k^{(n)} p_k(\bTheta;|\mathbf{h}^{(n)}|)+\sum_{n \in \mathcal{N}}\mu_k^{(n)}(P_{\rm max}-p_k(\bTheta;|\mathbf{h}^{(n)}|))\nonumber\\
    &\quad+\sum_{m \in \mathcal{M}}\sum_{k=1}^{K}(p_k(\bTheta;|\bh^{(m)}|)-\bar{p}_k^{(m)})^2.
\end{align}
The KKT condition for the SSL problem \eqref{eq:semi} is that, there is a tuple $(\tilde{\bTheta},\tilde{\boldsymbol{\lambda}},\tilde{\boldsymbol{\mu}})$ %{\red[here in KKT condition we  cannot use $\btheta^*(\bar{p})$, ndeed a notation such as $\tilde{\bTheta}$][this is a generic condition  that has nothing to do with $\bTheta^*(\bar{p})$][Also please do not remove my comments, when I double-check, I can remove those.]} 
such that the following set of relations holds: %{\color{red}[and $(u,v)\in$ ??]} {\red[change all the $^*$ for (31) to a different notation, day $\tilde$. the reason is that $\bTheta^*$ has been reserved for the global optimal solution for the supervised learning.]}):
\begin{equation}\label{KKT:semi}
\left\{\begin{array}{l}
 \frac{\partial L_{\rm SSL}\left(\mathbf{P}\left(\tilde{\bTheta};|\mathbf{H}|\right), \tilde{\boldsymbol{\lambda}},\tilde{\boldsymbol{\mu}}\right)}{\partial \theta} =-\sum_{n \in \mathcal{N}}\sum_{k=1}^{K}\frac{R\left(\mathbf{p}\left(\tilde{\bTheta} ;|\mathbf{h}^{(n)}|\right);|\mathbf{h}^{(n)}|\right)}{\partial p_k^{(n)}} \cdot \frac{\partial p_k(\tilde{\bTheta};|\bh^{(n)}|)}{\partial \theta} -\sum_{n \in \mathcal{N}}\sum_{k=1}^{K} \tilde{\lambda}_{k}^{(n)} \cdot \frac{\partial p_k(\tilde{\bTheta};|\bh^{(n)}|)}{\partial \theta},\\
\quad +\sum_{n \in \mathcal{N}}\sum_{k=1}^{K} \tilde{\mu}_{k}^{(n)} \cdot \frac{\partial p_k(\tilde{\bTheta};|\bh^{(m)}|)}{\partial \theta}+2\sum_{m \in \mathcal{M}}\sum_{k=1}^{K}(p_k(\bTheta;|\bh^{(m)}|)-\bar{p}_k^{(m)})\cdot \frac{\partial p_k(\tilde{\bTheta};|\bh^{(m)}|)}{\partial \theta}=0,\\
\;{\bf 0}   \leq \mathbf{P}(\tilde{\bTheta};|\mathbf{H}|)\leq {\bf P_{\rm max}},\\
\;\tilde{\lambda}_{k}^{(n)} \geq 0,\ \forall k\in[K], n\in\mathcal{N},\\
\;\tilde{\mu}_{k}^{(n)}  \geq 0,\ \forall k\in[K], n\in\mathcal{N},\\
\;\tilde{\lambda}_{k}^{(n)} \cdot p_k(\tilde{\bTheta};|\bh^{(n)}|) =0,\ \forall k\in[K], n\in\mathcal{N},\\
\;\tilde{\mu}_{k}^{(n)} \cdot \left(p_k(\tilde{\bTheta};|\bh^{(n)}|)-P_{\rm max}\right)=0,\ \forall k\in[K], n\in\mathcal{N},\\
\; p_k(\tilde{\bTheta};|\mathbf{h}^{(m)}|)  =\bar{p}_k^{(m)}, \ \forall~ k\in [K], m \in \mathcal{M}, .\end{array}\right.
\end{equation}
%{\red [first write down the KKT for this? why suddenly jump to the WSR problem?]}
To show that $\bTheta\in\mathcal{S}$ (together with some multipliers) will satisfy \eqref{KKT:semi}, we will utilize the zero-loss property in~\eqref{eq:zero:loss} and stationary condition in~\eqref{SLWSR}. %Notice that for every $\bTheta^{*} \in \bTheta^{*}(\bar{\bp})$, there is $\bar{\bp}^{(n)}=\bp(\bTheta^{*},|\bh^{(n)}|)$. Since for every single snapshot, $\bar{\bp}^{(n)}$ is stationary solution of WSR problem $R\left(\bp^{(n)},|\bh|^{(n)}\right)$ defined in \eqref{eq:wsr}. 
%Then we can use the equation of KKT condition for \eqref{eq:wsr} to verify  \eqref{eq:KKT1}.

Now we argue that $\bTheta\in\mathcal{S}$ (together with  multipliers in~\eqref{SLWSR}) satisfies the KKT condition in \eqref{KKT:semi}. The second to last equation are easy to verify. To show the first inequality, we have
\begin{align}
   \frac{\partial L_{\rm SSL}\left(\mathbf{P}\left(\bTheta;|\mathbf{H}|\right), \bar{\boldsymbol{\lambda}},\bar{\boldsymbol{\mu}}\right)}{\partial \theta} &=-\sum_{n \in \mathcal{N}}\sum_{k=1}^{K}\frac{R\left(\mathbf{p}\left(\bTheta ;|\mathbf{h}^{(n)}|\right);|\mathbf{h}^{(n)}|\right)}{\partial p_k^{(n)}} \cdot \frac{\partial p_k(\bTheta;|\bh^{(n)}|)}{\partial \theta}\nonumber \\
&\quad-\sum_{n \in \mathcal{N}}\sum_{k=1}^{K} \bar{\lambda}_{k}^{(n)} \cdot \frac{\partial p_k(\bTheta;|\bh^{(n)}|)}{\partial \theta} +\sum_{n \in \mathcal{N}}\sum_{k=1}^{K} \bar{\mu}_{k}^{(n)} \cdot \frac{\partial p_k(\bTheta;|\bh^{(n)}|)}{\partial \theta}\nonumber \\
& \quad+2\sum_{m \in \mathcal{M}}\sum_{k=1}^{K}(p_k(\bTheta;|\bh^{(m)}|)-\bar{p}_k^{(m)})\cdot \frac{\partial p_k(\bTheta;|\bh^{(n)}|)}{\partial \theta}\nonumber\\
&=\sum_{n \in \mathcal{N}}\sum_{k=1}^{K}\left(-\frac{R\left(\mathbf{p}\left(\bTheta;|\mathbf{h}^{(n)}|\right);|\mathbf{h}^{(n)}|\right)}{\partial p_k^{(n)}}- \bar{\lambda}_k^{(n)}+ \bar{\mu}_k^{(n)}\right)\times \frac{\partial p_k(\bTheta;|\bh^{(m)}|)}{\partial \theta}+0\nonumber\\
%   &=\sum_{n \in \mathcal{N}}\sum_{k=1}^{K}\frac{\partial L_{\mathrm{WSR}}^{(n)}\left(\bar{\mathbf{p}}^{(n)}, \bar{\boldsymbol{\lambda}}, \bar{\boldsymbol{\mu}}\right)}{\partial p_{k}^{(n)}}\frac{\partial p_{k}\left(\bTheta^{*}(\bar{\bp}) ; |\mathbf{h}^{(n)}|\right)}{\partial \theta}\nonumber\\
   &=0.
\end{align}
The second equality comes from the zero-loss condition in~\eqref{eq:zero:loss} and the last equation follows from the stationary condition in \eqref{SLWSR}. Hence, we have $\bTheta\in\mathcal{L}$.

Now we argue the final part of the proof. We aim to show that for any $\tilde{\bTheta}\in\mathcal{L}$, we also have $\tilde{\bTheta}\in\mathcal{U}$. To show this, we need to verify that the point satisfies the KKT condition of \eqref{eq:un-supervised}, that is, there is a tuple $(\hat{\bTheta},\hat{\boldsymbol{\lambda}},\hat{\boldsymbol{\mu}})$ satisfying the following:
\begin{equation}\label{re:KKT:UL}
\left\{\begin{array}{lr}
\; \frac{\partial L_{\rm UL}\left(\mathbf{p}\left(\hat{\bTheta};|\mathbf{h}^{(n)}|\right), \hat{\boldsymbol{\lambda}},\hat{\boldsymbol{\mu}}\right)}{\partial \theta} =\sum_{n \in \mathcal{N}}\sum_{k=1}^{K}-\frac{R\left(\mathbf{p}\left(\hat{\bTheta} ;|\mathbf{h}^{(n)}|\right);|\mathbf{h}^{(n)}|\right)}{\partial p_k^{(n)}}\cdot \frac{\partial p_k(\hat{\bTheta};|\bh^{(n)}|)}{\partial \theta}  -\sum_{n \in \mathcal{N}}\sum_{k=1}^{K} \hat{\lambda}_{k}^{(n)} \cdot \frac{\partial p_k(\hat{\bTheta};|\bh^{(n)}|)}{\partial \theta}&\\
\quad \quad  \quad  \quad  \quad  \quad  +\sum_{n \in \mathcal{N}}\sum_{k=1}^{K} \hat{\mu}_{k}^{(n)} \cdot \frac{\partial p_k(\hat{\bTheta};|\bh^{(n)}|)}{\partial \theta}=0, & \\
\;{\bf 0}   \leq \mathbf{P}(\hat{\bTheta};|\mathbf{H}|)\leq {\bf P_{\rm max}},\\
\;\hat{\lambda}_{k}^{(n)} \geq 0,\ \forall k\in[K], n\in\mathcal{N},\\
\;\hat{\mu}_{k}^{(n)}  \geq 0,\ \forall k\in[K], n\in\mathcal{N},\\
\;\hat{\lambda}_{k}^{(n)} \cdot p_k(\hat{\bTheta};|\bh^{(n)}|) =0,\ \forall k\in[K], n\in\mathcal{N},\\
\;\hat{\mu}_{k}^{(n)} \cdot \left(p_k(\hat{\bTheta};|\bh^{(n)}|)-P_{\rm max}\right)=0,\ \forall k\in[K], n\in\mathcal{N}. \end{array}\right.
\end{equation}
Suppose that the tuple $(\tilde{\bTheta},\tilde{\boldsymbol{\lambda}},\tilde{\boldsymbol{\mu}})$ satisfies the KKT condition in \eqref{KKT:semi}.
By the zero-loss regularization condition for samples $m\in\mathcal{M}$, we have
\begin{equation}
   p_k(\tilde{\bTheta};|\mathbf{h}^{(m)}|)  =\bar{p}_k^{(m)}, \quad \forall~m \in \mathcal{M}, \; \forall~ k\in [K].\nonumber 
\end{equation}
Then it is easy to verify that the tuple $(\tilde{\bTheta},\tilde{\boldsymbol{\lambda}},\tilde{\boldsymbol{\mu}})$ satisfies the second to last equation in \eqref{re:KKT:UL}. Now we only need to show the first inequality.
\begin{align*}
    &\frac{\partial L_{\mathrm{UL}}\left(\mathbf{P}\left(\tilde{\bTheta} ;\left|\mathbf{H}\right|\right), \tilde{\boldsymbol{\lambda}}, \tilde{\boldsymbol{\mu}}\right)}{\partial \theta} \\
    &= \sum_{n \in \mathcal{N}}\sum_{k=1}^{K}\bigg(-\frac{ \partial R\left(\bp(\tilde{\bTheta};|\bh^{(n)}|),|\mathbf{h}^{(n)}|\right)}{\partial p_k^{(n)}} - \tilde{\lambda}_{k}^{(n)}+  \tilde{\mu}_{k}^{(n)}\bigg)\times\frac{\partial p_{k}\left(\tilde{\bTheta} ; |\mathbf{h}^{(n)}|\right)}{\partial \theta}\\
    &=\frac{\partial L_{\rm SSL}\left(\mathbf{P}\left({\tilde{\bTheta}};|\mathbf{H}|\right), \tilde{\boldsymbol{\lambda}},\tilde{\boldsymbol{\mu}}\right)}{\partial \theta}-2\sum_{m \in \mathcal{M}}\sum_{k=1}^{K}(p_k(\tilde{\bTheta};|\bh^{(m)}|)-\bar{p}_k^{(m)})\times\frac{\partial p_{k}\left(\tilde{\bTheta}; |\mathbf{h}^{(m)}|\right)}{\partial \theta}\\
   &=\frac{\partial L_{\rm SSL}\left(\mathbf{P}\left(\tilde{\bTheta};|\mathbf{H}|\right), \tilde{\boldsymbol{\lambda}},\tilde{\boldsymbol{\mu}}\right)}{\partial \theta}\\
   &=0.
\end{align*}
%{\red[where the last equation comes from ??]}
%Then it can be derived that when \eqref{eq:KKT} holds true, \eqref{eq:KKT1} also holds true. 
where the last equation comes from the zero-loss regularization condition. Hence, we have $\tilde{\bTheta}\in\mathcal{U}$.

\end{proof}}

\end{document}